\title{Around finite second-order coherence spaces} %TODO Please add
\author{Lê Thành D\~{u}ng Nguy\~{ê}n}{LIPN,
  UMR 7030 CNRS, Université Paris 13, Sorbonne Paris Cité, France \and \url{https://nguyentito.eu/}}{nltd@nguyentito.eu}{https://orcid.org/0000-0002-6900-5577}{}%mandatory, please use full name; only 1 author per
\authorrunning{L.~T.~D.~Nguy\~{ê}n}%mandatory. First: Use abbreviated first/middle names. Second (only in severe cases): Use first author plus 'et al.'
\keywords{coherence spaces, finite semantics, hypercoherences, impredicative
  polymorphism, multiplicative-additive linear logic, normal functors}
\newcommand{\enquote}[1]{``#1''}
\newcommand{\llfragment}[2][]{\ensuremath{\textsc{#2}_{#1}}\xspace}
\newcommand{\affimplies}{\rightarrowtriangle}
\newcommand{\Lang}{\mathcal{L}}
\newcommand{\STlam}{{\mathrm{ST}\lambda}}
\newcommand{\malltwo}{\llfragment[2]{mall}}
\newcommand{\mall}{\llfragment{mall}}
\newcommand{\mumall}{\ensuremath{\mu}\llfragment{mall}}
\newcommand{\elltwo}{\llfragment[2]{ell}}
\newcommand{\bigo}[1]{\mathcal{O}(#1)}
\newcommand{\naturalN}{\mathbb{N}}
\newcommand{\nf}[1]{\NF(#1)}
\newcommand{\card}[1]{\mathrm{card}(#1)}
\newcommand{\trame}[1]{\mathopen{|}#1\mathclose{|}}
\newcommand{\Str}{\mathtt{Str}}
\newcommand{\Bool}{\mathtt{Bool}}
\newcommand{\Nat}{\mathtt{Nat}}
\newcommand{\denot}[1]{\mathopen{\llbracket}#1\mathclose{\rrbracket}}
\newcommand{\trace}[2][]{\mathrm{Tr}_{#1}(#2)}
\newcommand{\Tr}{\mathrm{Tr}}
\newcommand{\El}{\mathsf{El}}
\newcommand{\nofo}[2]{\langle #1 \vdash #2 \rangle}
\newcommand{\Card}{\mathrm{Card}}
\newcommand{\NF}{\mathrm{NF}}
\newcommand{\CohI}{\mathsf{CohI}}
\newcommand{\CohL}{\mathsf{CohL}}
\newcommand{\CohA}{\mathsf{CohA}}
\newcommand{\HCohL}{\mathsf{HCohL}}
\newcommand{\HCohI}{\mathsf{HCohI}}
\newcommand{\powerfin}{\mathcal{P}_{\mathrm{fin}}}
\newcommand{\subsetfin}{\subseteq_{\mathrm{fin}}}
\newcommand{\hc}{\Gamma}
\newcommand{\hcns}{\Gamma^{**}}
\newcommand{\inl}{\mathsf{inl}}
\newcommand{\inr}{\mathsf{inr}}
\newcommand{\Const}{\mathrm{Const}}
\newcommand{\Val}{\mathrm{Val}}
\newcommand{\PVal}{\mathrm{PVal}}
\newcommand{\Proj}{\mathrm{Proj}}
\theoremstyle{definition}
\newtheorem{notation}[theorem]{Notation}
\begin{document}

\maketitle

\begin{abstract}
  Many applications of denotational semantics, such as higher-order model
  checking or the complexity of normalization, rely on finite semantics for
  monomorphic type systems. We exhibit such a finite semantics for a polymorphic
  purely linear language: more precisely, we show that in Girard's semantics of
  second-order linear logic using coherence spaces and normal functors, the
  denotations of multiplicative-additive formulas are finite.

  This model is also effective, in the sense that the denotations of formulas
  and proofs are computable, as we show. We also establish analogous results for
  a second-order extension of Ehrhard's hypercoherences; while finiteness holds
  for the same reason as in coherence spaces, effectivity presents additional
  difficulties.

  Finally, we discuss the applications our our work to implicit computational
  complexity in linear (or affine) logic. In view of these applications, we
  study cardinality and complexity bounds in our finite semantics.
\end{abstract}

\section{Introduction}

Polymorphism is a central topic in theoretical computer science since the sixties. 
% A breakthrough in the logical understanding of such a central concept was
A breakthrough in its logical understanding was its analysis by means of second
order quantifiers, that is the introduction of System F (also known as the
polymorphic $\lambda$-calculus) at the beginning of the seventies. This
considerable success later led Jean-Yves Girard to develop a denotational
semantics for System F \cite{girardF}, to get a deeper understanding of its
computational features. Indeed, the general goal of denotational semantics is to
give a ``mathematical'' counterpart to syntactic devices such as proofs and
programs, thus bringing to the fore their essential properties. Sometimes this
eventually results in improvements of the syntax: Linear Logic itself
\cite{girardLL} arose precisely from the denotational model introduced
in~\cite{girardF}.
% It maps the concrete 
% syntactic objects to an algebraic, geometric, categorical$\ldots$description, 
% which stresses basic invariants and, sometimes, eventually results in 
% improvements of the syntax: Linear Logic itself \cite{girardLL} arose 
% precisely from the denotational model introduced in~\cite{girardF}.

% J'aime bien la phrase mais c'est pour faire de l'optimisation
% du format deux colonnes.

But denotational semantics is not just a matter of increasing our understanding 
of programming languages, it also has direct algorithmic applications. Let us
mention:
\begin{itemize}
\item in the simply-typed lambda calculus ($\STlam$), the \emph{semantic
    evaluation} technique for complexity bounds, see Terui's recent
  paper~\cite{TeruiSemantics} and references therein;
\item in $\STlam$ extended with a fixed-point combinator, the semantic approach
  to \emph{higher-order model checking} (HOMC) advocated by Salvati and
  Walukiewicz~\cite{Salvati1,Salvati2} (see also~\cite{Aehlig,HoffmanLedent}).
\end{itemize}

The following little-known theorem illustrates both kinds of applications.
Indeed, it is a result in \emph{implicit computational complexity}: it gives a
machine-free characterization of a complexity class. At the same time, it is an
instance of the correspondence between Church encodings and automata that HOMC
generalizes to infinite trees.

\begin{theorem}[Hillebrand and Kanellakis~\cite{HillebrandKanellakis}]
  \label{thm:hk}
  The languages decided by $\STlam$ terms from Church-encoded binary strings to
  Church booleans, i.e.\ of type $\Str_\STlam[A] \to \Bool_\STlam$, are
  exactly the \emph{regular} languages.
\end{theorem}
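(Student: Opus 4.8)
The plan is to prove the two inclusions separately.

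\textbf{Soundness} (every language decided by such a term is regular). First I would invoke a finite sound denotational model of $\STlam$; the simplest choice is the full finite set-theoretic model, interpreting the base type by a fixed finite set with at least two elements and arrow types by full function spaces. This model validates $\beta\eta$-equality, and — using strong normalization of $\STlam$, whose only closed normal inhabitants of type $\Bool_{\STlam}$ are $\lambda xy.x$ and $\lambda xy.y$, with distinct denotations — a closed term $s : \Bool_{\STlam}$ satisfies $s =_\beta \mathrm{true}$ exactly when $\denot{s} = \denot{\lambda xy.x}$. Given $t : \Str_{\STlam}[A] \to \Bool_{\STlam}$, I would then read off a DFA whose state set is the finite set $\denot{\Str_{\STlam}[A]}$: writing $\overline{w}$ for the Church encoding of $w$, its start state is $\denot{\overline{\varepsilon}}$, each letter $a \in \{0,1\}$ acts by the map $\delta_a : s \mapsto \bigl(\lambda(f_0,f_1,x).\, s(f_0,f_1,f_a(x))\bigr)$ — well defined because $\overline{wa}\,f_0\,f_1\,x =_\beta \overline{w}\,f_0\,f_1\,(f_a\,x)$, so appending a letter acts on $\denot{\overline{w}}$ by a fixed operation — and its accepting states are those $s$ with $\denot{t}(s) = \denot{\lambda xy.x}$. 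This DFA is in state $\denot{\overline{w}}$ after reading $w$, hence it accepts precisely the words with $t\,\overline{w} =_\beta \mathrm{true}$, i.e.\ the language decided by $t$, which is therefore regular.

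\textbf{Completeness} (every regular language arises this way). Given a regular language $L$, I would fix a DFA $(Q,\delta,q_0,F)$ for the reversal $L^{\mathrm{R}}$ (regular languages being closed under reversal) and instantiate the parameter type $A$ of the Church string by a simple type capable of representing the $|Q|$ states. Writing $\mathtt{z} : A$ for the term representing $q_0$, $\mathtt{f}_0, \mathtt{f}_1 : A \to A$ for those representing $q \mapsto \delta(q,a)$, and $\mathtt{final} : A \to \Bool_{\STlam}$ for the one representing $\chi_F$, the term $t := \lambda s.\, \mathtt{final}\,(s\,\mathtt{f}_0\,\mathtt{f}_1\,\mathtt{z})$ has type $\Str_{\STlam}[A] \to \Bool_{\STlam}$ and reduces, on input $\overline{w}$ with $w = a_1 \cdots a_m$, to $\mathtt{final}$ applied to $\mathtt{f}_{a_1}(\cdots(\mathtt{f}_{a_m}\,\mathtt{z}))$, i.e.\ to the state reached by running the DFA on $w^{\mathrm{R}}$; hence $t\,\overline{w} =_\beta \mathrm{true}$ iff $w^{\mathrm{R}} \in L^{\mathrm{R}}$, i.e.\ iff $w \in L$.

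\textbf{Main obstacle.} Soundness is essentially free once a finite sound model is at hand — which is exactly the kind of ingredient this paper constructs, here in the considerably harder polymorphic setting. The real work is the representability fact used for completeness: that $\STlam$, at a sufficiently high type order, can represent any finite set by a simple type with at least that many pairwise non-convertible closed inhabitants, \emph{and} $\lambda$-define every function between two such finite sets (in particular $\delta(\cdot,a)$ and $\chi_F$). This is classical but somewhat fiddly; the standard route encodes a set of size $n$ either as $n$-tuples of Church booleans or as the projections $\lambda x_1 \cdots x_n.\, x_i$, and realises functions on them by iterated case distinctions.
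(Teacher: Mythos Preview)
Your proposal is correct and follows essentially the approach the paper sketches (and carries out in detail for the analogous \elltwo result in \cref{sec:regular}): soundness by semantic evaluation in a finite model of $\STlam$ --- the paper explicitly names the category of finite sets as the relevant model --- with the DFA states given by the finite denotation of $\Str_\STlam[A]$ and transitions by the semantic ``snoc'' map; completeness by directly encoding a DFA, instantiating the type parameter $A$ by a type representing the state set.

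One remark on emphasis: you flag the representability of finite sets and functions in $\STlam$ as the ``main obstacle'', but from the paper's standpoint this is the routine direction (cf.\ the one-line completeness sketch in the appendix, using $1 \oplus \ldots \oplus 1$ in the linear setting). The paper's whole point is that the \emph{soundness} direction --- the existence of a finite, non-degenerate model --- is what carries the content, and becomes genuinely hard once one moves from $\STlam$ to a polymorphic system. Your reversal trick for completeness is a correct way to handle the mismatch between Church-string composition order and DFA reading order; the paper's proofs sidestep this by working with $\mathtt{snoc}$ on the semantic side.
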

(Here $\Str_\STlam[A] = (A \to A) \to (A \to A) \to (A \to A)$ and
$\Bool_\STlam = o \to o \to o$; $o$ is a base type, while $A$ may be chosen
depending on the language one wants to decide.)

To prove this, the main idea is to build a deterministic finite automaton (DFA)
computing the denotation of the input string. Crucially, this relies on the
existence of a \emph{finite semantics} for $\STlam$ -- such as the category of
finite sets -- which will provide the states of the DFA. In general, this
finiteness property, or finer cardinality bounds, are key to these applications.

This theorem also holds when replacing $\STlam$ by propositional linear logic,
which also admits finite semantics. In fact, Terui's solution to the complexity
of $\STlam$ normalization at fixed order~\cite{TeruiSemantics} relies on such a
semantics. As for HOMC, Grellois and Melliès have developed an approach relying
on models of linear logic~\cite{GrelloisMellies,grellois}. 

However, all of this concerns only \emph{monomorphic} type systems, for a simple
reason: as we shall soon see, \emph{System F does not admit any non-trivial
  finite semantics}. A central message of this paper is that second-order
quantification is not the only culprit here: one can also blame
\emph{non-linearity}, i.e.\ the possibility of duplicating data. What we show is
that a semantics for a purely linear language with impredicative polymorphism
can be finite:
\begin{theorem}
  \label{thm:existence-finite}
  Second-order Multiplicative-Additive Linear Logic (\malltwo) admits a
  non-trivial (i.e.\ distinguishing the two inhabitants of $1 \oplus 1$) finite
  semantics.
\end{theorem}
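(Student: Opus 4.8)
The plan is to exhibit the required model as the restriction to finite coherence spaces of Girard's coherence-space semantics of second-order linear logic, which uses normal functors to interpret the quantifiers. Recall the shape of that semantics: coherence spaces and linear maps form a $*$-autonomous category with finite products and coproducts, hence a model of propositional \mall; a formula $A$ of \malltwo with free type variables among $X_1,\dots,X_n$ is interpreted as a \emph{normal functor} in $n$ variables, the quantifier $\forall X$ being interpreted through Girard's \enquote{trace} construction — roughly, the web of $\denot{\forall X.B[X]}$ is obtained by collecting, up to isomorphism, the pairs $(x,a)$ where $x$ is a finite clique of some coherence space and $a$ is a point of $\denot{B}$ evaluated at $x$ admitting $x$ as its minimal support. (The mixed variance of $X$ is dealt with by the standard device of interpreting type variables in a category of coherence spaces and rigid embeddings; $\exists X$ is then obtained by duality, which preserves finiteness since $|A^{\perp}| = |A|$.) That this interprets \malltwo proofs soundly is inherited from Girard's model; the only thing left to prove is \emph{finiteness}.

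The core of the argument is an induction on $A$ showing that the functor $\denot{A}$ sends finite coherence spaces to finite coherence spaces. The propositional cases are immediate, since $|A \otimes B| = |A| \times |B|$ holds for both multiplicatives, $|A \oplus B| = |A| + |B|$ for both additives, and the units have webs with at most one element. The delicate case is $\forall X.B[X]$: a priori the trace ranges over \emph{all} finite coherence spaces, and nothing prevents it from being infinite — indeed it is infinite for a type such as that of the Church numerals, which is precisely where the exponential modality would destroy finiteness, by letting one build unboundedly many points through duplication. The key observation is that, \malltwo having no exponentials, $\denot{B}$ is \enquote{polynomial} of bounded degree: the number of \enquote{copies} of $X$ that a single point of $\denot{B[C]}$ can consume is bounded by the number of occurrences of $X$ in $B$. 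Hence the cliques $x$ appearing in the trace have cardinality bounded by a constant depending only on $B$, so up to isomorphism — which is exactly what the trace quotients by — only finitely many of them occur; combined with the inductive hypothesis applied to $\denot{B}$ on these finitely many small inputs, this bounds $|\denot{\forall X.B}|$. (Concretely: $\denot{\forall X.B[X]}$ embeds into $\denot{B[E]}$ for a single fixed finite coherence space $E$ depending only on $B$, which is finite by the inductive hypothesis.)

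Granting this, the full subcategory of finite coherence spaces is closed under all the connectives, including $\forall$ and $\exists$, hence forms a model of \malltwo, and it only remains to check non-triviality. The type $1 \oplus 1$ has exactly two closed cut-free proofs, whose denotations are the singleton cliques $\{\inl\,\ast\}$ and $\{\inr\,\ast\}$; these are distinct points of the two-element web of $1 \oplus 1$, hence distinct cliques, so the model separates them, as required.

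I expect the main obstacle to be making the \enquote{bounded degree} statement precise and carrying it through the induction. The subtlety is that a \malltwo formula may contain several occurrences of the same variable (as in $\forall X.\, X \otimes X$), so its functor is not literally affine or multilinear; what one must isolate is the weaker invariant that the total number of occurrences of each free variable yields a uniform bound on the size of the cliques appearing in the traces of the associated functor, together with the fact that this invariant behaves well under every connective — most delicately under a nested quantifier, where the degree of $\denot{\forall Y.B[X,Y]}$ in an outer variable $X$ has to be read off from the traces of the inner functor $\denot{B}$. The attendant bookkeeping — keeping track of positive and negative occurrences of $X$ coherently inside the trace construction — is routine but needs some care.
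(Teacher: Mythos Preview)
Your proposal is correct and follows essentially the same route as the paper: Girard's coherence-space model with normal functors, the trace for $\forall$, and the key finiteness argument via a bounded-degree invariant that MALL$_2$ connectives preserve while exponentials would not. The only cosmetic differences are that the paper defines the degree semantically (as the sup of the web sizes occurring in normal forms, \cref{def:finite-degree}) rather than via a syntactic occurrence count, and packages the argument as closure of the class of \emph{finite normal functors} under the connectives (\cref{prop:degree-mallzero,prop:degree-forall,thm:closure-malltwo}) rather than as an induction on formulas; note also that in your description of the trace the bound datum is a finite \emph{coherence space}, not a clique.
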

Recall that \malltwo is the fragment of second-order linear logic without the
\emph{exponential modalities} $\oc/\wn$ whose role is to allow\footnote{The
  possibility of duplication is expressed through the contraction rule $\oc A
  \multimap \oc A \otimes \oc A$.} a controlled amount of non-linearity.
We shall also prove the analogous property for the second-order \emph{affine}
$\lambda$-calculus.

\subsection{Some immediate consequences of finite semantics}

To illustrate the power of the above theorem, we find it instructive to explain
first the impossibility of finite semantics for System F. It is a consequence of
its ability to represent infinite data types, with definable destructors.
\begin{proposition}
  Let $\Nat_F = \forall X.\, (X \to X) \to (X \to X)$ be the type of System F
  natural numbers. Then any semantics distinguishing the two inhabitants of the
  type of booleans $\Bool_F = \forall X.\, X \to X \to X$ is injective on $\Nat$
  -- implying the latter has an infinite denotation.
\end{proposition}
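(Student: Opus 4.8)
The plan is to exploit the standard encoding of Church numerals and the definability of primitive recursion (or at least of a suitable destructor) in System~F, together with compositionality of any denotational semantics. Fix a semantics $\denot{-}$ that distinguishes the two inhabitants $\denot{\mathtt{true}} \ne \denot{\mathtt{false}}$ of $\Bool_F$. I want to show that if $\denot{\underline{m}} = \denot{\underline{n}}$ for the Church numerals $\underline{m}, \underline{n} : \Nat_F$ with $m \ne n$, then I can derive $\denot{\mathtt{true}} = \denot{\mathtt{false}}$, a contradiction; hence $\denot{-}$ is injective on closed terms of type $\Nat_F$, and since there are infinitely many such numerals up to $\beta\eta$-equality, the denotation of $\Nat_F$ is infinite.

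The key step is to build, for each pair $m \ne n$, a System~F term $t_{m,n} : \Nat_F \to \Bool_F$ such that $t_{m,n}\,\underline{k}$ reduces to $\mathtt{true}$ when $k = m$ and to $\mathtt{false}$ when $k = n$ (it suffices to handle, say, the case where one of them is the value we test against). Concretely, one first defines an equality-test or a bounded predecessor/comparison function: System~F represents all primitive recursive functions, so there is a term $\mathrm{eq} : \Nat_F \to \Nat_F \to \Bool_F$ with $\mathrm{eq}\,\underline{m}\,\underline{n}$ reducing to $\mathtt{true}$ iff $m = n$. Then $t_{m,n} := \lambda x.\, \mathrm{eq}\,x\,\underline{m}$ does the job: $t_{m,n}\,\underline{m} \twoheadrightarrow \mathtt{true}$ and $t_{m,n}\,\underline{n} \twoheadrightarrow \mathtt{false}$. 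By soundness of the semantics for $\beta$-reduction, $\denot{t_{m,n}} \circ \denot{\underline{m}} = \denot{\mathtt{true}}$ and $\denot{t_{m,n}} \circ \denot{\underline{n}} = \denot{\mathtt{false}}$; so $\denot{\underline{m}} = \denot{\underline{n}}$ forces $\denot{\mathtt{true}} = \denot{\mathtt{false}}$, contradicting non-triviality. Therefore $m \mapsto \denot{\underline{m}}$ is injective, and $\denot{\Nat_F}$ contains infinitely many distinct points.

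The main obstacle — really the only subtle point — is exhibiting the destructor term, i.e.\ checking that enough arithmetic (a zero-test together with predecessor, or directly an equality test) is actually definable on the \emph{impredicative} Church encoding $\Nat_F = \forall X.\,(X\to X)\to(X\to X)$. Predecessor on Church numerals in System~F is the classic non-obvious construction (via pairing, iterating the shift $\langle a,b\rangle \mapsto \langle b, b+1\rangle$), and it is genuinely System~F rather than simply-typed $\lambda$-calculus that one needs, since one must instantiate $X$ with a product type; but all of this is completely standard and can be cited. Once predecessor and zero-test are in hand, equality is obtained by the usual primitive-recursive bootstrapping (subtraction truncated at $0$, then testing both truncated differences for zero), all of which lives inside System~F. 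A minor point to state carefully is that one only needs the destructor to separate a \emph{pair} of numerals at a time, which is strictly weaker and avoids any delicate uniformity-in-$m$ concern. Everything else — compositionality of $\denot{-}$, invariance under $\beta$ (or $\beta\eta$), and the fact that distinct numerals are not $\beta\eta$-equal — is part of the basic apparatus of denotational semantics and needs no new argument.
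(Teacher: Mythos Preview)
Your argument is correct and essentially identical to the paper's: both define, for each $n$, a System~F term $\mathtt{eq}_n : \Nat_F \to \Bool_F$ testing equality with $n$, and observe that if $\denot{\underline m} = \denot{\underline n}$ then $\denot{\mathtt{true}} = \denot{\mathtt{eq}_n(\underline n)} = \denot{\mathtt{eq}_n(\underline m)} = \denot{\mathtt{false}}$. The paper simply takes the definability of $\mathtt{eq}_n$ for granted where you spell out the standard predecessor-and-subtraction construction.
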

\begin{proof}
  For any $n \in \naturalN$, one can define the predicate $\mathtt{eq}_n :
  \Nat_F \to \Bool_F$ which tests if its argument is equal to\footnote{We
    identify natural numbers with their Church encoding.} $n$. Thus, if $m$ and
  $n$ share the same denotation, then $\mathtt{eq}_n(n) =_\beta \mathtt{true}$
  and $\mathtt{eq}_n(m)$ have the same denotation. Since we have assumed that
  $\mathtt{true}$ and $\mathtt{false}$ have different denotations, this means
  that $\mathtt{eq}_n(m) =_\beta \mathtt{true}$, i.e.\ $m = n$.
\end{proof}

The argument above is robust enough to apply to a wide variety of situations.
For instance, the existence of a finite semantics for $\STlam$ immediately
yields by contrapositive:
\begin{theorem}[Statman]
  \label{thm:statman}
  Equality is not definable on the Church integers in $\STlam$.
\end{theorem}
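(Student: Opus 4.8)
The plan is to derive Statman's theorem as the contrapositive of the reasoning behind the Proposition above: there, definable destructors for an infinite type force any non-trivial semantics to be injective on the integers, hence infinite; conversely, the existence of a non-trivial \emph{finite} semantics of $\STlam$ forbids defining equality. The black box I would invoke is precisely the finite model mentioned in the introduction: the standard set-theoretic semantics $\denot{-}$ of $\STlam$ in the category of finite sets, with the base type $o$ interpreted by a two-element set. Thus each type $A$ gets a finite denotation $\denot{A}$; each closed term $t : A$ a point $\denot{t} \in \denot{A}$; application is interpreted by function application; $\beta$-conversion is sound, $t =_\beta u$ implying $\denot{t} = \denot{u}$; and $\mathtt{true} = \lambda x\,y.\,x$, $\mathtt{false} = \lambda x\,y.\,y$ of type $\Bool_\STlam = o \to o \to o$ denote the two distinct projections, so the model is non-trivial.

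First I would make the statement precise: there is no type $A$ and no closed $\STlam$ term $\mathtt{eq}$ of type $\Nat_\STlam[A] \to \Nat_\STlam[A] \to \Bool_\STlam$ such that $\mathtt{eq}\;m\;n =_\beta \mathtt{true}$ whenever $m = n$ and $\mathtt{eq}\;m\;n =_\beta \mathtt{false}$ otherwise, where $m$ and $n$ range over the Church numerals of type $\Nat_\STlam[A] = (A \to A) \to (A \to A)$. Assume such an $\mathtt{eq}$ exists for some $A$. Since $\denot{\Nat_\STlam[A]}$ is finite while there are infinitely many Church numerals, the pigeonhole principle yields $m \neq n$ with $\denot{m} = \denot{n}$. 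Then compositionality of application gives $\denot{\mathtt{eq}\;m\;m} = \denot{\mathtt{eq}\;m\;n}$, while soundness together with the assumed behaviour of $\mathtt{eq}$ gives $\denot{\mathtt{eq}\;m\;m} = \denot{\mathtt{true}}$ and $\denot{\mathtt{eq}\;m\;n} = \denot{\mathtt{false}}$. Hence $\denot{\mathtt{true}} = \denot{\mathtt{false}}$, contradicting non-triviality.

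There is no genuinely hard step here once the model is in hand; the points to be careful about — which is all I would count as the \enquote{main obstacle} — are the bookkeeping around the black box. One must check that the chosen finite semantics really does separate $\mathtt{true}$ from $\mathtt{false}$ (interpreting $o$ by a singleton would collapse $\Bool_\STlam$ and the argument would say nothing), and one should confirm the target statement is non-vacuous, i.e.\ that for the fixed $A$ — say $A = o$ — the Church numerals of type $\Nat_\STlam[A]$ are pairwise distinct up to $=_\beta$, so that \enquote{deciding equality} is a meaningful task rather than a trivial one. As the Proposition's proof already shows, the pattern is robust: it transfers unchanged to any language equipped with a non-trivial finite semantics.
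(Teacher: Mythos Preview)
Your proposal is correct and follows exactly the approach the paper indicates: the paper simply says the theorem follows \enquote{by contrapositive} from the preceding argument together with the existence of a finite semantics for $\STlam$, and you have spelled out that contrapositive carefully (pigeonhole on $\denot{\Nat_\STlam[A]}$, then $\denot{\mathtt{true}} = \denot{\mathtt{false}}$). The only cosmetic difference is that the paper phrases things via the unary predicates $\mathtt{eq}_n$ rather than a binary $\mathtt{eq}$, but these are interderivable and the argument is the same.
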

% TODO find reference

Another such situation is the system \mumall: propositional \mall with fixed
points~\cite{Baelde}. Some functions analogous to $\mathtt{eq}_n$ can also be
defined in \mumall, using $\mu X.\, 1 \oplus X$ as the type of natural numbers.
This leads to a first application of our \cref{thm:existence-finite}:
\begin{theorem}\label{thm:mumall}
  There exists no faithful translation from \mumall to \malltwo.
\end{theorem}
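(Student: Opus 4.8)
The plan is to turn a hypothetical faithful translation into a finite, non-trivial interpretation of \mumall, which cannot exist. Suppose, for contradiction, that there is a faithful translation $(-)^\circ$ from \mumall to \malltwo, and let $\denot{-}$ be the finite semantics of \malltwo supplied by \cref{thm:existence-finite}. Composing the two gives an interpretation $\doubleabs{-} := \denot{(-)^\circ}$ of \mumall: because $(-)^\circ$ is a translation — in particular it commutes with cut and respects proof equivalence $\equiv$ — and $\denot{-}$ is a denotational model of \malltwo, the composite $\doubleabs{-}$ sends each proof of $\vdash \Gamma$ to an element of the set $\doubleabs{\Gamma} := \denot{\Gamma^\circ}$, interprets cut by composition, and identifies $\equiv$-equivalent proofs. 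Write $N = \mu X.\, 1 \oplus X$ for the \mumall type of unary integers, with numerals $\overline{0}, \overline{1}, \dots$ (closed proofs of $\vdash N$), and $\mathtt{true}, \mathtt{false}$ for the two closed proofs of $\vdash 1 \oplus 1$. Two properties of $\doubleabs{-}$ are in tension: since $\denot{-}$ is finite and $N^\circ$ is an \malltwo formula, $\doubleabs{N} = \denot{N^\circ}$ is finite, so $\doubleabs{-}$ takes only finitely many values on proofs of $\vdash N$; yet $\doubleabs{\mathtt{true}} \neq \doubleabs{\mathtt{false}}$. The latter is the one point at which faithfulness of $(-)^\circ$ is needed; it holds, for instance, as soon as ``translation'' is taken to act by the identity on the common \mall fragment, for then $\mathtt{true}^\circ = \mathtt{true}$, $\mathtt{false}^\circ = \mathtt{false}$ and the inequality is exactly the non-triviality granted by \cref{thm:existence-finite}.

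The second ingredient is the \mumall counterpart of the functions $\mathtt{eq}_n$ used above for System F, whose definability is already announced in the discussion preceding the statement; I would state and prove it as a lemma. For each $n \in \naturalN$ there is a proof $\mathtt{eq}_n$ of $\vdash N^\perp \parr (1 \oplus 1)$ such that, writing $\mathtt{eq}_n(\pi)$ for the cut of $\mathtt{eq}_n$ against a closed proof $\pi$ of $\vdash N$, one has $\mathtt{eq}_n(\overline{n}) \equiv \mathtt{true}$ and $\mathtt{eq}_n(\overline{m}) \equiv \mathtt{false}$ for every $m \neq n$. One construction: apply the \mumall iterator $\mathsf{fold}$ to a bounded ``counter'' with $n{+}2$ states — a finite type built from $1$ and $\oplus$, equipped with an \mall increment map that saturates at $n{+}1$ — and post-compose with an \mall test of whether the counter has reached exactly $n$; this uses a single instance of the $\mu$ rules, the remainder being pure \mall, so $\mathtt{eq}_n$ is \mumall-definable. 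A corollary, used below, is that the numerals are pairwise non-$\equiv$-equivalent in \mumall: if $\overline{m} \equiv \overline{n}$ with $m \neq n$, then, $\equiv$ being a congruence, $\mathtt{true} \equiv \mathtt{eq}_n(\overline{n}) \equiv \mathtt{eq}_n(\overline{m}) \equiv \mathtt{false}$, which is absurd.

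Finally I would run the diagonal argument through $\doubleabs{-}$. The numerals $\overline{0}, \overline{1}, \dots$ are infinitely many, but $\doubleabs{-}$ takes only finitely many values on proofs of $\vdash N$, so by the pigeonhole principle $\doubleabs{\overline{i}} = \doubleabs{\overline{j}}$ for some $i \neq j$. Then, using that $\doubleabs{-}$ interprets cut by composition and identifies $\equiv$-equivalent proofs,
\[
  \doubleabs{\mathtt{false}} = \doubleabs{\mathtt{eq}_j(\overline{i})} = \doubleabs{\mathtt{eq}_j} \circ \doubleabs{\overline{i}} = \doubleabs{\mathtt{eq}_j} \circ \doubleabs{\overline{j}} = \doubleabs{\mathtt{eq}_j(\overline{j})} = \doubleabs{\mathtt{true}}
\]
— the first equality because $\mathtt{eq}_j(\overline{i}) \equiv \mathtt{false}$ (as $i \neq j$), the last because $\mathtt{eq}_j(\overline{j}) \equiv \mathtt{true}$ — contradicting $\doubleabs{\mathtt{true}} \neq \doubleabs{\mathtt{false}}$ from the first paragraph.

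The step I expect to be the real obstacle is none of the calculations above but the \emph{choice of definition} of ``faithful translation'': it must be liberal enough that the non-existence statement has content, yet it has to entail $\doubleabs{\mathtt{true}} \neq \doubleabs{\mathtt{false}}$ — equivalently, that composing a faithful translation with a non-trivial model of \malltwo leaves a non-trivial interpretation of \mumall. The simplest reading that makes this work is to require a translation to fix the common \mall fragment pointwise, so that the two Boolean proofs, and hence the inequality, are preserved verbatim; one can then add to ``faithful'' whatever further injectivity (e.g.\ on $\equiv$-classes of proofs) is desired. A minor, purely technical point is to carry out carefully the \mumall-definability of the $\mathtt{eq}_n$'s sketched in the second paragraph.
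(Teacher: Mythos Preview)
Your proposal is correct and follows essentially the same approach as the paper: compose a hypothetical faithful translation with the finite semantics of \malltwo from \cref{thm:existence-finite}, then invoke the $\mathtt{eq}_n$-style argument (already announced in the text just before the theorem) to contradict non-triviality. The paper's own proof is a two-line sketch that defers entirely to ``the argument above'' (the System~F proposition), whereas you have unfolded that argument in detail --- including a concrete construction of $\mathtt{eq}_n$ in \mumall and an explicit discussion of what ``faithful translation'' must mean for the proof to go through --- but the skeleton is the same.
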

\begin{proof}
  Else, \mumall would admit a non-trivial finite semantics, by translating it
  into \malltwo and using the semantics of \malltwo. By the argument above,
  this leads to a contradiction.
\end{proof}
While \mumall can be translated in full second-order linear logic, it was argued
that both polymorphism and exponentials were required for such a
translation~\cite[\S 2.3]{Baelde}. Our finite semantics provides a short
rigourous proof of the necessity of exponentials, and pinpoints the main reason:
\emph{\malltwo cannot represent infinite data types}.

This divide between \malltwo and \mumall is further explored in this paper in
the context of implicit complexity; this is discussed in \cref{sec:intro-reg}.

\begin{remark}
  Although $\STlam$ can represent functions on infinite data types such as
  integers or strings, this generally involves a meta-level universal
  quantification, see e.g.\ Hillebrand and Kanellakis's theorem. In \malltwo,
  which already contains quantifiers, this way of sidestepping the issue does
  not work.
\end{remark}

\subsection{Concrete models: coherence spaces and hypercoherences}

It turns out that a finite semantics of \malltwo has been lying around all along
since the birth of linear logic, though we are not aware of anyone noticing this
fact beforehand. It is none other than Girard's \emph{coherence
  spaces}~\cite{girardLL}, which he obtained as a simplification of his previous
work on \enquote{qualitative domains} in System F~\cite{girardF}, discovering
linear logic along the way.

In this model, open types (i.e.\ formulas with free variables) are represented
as \emph{normal functors}\footnote{This categorical tool had been used
  previously to give the first quantitative semantics of the
  $\lambda$-calculus~\cite{GirardNF} -- a work which is arguably one of the main
  inspirations for linear logic (linear $\lambda$-terms are interpreted in this
  model as monomials of degree 1, hence \enquote{linear}), and even differential
  linear logic~\cite{dill}.}. To prove our finiteness theorem, we introduce a
notion of normal functor of \emph{finite degree}, which is preserved by \malltwo
connectives and ensures finiteness. It is also equivalent to an asymptotic
\emph{polynomial growth} property.

Furthermore, this semantics is more concrete than its formulation using
category-theoretic machinery could suggest. Thanks to a combinatorial
presentation, we prove that it is an \emph{effective} model: the denotations of
types and terms are computable. Note that historically, effectivity was a major
motivation for using coherence spaces instead of qualitative domains
(see~\cite[Appendix~C]{girardF}, especially Remark~C.3).

We also study the interpretation of \malltwo in Ehrhard's
\emph{hypercoherences}~\cite{hypercoh}. Although it was defined as a model of
propositional linear logic, transposing the recipe of coherence spaces gives a
model of second-order linear logic, with finite denotations for \malltwo
formulas and proofs. In this semantics, effectivity stumbles upon the same issue
as in qualitative domains: roughly speaking, the presence of \enquote{$n$-ary
  coherences} for arbitrary $n \geq 2$ (while in coherence spaces, $n = 2$). We
show that despite this, the hypercoherence model can be made effective for
\malltwo, using a notion of \enquote{specification by projections}.

\subsection{Relevance to implicit computational complexity}
\label{sec:intro-reg}

The semantic developments we present here have already been applied to obtain
some results on variants of Elementary Linear Logic (ELL)~\cite{DanosJoinet}. In
this subsystem of linear logic, purely \enquote{geometric} restrictions inspired
by the theory of proof nets enforce complexity constraints, following an
approach pioneered by the characterization of polynomial time in Light Linear
Logic (LLL)~\cite{GirardELL}. Thanks to our finite semantics, we can apply to
second-order ELL some ideas from another tradition in implicit complexity,
exemplified by Hillebrand and Kanellakis's \cref{thm:hk}. (Although some
previous works on \enquote{light logics} such as LLL and ELL make use of
semantic arguments -- for instance, Statman's above-mentioned \cref{thm:statman}
has been applied to LLL in~\cite{DalLagoBaillot} -- to our knowledge these
applications have mostly consisted in proving inexpressivity results for
monomorphic systems.)
\begin{itemize}
\item In~\cite{ealreg}, we characterize \emph{regular languages} in the
  \emph{elementary affine $\lambda$-calculus}~\cite{Benedetti}; as a side
  effect, this answers an open question on a pre-existing characterization of
  polynomial time (we refer to~\cite{ealreg} for further discussion of the
  significance of this result). A crucial ingredient is the existence of a
  finite semantics of the second-order affine $\lambda$-calculus, which we shall
  prove in this paper using coherence spaces.
\item In a joint work with P.\ Pradic~\cite{sequel}, we define a class of
  queries over finite structures expressed in ELL, which lies between
  deterministic and non-deterministic \emph{logarithmic space}. (We also obtain
  a somewhat contrived exact characterization of deterministic logarithmic
  space.) This relies on some cardinality and complexity bounds on the coherence
  semantics of \malltwo, which we establish in the present paper.
\end{itemize}
As an illustration of the power of finite second-order coherence spaces, we
prove a slight variation of the first item above (regular languages) in the
setting of second-order ELL. By using some specific features of this model, we
get a shorter proof than in~\cite{ealreg}, which is very close to the proof of
\cref{thm:hk} by Hillebrand and Kanellakis.

One should note that if we were to enrich ELL with type fixpoints, then, instead
of regular languages, one would obtain a class containing at least\footnote{We
  believe that this class would be exactly P. But to adapt the P soundness
  theorem in~\cite{Baillot} to ELL, one would need to work with some notion of
  proof net for second-order (elementary) linear logic with additives; since
  this is likely to involve technical complications, we have not attempted to do
  so.} the languages decidable in polynomial time (as can be shown by adapting
the proof of polynomial time completeness in~\cite{Baillot}; see also the
discussion in~\cite{ealreg}). This gives a quantitative manifestation of the
expressivity gap between \mumall and \malltwo mentioned earlier in this
introduction.

% TODO mention transducers (when preparing final version?)

\subsection{Plan of the paper}

We first recall the second-order coherence space model and prove its finiteness
and effectivity for \malltwo in \cref{sec:coh}. This is followed by a short
discussion, in \cref{sec:affine}, on the adaptation of this finite semantics to
the second-order affine $\lambda$-calculus. We define the second-order extension
of the hypercoherence model in \cref{sec:hypercoh}, and show how to make it
effective.

\section{Finite and effective second-order coherence spaces}
\label{sec:coh}

Before we tackle the question of finiteness, we must first recall Girard's model
of second-order linear logic in coherence spaces. Since this model is not very
well-known, the first two subsections will be expository with no new results.
Finiteness is shown in \cref{sec:coh-finiteness}, and effectivity in
\cref{sec:combinatorial}. The omitted proofs of this section are in
\cref{sec:appendix-coh}.

\subparagraph{Syntax}

The formulas of \emph{second-order linear logic} are given by the grammar
\begin{equation*}
A,B := X \mid X^\perp \mid 1\mid \bot \mid A\otimes B\mid A\parr B
\mid  0\mid \top \mid A\oplus B\mid A\with B
\mid \forall X.\, A\mid \exists X.\, B \mid \oc A \mid \wn A 
\end{equation*}
where $X$ belongs to a fixed countable set of variables. \emph{Propositional}
linear logic is the fragment made of formulas without quantifiers
$\exists/\forall$, while \malltwo is the fragment without the the exponential
modalities $\oc$/$\wn$.

The involutive \emph{linear negation} $(-)^\perp$ is defined inductively on
formulas by the rules in~\cref{fig:mall2-macros}. It is used to define linear
implication as $A \multimap B := A^\perp \parr B$.

Since we work in pre-existing semantics, we do not need to formally define the
notion of denotational model of linear logic; we refer to~\cite{MelliesPanorama}
for an extensive survey of this topic. We will not be need to work with some
precise proof system -- e.g.\ sequent calculus -- either, except in
\cref{sec:regular}.

\begin{figure}
\centering
\[
\begin{array}{lcl!\qquad lcl !\qquad lcl}
1^\bot &:=& \bot &
\bot^\bot &:=& 1 &
(\exists X.\, A)^\bot &:=& \forall X.\,A^\bot
\\
(A \otimes B)^\bot &:=& A^\bot \parr B^\bot &
(A \parr B)^\bot &:=& A^\bot \otimes B^\bot & (\forall X.\, A)^\bot &:=& \exists X.\,A^\bot\\
0^\bot &:=& \top &
\top^\bot &:=& 0 & (\oc A)^\bot &:=& \wn A^\bot \\
(A \oplus B)^\bot &:=& A^\bot \with B^\bot &
(A \with B)^\bot &:=& A^\bot \oplus B^\bot &
(\wn A)^\bot &:=& \oc A^\bot
\end{array}
\]
\caption{Duality for formulas of linear logic.}
\label{fig:mall2-macros}
\end{figure}

\subparagraph{Coherence spaces (propositional case)}

Recall that a \emph{coherence space} is an undirected (reflexive) graph, i.e.\ a
pair $X=(\trame{X},\coh_{X})$ of a set $\trame{X}$ -- the \emph{web} of $X$ --
and a symmetric and reflexive relation $\coh_{X}\;\subseteq
\trame{X}\times\trame{X}$ -- its \emph{coherence relation}. A subset $c
\subseteq \trame{A}$ is a \emph{clique} of $A$ if its points are pairwise
coherent for $\coh_A$; in this case we write $c \sqsubset A$.

The operations $(-)^\perp, \otimes, \oplus, \oc$ are defined on coherence spaces
as follows (cf.~\cite{LLSS}):
\begin{itemize}
\item $\trame{X^\perp} = \trame{X}$ and $x \coh_{X^\perp} x' \iff x
  \not\coh_{X} x' \lor x = x'$ (complement graph)
\item $\trame{X \otimes Y} = \trame{X} \times \trame{Y}$ and $(x,y) \coh_{X
    \otimes Y} (x',y') \iff x \coh_X x' \land y \coh_Y y'$
\item $\trame{X \oplus Y} = \trame{X} \uplus \trame{Y}$ and $z \coh_{A \oplus B}
  z' \iff \exists Z \in \{X,Y\} : (z \in Z \land z' \in Z \land z \coh_Z z')$
\item $\trame{\oc X} = \{\text{cliques of $X$}\}$ and $c \coh_{\oc X} c' \iff c
  \cup c' \sqsubset X$ (\enquote{set-based} exponential)
\end{itemize}
Furthermore, the multiplicative units $1,\bot$ are interpreted as the singleton
space, and the additive units $\top,0$ as the empty space. This is enough to
define inductively the denotation $\denot{A}$ of a formula $A$ in propositional
linear logic, given an assignment of the free variables of $A$ to coherence
spaces.

A proof $\pi : A$ is interpreted in the coherence space model as a clique
$\denot{\pi} \sqsubset \denot{A}$. In terms of categorical semantics, the model
is given as the category $\CohL$:
\begin{itemize}
\item whose objects are coherence spaces;
\item whose morphisms between $X$ and $Y$ are cliques in $X \multimap Y$,
  composition being relational composition (this is meaningful since $\trame{X
    \multimap Y} = \trame{X} \times \trame{Y}$).
\end{itemize}

\subsection{Functors on embeddings and uniform families}

In order to interpret second-order quantification, we want to give a first-class
status to the map $\{\text{assignments for variables in $A$}\} \to
\{\text{possible values for $\denot{A}$}\}$ when $A$ is an open type (i.e.\ a
formula with free variables). The first idea that comes to mind is to consider
it as a functor on $\CohL$. But it stumbles on the fact that while the binary
connectives are \emph{covariant} bifunctors on $\CohL$, linear negation is a
\emph{contravariant} endofunctor. Instead, Girard's idea is to work in a
``category of embeddings'' (see~\cite{Coquand}) to make negation covariant.

\begin{definition}
  An \emph{embedding} of a coherence space $X$ into a coherence space $Y$ is an
  injection $\iota : \trame{X} \to \trame{Y}$ such that $x \coh_X x'
  \Leftrightarrow \iota(x) \coh_Y \iota(x')$. We write $\iota : X
  \hookrightarrow Y$.
  
  The category $\CohI$ has as objects coherence spaces, and as morphisms the
  embeddings.
\end{definition}
\begin{proposition}
  $(-)^\perp$ is a \emph{covariant} endofunctor of $\CohI$.
\end{proposition}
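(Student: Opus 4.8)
The plan is to define $(-)^\perp$ on $\CohI$ by keeping the underlying function of an embedding untouched, and then to verify that this function is still an embedding once we pass to the complement graphs. On objects nothing new is needed: for any coherence space $X$, the pair $X^\perp$ is again a symmetric reflexive graph -- complementing $\coh_X$ away from the diagonal while keeping the diagonal reflexive preserves symmetry and reflexivity -- so $X^\perp$ is a coherence space. On morphisms, given an embedding $\iota : X \hookrightarrow Y$, I would take $\iota^\perp$ to be $\iota$ itself, now regarded as a map $\trame{X^\perp} = \trame{X} \to \trame{Y} = \trame{Y^\perp}$.

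The only step that calls for an argument is that this $\iota^\perp$ is an embedding $X^\perp \hookrightarrow Y^\perp$, i.e.\ that it is injective (clear, since the function is unchanged) and that it both preserves and reflects the coherence of the complement graphs. For the latter, unfold the definitions: $x \coh_{X^\perp} x'$ means $x \not\coh_X x' \lor x = x'$, while $\iota(x) \coh_{Y^\perp} \iota(x')$ means $\iota(x) \not\coh_Y \iota(x') \lor \iota(x) = \iota(x')$. Since $\iota$ is an embedding of $X$ into $Y$, we have $\iota(x) \coh_Y \iota(x') \iff x \coh_X x'$, hence $\iota(x) \not\coh_Y \iota(x') \iff x \not\coh_X x'$; and since $\iota$ is injective, $\iota(x) = \iota(x') \iff x = x'$. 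Taking the disjunction of these two equivalences gives exactly $x \coh_{X^\perp} x' \iff \iota(x) \coh_{Y^\perp} \iota(x')$, which is the required embedding condition.

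Functoriality is then automatic: because the action on morphisms does not modify the underlying function, identity embeddings are sent to identity embeddings and $(\iota \circ \kappa)^\perp = \iota^\perp \circ \kappa^\perp$ holds on the nose. I do not expect any genuine obstacle here; the one place deserving a moment's care is the Boolean manipulation above, where the reflexive patch on the diagonal of the complement graph must be handled together with injectivity of $\iota$ -- but that is precisely what makes the equivalence go through.
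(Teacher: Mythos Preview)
Your proof is correct and follows the same idea as the paper's. The paper compresses the argument to a single line --- ``if $X$ is an induced subgraph of $Y$, then $X^\perp$ is an induced subgraph of $Y^\perp$'' --- which is precisely the content of your Boolean verification that $\iota$ remains an embedding after passing to complement graphs; you have simply unfolded that one line.
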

\begin{proof}
  If $X$ is an induced subgraph of $Y$, then $X^\perp$ is an induced subgraph of
  $Y^\perp$.
\end{proof}
\begin{remark}
  In the same vein, the graph $\{(x,\iota(x)) \mid x \in \trame{X} \}$ of an
  embedding $\iota : X \hookrightarrow Y$ is a clique both in $X \multimap Y$ and in
  $X^\perp \multimap Y^\perp$.
\end{remark}

Let us say, provisionally, that functors $F : \CohI^n \to \CohI$ are our
semantical counterpart of open formulas with $n$ variables. A proof of such a
formula should be a family of cliques $c_{X_1, \ldots, X_n} \sqsubset F(X_1,
\ldots, X_n)$, \enquote{uniform} in some way. The following notion of uniformity
has been called the \enquote{mutilation property} by Girard~\cite{girardF}:
\begin{definition}
  A family $c_{X_1, \ldots, X_n} \sqsubset F(X_1, \ldots, X_n)$ is
  \emph{uniform} if for any embeddings $\iota_i : X_i \hookrightarrow Y_i$ ($i
  \in \{1, \ldots, n\}$), $c_{X_1, \ldots, X_n} = F(\iota_1, \ldots,
  \iota_n)^{-1}(c_{Y_1, \ldots, Y_n})$.
\end{definition}

\begin{remark}
  \label{rem:composition}
  At this point we have to point out a subtlety of the coherence space model: it
  is not a priori obvious that uniformity is closed under composition, in other
  words, that the pointwise composition of a uniform clique family for $F
  \multimap G$ with a uniform clique family for $G \multimap H$ is uniform for
  $F \multimap H$. Indeed, taking $n = 1$ for simplicity, the uniformity
  condition for a family $f_X \sqsubset F(X) \multimap G(X)$ seen as a family of
  morphisms is expressed as a diagram
  \begin{center}
    \begin{tikzcd}[column sep=large]
      F(Y) \ar[r, "f_Y"] & G(Y) \ar[d, "G(\iota)^-"] \\
      F(X) \ar[u, "F(\iota)^+"] \ar[r, "f_X"] & G(X)
    \end{tikzcd}
  \end{center}
  in $\CohL$, where $j^+ = \{(x,j(x)) \mid x \in A\} \sqsubset A \multimap B$
  and $j^- = \{(j(x),x) \mid x \in A\} \sqsubset B \multimap A$ for $j : A
  \hookrightarrow B$. Such diagrams cannot be \enquote{formally} pasted
  horizontally.

  It turns out that uniform clique families \emph{do} compose, but this issue is
  non-trivial and was overlooked in Girard's papers. A proof is part of the
  folklore and often credited to Eugenio Moggi. % TODO: proof in hypercoh case
\end{remark}
\begin{remark}
  In the relational semantics of linear logic, in which objects are sets and
  morphisms are relations, one could define an analogous notion of uniform
  subset family for a functor on the category of injections. But then, for the
  uniform families
  \[ c_S = \{*\} \times S \subseteq \denot{1 \multimap X}_{X \mapsto S} \qquad
    c'_S =  S \times \{*\} \subseteq \denot{X \multimap 1}_{X \mapsto S} \]
  the composition $c'_S \circ c_S$ is \emph{not uniform}: it is equal to
  $\{(*,*)\}$ if $S \neq \varnothing$, and $\varnothing$ if $S = \varnothing$.

  This issue with the second-order relational model is known and has been
  investigated by A.\ Bac-Bruasse (whose PhD thesis in French~\cite{Bac} is the
  main reference on the subject to our knowledge), T.\ Ehrhard and C.\ Tasson.
  What makes composition work in coherence spaces is the domain-theoretic
  \emph{stability}\footnote{Stability is indeed a recurring pattern here: the
    uniformity condition is reminiscent of Berry's stable order between
    functions on domains, and the preservation of pullbacks in normal functors
    (next section) is a categorification of stability.} property of morphisms.
  This is our reason for working with coherence spaces instead of the simpler
  relational model.
\end{remark}

\subsection{Normal functors}
\label{sec:normal-functors}

The next goal is to interpret quantifiers. Let us look at the example of the
formula $X \multimap X$, which admits a proof $\pi$ such that $\denot{\pi}_{X
  \mapsto S} = \{(s,s) \mid s \in S\}$ for any coherence space $S$. This uniform
family should correspond to a clique in some space $\denot{\forall X.\, X
  \multimap X}$. The idea is to take some kind of \enquote{patterns with bound
  variables} as the points of this coherence space. Typically, $\{(s,s) \mid s
\in S\}$ should correspond to the single pattern $\nofo{x}{(x,x)}$ with a
bound variable $x$ -- and thus to the clique $\{\nofo{x}{(x,x)}\} \sqsubset
\denot{\forall X.\, X \multimap X}$. Observe:
\begin{itemize}
\item that the substitution of the variable $x$ by $s \in S$ in $(x,x)$
  corresponds to the functoriality of $\denot{X \multimap X} : \CohI \to \CohI$
  with respect to the embedding $\iota : \{x\} \hookrightarrow S$ such that
  $\iota(x) = s$;
\item that $(x,x) \in F(\{x\})$, and $\{x\}$ is \enquote{minimal} or
  \enquote{initial} in the sense that any other $(s,s) \in F(S)$ is an image of
  $(x,x)$ via an embedding $\{x\} \hookrightarrow S$.
\end{itemize}
We thus arrive at the idea that \enquote{patterns with bound variables} should
correspond to \enquote{minimal} spaces. To guarantee their existence, we need to
put an additional condition on our functors $\CohI^n \to \CohI$. This is why
Girard interprets open types by \emph{normal functors}.

\begin{definition}
  A functor is \emph{normal} if it preserves filtered colimits and finite
  pullbacks.
\end{definition}

The name comes from Girard's \emph{normal form theorem}:

\begin{theorem}
  Let $F : \CohI^n \to \CohI$ be a functor, $\trame{F}$ be the covariant
  presheaf obtained by taking the web, and $\El(\trame{F})$ be its category of
  elements.

  $F$ is normal if and only if, for any object $\vec{X}$ in $\CohI^n$ and point
  $x \in \trame{F(\vec{X})}$, the slice category $\El(\trame{F})/(\vec{X},x)$ admits a
  \emph{finite initial} object $(\vec{X'},x')$.

  In this case, $(\vec{X'},x')$ is initial in its own slice category. We call an
  object of $\El(\trame{F})$ enjoying this property a \emph{normal form}.
\end{theorem}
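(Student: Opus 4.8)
The plan is to prove Girard's normal form theorem by unwinding what it means for a functor $F : \CohI^n \to \CohI$ to preserve filtered colimits and finite pullbacks, translating each condition into a statement about the category of elements $\El(\trame{F})$. The key observation is that $\CohI$ is (equivalent to) a category which is the ind-completion of its finite objects: every coherence space is the filtered colimit, along embeddings, of its finite induced subgraphs. So preservation of filtered colimits by $F$ says exactly that for any $\vec X$ and any $x \in \trame{F(\vec X)}$, there is a finite $\vec{X'}$, an embedding $\vec{X'} \hookrightarrow \vec{X}$, and a point $x' \in \trame{F(\vec{X'})}$ mapping to $x$ — i.e.\ $(\vec{X'}, x')$ is an object of $\El(\trame F)/(\vec X, x)$ whose first component is finite. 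This gives the existence of \emph{finite} objects in the slice; the work is to upgrade one of them to an \emph{initial} one, and this is where the pullback condition enters.

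First I would set up the category of elements: objects are pairs $(\vec X, x)$ with $x \in \trame{F(\vec X)}$, and a morphism $(\vec X, x) \to (\vec Y, y)$ is an embedding $\vec\iota : \vec X \hookrightarrow \vec Y$ (componentwise) with $\trame{F(\vec\iota)}(x) = y$. Note $\CohI$, hence $\El(\trame F)$, has the crucial property that all morphisms are \emph{monic} (embeddings are injective and full on the coherence relation), so a slice $\El(\trame F)/(\vec X, x)$ is a poset-like situation: between any two objects there is at most one morphism. Then I would show that $\El(\trame F)$ inherits finite limits from $\CohI^n$ along $\trame F$ in the relevant sense: given two objects $(\vec Y_1, y_1)$ and $(\vec Y_2, y_2)$ over $(\vec X, x)$, form the pullback $\vec Y_1 \times_{\vec X} \vec Y_2$ in $\CohI^n$ (which exists — it is computed componentwise, the pullback of two embeddings being the intersection of their images) and use that $F$ preserves it, together with the fact that $x$ sits in the image of both $\trame{F(\vec Y_1 \hookrightarrow \vec X)}$ and $\trame{F(\vec Y_2 \hookrightarrow \vec X)}$, to produce a point of $\trame{F(\vec Y_1 \times_{\vec X} \vec Y_2)}$ mapping to $x$. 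Concretely: preservation of the pullback means $\trame{F(\vec Y_1 \times_{\vec X} \vec Y_2)} \to \trame{F \vec Y_1} \times_{\trame{F\vec X}} \trame{F\vec Y_2}$ is a bijection, and $(y_1, y_2)$ lies in that fibre product over $x$, so it has a unique preimage. Thus $\El(\trame F)/(\vec X,x)$ has binary products.

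Next, combining the two ingredients: the slice has at least one finite object (by filtered colimit preservation, as above), and it has binary products (by pullback preservation). Because morphisms are monic, taking the product of an object with itself along any endomorphism shows endomorphisms are identities, and more to the point, among the finitely-generated part of the slice the product of \emph{all} finite objects below a fixed finite one is finite and is a lower bound; I would argue that the class of finite objects in the slice is \emph{directed downward} — given two finite objects, their product is finite (being a pullback of finite coherence spaces, it has at most $|\trame{Y_1}|\cdot|\trame{Y_2}|$ web elements) and smaller than both — and also \emph{nonempty}, so since every object of the slice receives a morphism from some finite object, a minimal finite object, if unique, is initial. To get uniqueness/minimality I would take a finite object $(\vec{X'}, x')$ and replace it by the product of all finite objects in the slice that it dominates; since it dominates only finitely many objects up to iso (each has web of bounded size, and there are finitely many coherence spaces of each finite size up to iso) this product $(\vec{X''}, x'')$ is finite, and I claim it is initial: any object $(\vec Y, y)$ receives a map from some finite $(\vec Z, z)$; the product $(\vec{X''}, x'') \times (\vec Z, z)$ is a finite object dominated by $(\vec{X''},x'')$, hence is $(\vec{X''},x'')$ itself by construction, giving a map $(\vec{X''},x'') \to (\vec Z,z) \to (\vec Y,y)$; uniqueness of this map follows from monicity. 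Finally, initiality in the ambient slice $\El(\trame F)/(\vec X, x)$ immediately upgrades to initiality in its own slice $\El(\trame F)/(\vec{X''}, x'')$, since that is a full subcategory containing the initial object — giving the stated "normal form" property. For the converse direction (the \enquote{only if} is the hard analytic direction; \enquote{if} is the easy reconstruction), given that every slice has a finite initial object, one checks directly that $F$ preserves filtered colimits (a point of $F$ of a colimit factors through the finite normal form, which must land in one of the colimit components) and finite pullbacks (a compatible pair over a cospan, pulled back to the normal forms, glues because the normal form of the pullback maps to both and initiality forces the comparison map to be iso).

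The main obstacle I expect is the downward-directedness / minimality argument: showing that among the finite objects of a slice one can actually find (or build, via a finite product) a genuinely initial one, rather than merely a minimal one. The subtlety is controlling that \enquote{the product of all finite objects below a given one} is still finite — this needs the finiteness-up-to-isomorphism of coherence spaces of bounded web size, plus the fact that morphisms in $\El(\trame F)$ are monic so that domination is a partial order with finite down-sets on finite objects. One should also be careful that pullbacks in $\CohI$ really do exist and are computed as induced subgraphs on intersections of images, so that \enquote{preserves finite pullbacks} is a meaningful and usable hypothesis; this is where the \emph{stability}-flavoured nature of embeddings (intersections of induced subgraphs are induced subgraphs) is doing essential work.
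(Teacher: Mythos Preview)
The paper does not prove this theorem: it is stated as Girard's \emph{normal form theorem}, followed only by a remark relating it to analogous characterisations (Joyal's analytic functors, Kock's finitary polynomial functors). So there is no paper proof to compare your proposal against.

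That said, your approach is the standard one and is essentially correct. The one place I would tighten is the step where you take ``the product of all finite objects dominated by a fixed finite $(\vec{X'},x')$'': you need that indexing set to be finite. Your justification (``finitely many coherence spaces of each finite size up to iso'') is slightly off target --- what you actually need is that the objects of the slice $\El(\trame F)/(\vec X,x)$ dominated by $(\vec{X'},x')$ are finitely many up to iso in the slice. This holds because any such object is determined by the image of its underlying embedding into $\vec{X'}$, i.e.\ by an induced sub-tuple $\vec{Z} \subseteq \vec{X'}$ for which $x'$ lies in the image of $\trame{F}(\vec Z \hookrightarrow \vec{X'})$ (the point $z$ and the structure map to $(\vec X,x)$ are then forced, since $\trame{F}$ of an embedding is injective); since $\vec{X'}$ is finite there are only finitely many such sub-tuples. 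With that fix, the iterated binary-pullback argument producing a least finite object, and the verification that it is initial, go through exactly as you outline. The converse direction is also fine as sketched.
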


\begin{remark}
  It is worth noting that this characterisation is one of many similar results.
  For instance, Joyal's analytic functors have a \emph{weak} finite normal form
  property (i.e. where the initial elements are only weakly initial), a
  variation corresponding to preservation of weak pullbacks and filtered
  colimits. Similarly, Kock's characterisation of polynomial functors states
  that preservation of wide pullbacks is equivalent to the existence of normal
  forms (though not finite); in fact, Girard's normal functors correspond to
  Kock's \emph{finitary polynomial functors}. See the discussion in~\cite[\S
  1.18--1.21]{KockGambino}.
\end{remark}

\begin{definition}
  Let $F$ be a normal functor. We define $\nf{F}$ to be its set of isomorphism
  classes of normal forms (for isomorphisms in $\El(\trame{F})$).
\end{definition}

\begin{notation}
  We use the notation $\nofo{\vec{X}}{x}$ for normal forms $(\vec{X},x) \in \NF(F)$.
  Alternatively, if $\vec{X} = (X_1, \ldots, X_n)$, we may write
  $\nofo{X_1, \ldots, X_n}{x}$.
\end{notation}

The set $\NF(F)$ summarizes in a way all the webs $|F(\vec{X}|)$ of the
instantiations of $F$, as the proposition below shows. As for the uniform
families of cliques of $F$, they are summarized by the \emph{trace} of $F$, a
coherence space built from $\NF(F)$.

\begin{proposition}
  \label{prop:instantiation-space}
  Let $F : \CohI^n \to \CohI$ be a normal functor and $\vec{X} = (X_1, \ldots,
  X_n)$. Then $\trame{F(\vec{X})} = \{ F(\iota_1, \ldots, \iota_n)(y) \mid
  \nofo{\vec{Y}}{y} \in \NF(F),\, \iota_i : Y_i \hookrightarrow
  X_i\;\text{for}\; i \in \{1 \ldots n\}\}$.
\end{proposition}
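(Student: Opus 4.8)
The plan is to prove both inclusions of the claimed equality for $\trame{F(\vec{X})}$, using the normal form theorem as the main engine.

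First I would establish the inclusion $\supseteq$, which is essentially functoriality. Given a normal form $\nofo{\vec{Y}}{y} \in \NF(F)$ and embeddings $\iota_i : Y_i \hookrightarrow X_i$, the functor $F$ sends the tuple $\vec\iota = (\iota_1, \ldots, \iota_n)$ to an embedding $F(\vec\iota) : F(\vec{Y}) \hookrightarrow F(\vec{X})$ in $\CohI$, so in particular $F(\vec\iota)$ acts on webs and $F(\vec\iota)(y) \in \trame{F(\vec{X})}$. Since $\trame{F}$ is the covariant presheaf obtained by taking the web, this is just saying $\trame{F}(\vec\iota)(y) \in \trame{F(\vec X)}$, which is immediate. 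So this direction requires no work beyond unwinding definitions.

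The substance is the inclusion $\subseteq$. Let $x \in \trame{F(\vec X)}$; then $(\vec X, x)$ is an object of the category of elements $\El(\trame{F})$. By the normal form theorem applied to the normal functor $F$, the slice category $\El(\trame{F})/(\vec X, x)$ has a finite initial object $(\vec X', x')$, and this object is a normal form, i.e.\ $\nofo{\vec{X'}}{x'} \in \NF(F)$. The fact that $(\vec X', x')$ is an object of the slice over $(\vec X, x)$ means precisely that there is a morphism $(\vec X', x') \to (\vec X, x)$ in $\El(\trame{F})$, i.e.\ a tuple of embeddings $\iota_i : X'_i \hookrightarrow X_i$ such that $\trame{F}(\vec\iota)(x') = x$, that is $F(\vec\iota)(x') = x$. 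This exhibits $x$ in the required form, completing the inclusion.

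I expect the only potential subtlety — the main obstacle, such as it is — to be bookkeeping about the precise shape of morphisms in $\El(\trame{F})$ and in its slices: one must be careful that a morphism $(\vec Y, y) \to (\vec X, x)$ in the category of elements is by definition a morphism $\vec\iota : \vec Y \to \vec X$ in $\CohI^n$ with $F(\vec\iota)(y) = x$, and that the slice $\El(\trame{F})/(\vec X, x)$ therefore has objects exactly those $(\vec Y, y)$ equipped with such a $\vec\iota$. Once this is set up, both directions are short, and no genuinely hard step remains; the real content has been front-loaded into the normal form theorem. I would also remark in passing that the $\vec{Y}$ appearing in the displayed set may, without loss of generality, be taken to range only over (representatives of) $\NF(F)$, since the initial object furnished by the theorem is itself a normal form — which is exactly how the statement is phrased.
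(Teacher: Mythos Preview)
Your proof is correct and is exactly the argument the paper leaves implicit: the proposition is stated there without proof, as an immediate unpacking of the normal form theorem, and your two inclusions (functoriality for $\supseteq$, existence of a normal form in the slice for $\subseteq$) are the intended justification. There is nothing to add.
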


\begin{definition}
  \label{def:trace}
  Let $F : \CohI^n \to \CohI$ be a normal functor. We endow $\nf{F}$ with a
  non-reflexive coherence relation: $\nofo{\vec{X}}{x} \coh_{\nf{F}}
  \nofo{\vec{Y}}{y}$ when for all $n$-tuples $\vec{Z}$ and all embeddings
  $\iota_{\vec{X},i} : X_i \hookrightarrow Z_i$ and $\iota_{\vec{Y},i}: Y_i
  \hookrightarrow Z_i$, $F(\iota_{\vec{X},1},\ldots,\iota_{\vec{X},n})(x)
  \coh_{F(\vec{Z})} F(\iota_{\vec{Y},1},\ldots,\iota_{\vec{Y},n})(y)$.

  The \emph{trace} $\trace{F}$ is defined as the coherence space made of the
  self-coherent normal forms of $F$, equipped with the coherence relation above.
\end{definition}

\begin{proposition}
  \label{prop:instantiation-clique}
  There is a bijection between the cliques $c \sqsubset \Tr(F)$ and the uniform
  families of cliques $c_{\vec{X}} \sqsubset F(\vec{X})$ for a normal functor $F : \CohI^n
  \to \CohI$, given by
  \[ c_X = \{ F(\iota_1, \ldots, \iota_n)(y) \mid \nofo{\vec{Y}}{y} \in c,
    \iota_i : Y_i \hookrightarrow X_i\;\text{for}\; i \in \{1 \ldots n\} \} \]
\end{proposition}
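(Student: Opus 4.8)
The plan is to verify that the map $c \mapsto (c_{\vec X})_{\vec X}$ given in the statement is well-defined (i.e.\ each $c_{\vec X}$ is a clique and the family is uniform), that it is injective, and that it is surjective onto uniform clique families, by exhibiting the inverse. First I would show well-definedness. Given a clique $c \sqsubset \Tr(F)$, the set $c_{\vec X}$ is a subset of $\trame{F(\vec X)}$ by \cref{prop:instantiation-space}. To see it is a clique, take two of its points $F(\vec\iota_{\vec Y})(y)$ and $F(\vec\iota_{\vec Z})(z)$ with $\nofo{\vec Y}{y}, \nofo{\vec Z}{z} \in c$; since $c$ is a clique in $\Tr(F)$ we have $\nofo{\vec Y}{y} \coh_{\nf F} \nofo{\vec Z}{z}$, and the defining condition of $\coh_{\nf F}$ in \cref{def:trace}, applied with $\vec Z := \vec X$ and the given embeddings, yields exactly that the two images are coherent in $F(\vec X)$. (The reflexive case, where the two points coincide as normal forms, is handled by self-coherence, which is why $\Tr(F)$ was restricted to self-coherent normal forms.)

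Next I would check uniformity of $(c_{\vec X})_{\vec X}$: for embeddings $\vec\iota : \vec X \hookrightarrow \vec Y$, one must show $c_{\vec X} = F(\vec\iota)^{-1}(c_{\vec Y})$. The inclusion $c_{\vec X} \subseteq F(\vec\iota)^{-1}(c_{\vec Y})$ is immediate from functoriality: if $x = F(\vec\kappa)(z)$ with $\nofo{\vec Z}{z} \in c$ and $\vec\kappa : \vec Z \hookrightarrow \vec X$, then $F(\vec\iota)(x) = F(\vec\iota \circ \vec\kappa)(z) \in c_{\vec Y}$. For the reverse inclusion, suppose $F(\vec\iota)(x) \in c_{\vec Y}$, so $F(\vec\iota)(x) = F(\vec\lambda)(z)$ for some $\nofo{\vec Z}{z} \in c$ and $\vec\lambda : \vec Z \hookrightarrow \vec Y$; here I would use the initiality of normal forms (Girard's normal form theorem) to factor $\vec\lambda$ through $\vec\iota$, obtaining $\vec\kappa : \vec Z \hookrightarrow \vec X$ with $\vec\iota \circ \vec\kappa = \vec\lambda$, whence $F(\vec\iota)(x) = F(\vec\iota)(F(\vec\kappa)(z))$ and, since $F(\vec\iota)$ is injective (embeddings are preserved by normal functors, or at least $F(\vec\iota)$ is monic because $F$ preserves pullbacks hence monos), $x = F(\vec\kappa)(z) \in c_{\vec X}$.

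For the inverse map, given a uniform family $(d_{\vec X})_{\vec X}$ I would set $c := \{\,\nofo{\vec X}{x} \in \Tr(F) \mid x \in d_{\vec X}\,\}$, i.e.\ collect those normal forms whose distinguished point lies in the corresponding clique. One checks $c \sqsubset \Tr(F)$: any two normal forms in $c$ are self-coherent (being in $\Tr(F)$) and, using uniformity of $d$ to pull back along a common target $\vec Z$ with embeddings from both, their images land in $d_{\vec Z}$, a clique, so they are coherent, i.e.\ $\nofo{\vec X}{x}\coh_{\nf F}\nofo{\vec Y}{y}$. Then I would verify the two round-trips are identities: starting from $c$, recovering $c_{\vec X}$ and taking its normal forms gives back $c$ because a normal form $\nofo{\vec X}{x}$ equals $F(\vec\iota)(y)$ for $\nofo{\vec Y}{y} \in c$ only if already $\nofo{\vec X}{x} = \nofo{\vec Y}{y}$ in $\NF(F)$ (the factoring embedding is an isomorphism, by minimality of normal forms on both sides); conversely, starting from a uniform $d$, forming $c$ and then $c_{\vec X}$ returns $d_{\vec X}$ precisely by \cref{prop:instantiation-space} together with the uniformity (downward-closure under preimages) of $d$. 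The main obstacle I anticipate is the surjectivity/round-trip argument on the normal-form side: making rigorous that $c_{\vec X}$ contains no ``spurious'' normal forms beyond those in $c$ requires carefully invoking the initiality of normal forms in their own slice categories, and that a normal form which is the $F$-image of another normal form must in fact be isomorphic to it — this is where the finiteness and initiality from the normal form theorem do the real work, and where Girard's original treatment is terse.
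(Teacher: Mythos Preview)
The paper does not actually supply a proof of \cref{prop:instantiation-clique}: it appears in the expository part of \cref{sec:coh} recalling Girard's model, and no argument for it is given either in the main text or in \cref{sec:appendix-coh}. So there is no ``paper's own proof'' to compare against; your plan is essentially the standard argument one would give for this classical fact.

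Your outline is correct and complete in its main lines. A couple of points worth tightening. First, in the uniformity step you can avoid the detour through injectivity of $F(\vec\iota)$: initiality of the normal form $\nofo{\vec Z}{z}$ in the slice over $(\vec Y, F(\vec\iota)(x))$ directly produces a morphism to the object $((\vec X,x),\vec\iota)$ of that slice, i.e.\ an embedding $\vec\kappa:\vec Z\hookrightarrow\vec X$ with $F(\vec\kappa)(z)=x$ \emph{and} $\vec\iota\circ\vec\kappa=\vec\lambda$ simultaneously. Second, in the construction of the inverse you restrict $c$ to $\Tr(F)$ by fiat; for the round-trip $d\mapsto c\mapsto (c_{\vec X})=d$ to go through you implicitly need that any normal form $\nofo{\vec Y}{y}$ with $y\in d_{\vec Y}$ is already self-coherent (otherwise it would be silently dropped). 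This is true and easy---for any $\vec Z$ and embeddings $\vec\iota,\vec\iota':\vec Y\hookrightarrow\vec Z$, uniformity gives $F(\vec\iota)(y),F(\vec\iota')(y)\in d_{\vec Z}$, a clique---but it deserves one explicit sentence, since it is precisely the reason why $\trame{\Tr(F)}$ was restricted to self-coherent normal forms in \cref{def:trace}.
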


This leads to the interpretation of quantifiers. One interprets inductively any
formula of second-order linear logic $A$ with $n$ free variables into a normal
functor $\denot{A} : \CohI^n \to \CohI$: the connectives $\oplus, \otimes,
(-)^\perp$ extend to \enquote{pointwise} operations on normal functors, and the
case $A = \forall X.\, B$ is handled by the operation introduced below.

\begin{proposition}
  Let $F$ be a normal functor $F : \CohI^{n+1}\rightarrow \CohI$. The map on
  objects $\forall(F)(X_1, \ldots, X_n) = \trace{F(X_1,\dots,X_n,-)}$ extends to
  a normal functor $\forall(F) : \CohI^n \to \CohI$.
\end{proposition}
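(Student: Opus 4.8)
The plan is to make sense of the object map, extend it functorially to embeddings, and then establish normality via Girard's normal form theorem. First, for fixed parameters $\vec X=(X_1,\dots,X_n)$, the partial application $G_{\vec X}:=F(X_1,\dots,X_n,-):\CohI\to\CohI$ is again normal: in $\CohI^{n+1}$ filtered colimits and finite pullbacks are computed coordinatewise, the diagram constant at $\vec X$ over any connected index category (in particular a filtered one, or a cospan) has $\vec X$ as colimit resp.\ limit, and $F$ preserves both. Hence $\forall(F)(\vec X)=\trace{G_{\vec X}}$ is a genuine coherence space; by \cref{def:trace} its web is the set of self-coherent elements of $\NF(G_{\vec X})$, and by the normal form theorem each such $\nofo Z z$ has $Z$ \emph{finite} and $z\in\trame{F(\vec X,Z)}$.

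\emph{Action on embeddings.} Given an embedding $\vec\iota:\vec X\hookrightarrow\vec Y$, I would define $\forall(F)(\vec\iota)$ by ``normalising after substitution'': send $\nofo Z z\in\trace{G_{\vec X}}$ to $\nofo{Z'}{z'}$, where $(Z',z')$ is the finite initial object of the slice $\El(\trame{G_{\vec Y}})/(Z,\,F(\vec\iota,\mathrm{id}_Z)(z))$, with witnessing embedding $\kappa:Z'\hookrightarrow Z$ such that $F(\mathrm{id}_{\vec Y},\kappa)(z')=F(\vec\iota,\mathrm{id}_Z)(z)$. Independence of all choices and invariance under isomorphism follow from uniqueness of finite initial objects; functoriality is then routine ($\forall(F)(\mathrm{id})$ is the identity because $z$ is already a normal form, and the two normalisations of a composite land in the same category of elements joined by a morphism, hence are isomorphic). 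The real content is that $\nofo{Z'}{z'}$ is again self-coherent -- so that the map lands in $\trace{G_{\vec Y}}$ -- and that it preserves and reflects the coherence of \cref{def:trace}, i.e.\ is an embedding. I expect this to follow from an \emph{amalgamation} argument: to compare two instantiations $F(\vec Y,m_i)(z')$ for $m_1,m_2:Z'\hookrightarrow V$, form over a common extension the amalgamated unions of $Z$ with $V$ along $\kappa$ and along each $m_i$ (amalgamations of coherence spaces along an embedding always exist -- the coherences are forced on the overlap and arbitrary elsewhere), then transport the self-coherence (or coherence) relation already known for $\nofo Z z$ through these embeddings and through the embedding-of-functors $F(\vec\iota,\mathrm{id}_{-})$, using that an embedding reflects coherence and that $F$ sends embeddings to embeddings. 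Conceptually, self-coherence and coherence in $\trace{G_{\vec X}}$ are \emph{stable} under enlarging the instantiating space, so any two instances can be compared inside a single amalgam.

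\emph{Normality of $\forall(F)$.} I would verify the normal-form characterisation directly: every $(\vec X,\nofo Z z)\in\El(\trame{\forall(F)})$ has a finite initial object in its slice. Since $F$ is normal, $(\vec X,Z,z)\in\El(\trame F)$ has a finite normal form $(\vec W,Z_0,z_0)$, with $\vec W\hookrightarrow\vec X$, $\kappa_0:Z_0\hookrightarrow Z$, $F(\vec W\hookrightarrow\vec X,\kappa_0)(z_0)=z$, and $\vec W,Z_0$ finite; I claim $(\vec W,\nofo{Z_0}{z_0})$ is the object sought. One checks: (a) $\nofo{Z_0}{z_0}$ is a self-coherent normal form of $F(\vec W,-)$ -- it is a normal form because the morphisms witnessing the initiality of $(\vec W,Z_0,z_0)$ must be the identity on the $\vec W$-component, and self-coherence is inherited from $\nofo Z z$ by the amalgamation argument -- so $(\vec W,\nofo{Z_0}{z_0})$ is indeed an object of $\El(\trame{\forall(F)})$; (b) $\forall(F)(\vec W\hookrightarrow\vec X)(\nofo{Z_0}{z_0})\cong\nofo Z z$, since $F(\vec W\hookrightarrow\vec X,\mathrm{id}_{Z_0})(z_0)$ has a normal form admitting a morphism (through $\kappa_0$) to $\nofo Z z$, and two normal forms joined by a morphism are isomorphic; (c) it is initial in the slice over $(\vec X,\nofo Z z)$: given another $(\vec Y,\nofo{Z'}{z'})$ over $(\vec X,\nofo Z z)$, aligning its $Z$-component with $Z$ by an amalgamation produces an object of $\El(\trame F)$ over $(\vec X,Z,z)$ into which $(\vec W,Z_0,z_0)$ maps uniquely, and that map descends to the required unique morphism of $\El(\trame{\forall(F)})$ by uniqueness of normal forms. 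This description incidentally identifies $\NF(\forall(F))$ with the self-coherent normal forms of $F$.

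\emph{Expected obstacle, and an alternative.} The hard part is the amalgamation lemma and, resting on it, point (c): keeping track of the $Z$-components across different instantiations while staying functorial -- a subtlety of the same flavour as the one flagged in \cref{rem:composition}. Should (c) prove awkward, one can instead check normality by hand: preservation of filtered colimits is easy, since finiteness of the $Z$'s forces every web element of $\trace{F(\mathrm{colim}_i\vec X_i,-)}$, and every instance of its coherence relation, to occur already at some finite stage, using that $F$ preserves filtered colimits; preservation of finite pullbacks is the harder half and reduces, again by amalgamation, to $F$ preserving finite pullbacks in its first $n$ arguments. Finally, compatibility of this interpretation with composition of uniform clique families (Moggi's theorem, cf.\ \cref{rem:composition}) is what makes $\forall$ a bona fide interpretation of the quantifier, but it is not needed for the present statement.
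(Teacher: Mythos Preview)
The paper does not give a proof of this proposition: it appears in the expository \cref{sec:normal-functors}, where Girard's second-order model is merely recalled, and no argument is supplied either in the main text or in \cref{sec:appendix-coh}. So there is no paper proof to compare against; your strategy is the natural one and is consistent with how the paper later \emph{uses} the construction (the proof of \cref{prop:degree-forall} takes for granted that a normal form of $\forall(F)$ has the shape $\nofo{Y_1,\dots,Y_n}{\nofo{X}{x}}$ with $\nofo{Y_1,\dots,Y_n,X}{x}\in\NF(F)$, which is exactly what your argument establishes).

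Your proof is correct, but you are working harder than necessary on the action on embeddings. The renormalisation step is in fact trivial: if $\nofo Z z\in\NF(G_{\vec X})$, then $(Z,\,F(\vec\iota,\mathrm{id}_Z)(z))$ is \emph{already} a normal form of $G_{\vec Y}$, so $Z'=Z$ and $\kappa=\mathrm{id}$. Indeed, suppose $\kappa:Z''\hookrightarrow Z$ and $z''\in\trame{F(\vec Y,Z'')}$ satisfy $F(\mathrm{id}_{\vec Y},\kappa)(z'')=F(\vec\iota,\mathrm{id}_Z)(z)$; the pullback of $(\vec\iota,\mathrm{id}_Z)$ and $(\mathrm{id}_{\vec Y},\kappa)$ in $\CohI^{n+1}$ is $(\vec X,Z'')$, and since $F$ preserves finite pullbacks the pair $(z,z'')$ lifts to a unique $z'''\in\trame{F(\vec X,Z'')}$ with $F(\mathrm{id}_{\vec X},\kappa)(z''')=z$, whence initiality of $(Z,z)$ forces $\kappa$ to be an isomorphism. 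With $Z'=Z$, the self-coherence and embedding claims become immediate and need no amalgamation: for any $V$ and $m_i:Z\hookrightarrow V$ one has $G_{\vec Y}(m_i)\bigl(F(\vec\iota,\mathrm{id}_Z)(z)\bigr)=F(\vec\iota,\mathrm{id}_V)\bigl(G_{\vec X}(m_i)(z)\bigr)$, and $F(\vec\iota,\mathrm{id}_V)$ is an embedding of coherence spaces, hence preserves and reflects coherence. With this simplification your normality argument via the normal-form characterisation goes through cleanly, and point~(c) loses its awkwardness.
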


% TODO preuve

\subsection{Ensuring finiteness: normal functors of finite degree}
\label{sec:coh-finiteness}

We now come to our technical contributions, having just finished the exposition
of Girard's model. This section introduces a notion of \emph{degree} of a normal
functor, which will witness the finiteness of the interpretation of \malltwo.

\begin{definition}
  \label{def:finite-degree}
Let $F:\CohI^{n}\rightarrow\CohI$ be a normal functor. We define
the degree of $F$ as:
\[
\deg(F) = \sup\;\{\card{\trame{X_i}} \mid \nofo{X_1, \ldots, X_n}{x} \in \nf{F}, i\in\{1,\dots,n\}\}. 
\]
We say $F$ is \emph{finite} if it preserves finiteness of cardinality 
and is of finite degree.
\end{definition}

Note that a normal functor may have finite but unbounded normal forms, so that
its degree is in fact infinite. Typically, this is the case for the exponential
modalities, which explains why the model is not finite for full second-order
linear logic. We now give two characterizations of finite normal functors.

\begin{proposition}
  \label{prop:nf-finite}
  A normal functor $F$ is finite if and only if $\NF(F)$ is finite.
\end{proposition}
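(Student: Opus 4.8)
The plan is to prove the two implications separately, both relying on \cref{prop:instantiation-space} (which expresses every web $\trame{F(\vec X)}$ as the set of images of normal forms under embeddings) and on the observation that a normal form $\nofo{\vec X}{x}$ is \emph{initial} in its slice category, so in particular the only endomaps of $\vec X$ fixing $x$ are identities — hence each $X_i$ has no ``junk'': every point of $X_i$ is used in $x$.

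First I would prove the easy direction: if $\NF(F)$ is finite, then $F$ is finite. Finiteness of $\NF(F)$ immediately bounds $\deg(F)$, since the supremum in \cref{def:finite-degree} is taken over a finite set of (necessarily finite, by the normal form theorem) cardinalities $\card{\trame{X_i}}$. For preservation of finiteness of cardinality, I would invoke \cref{prop:instantiation-space}: if each $X_i$ is finite, then $\trame{F(\vec X)}$ is covered by the images $F(\vec\iota)(y)$ as $\nofo{\vec Y}{y}$ ranges over the finitely many normal forms and each $\iota_i$ ranges over the (finitely many, since $Y_i$ and $X_i$ are finite) embeddings $Y_i \hookrightarrow X_i$; this is a finite union of singletons, hence finite.

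For the converse, suppose $F$ is finite; I must show $\NF(F)$ is finite. The key point is that a normal form $\nofo{\vec X}{x}$ satisfies $\trame{F(\vec X)} = \{x\}$ up to the action of embeddings — more precisely, initiality forces each $X_i$ to be ``spanned'' by $x$ in the sense that $x$ does not factor through any proper induced subspace. Since $\deg(F)$ is finite, say $\deg(F) = d$, every normal form $\nofo{X_1,\dots,X_n}{x}$ has all $\card{\trame{X_i}} \le d$; up to isomorphism in $\CohI$ there are only finitely many coherence spaces on a web of size $\le d$ (finitely many graphs on $\le d$ vertices). So there are finitely many choices for the tuple $\vec X$ up to isomorphism, and for each such finite $\vec X$, the web $\trame{F(\vec X)}$ is finite because $F$ preserves finiteness of cardinality; hence only finitely many points $x$ are available. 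Since every normal form is, up to isomorphism in $\El(\trame F)$, of this shape, $\NF(F)$ is finite.

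The main obstacle I expect is the bookkeeping in the converse direction: making precise that \emph{every} isomorphism class in $\NF(F)$ has a representative with webs of size $\le \deg(F)$ (this is essentially the definition, modulo checking that the supremum being finite really does bound each individual normal form, using that normal forms have finite webs), and that counting isomorphism classes of normal forms reduces to counting pairs (finite coherence space tuple $\vec X$ up to iso, point $x \in \trame{F(\vec X)}$) — one must check that two normal forms with isomorphic $\vec X$ and ``the same'' point are isomorphic in $\El(\trame F)$, which again follows from initiality. None of this is deep, but it is the part where one must be careful not to conflate isomorphism in $\CohI^n$ with isomorphism in the category of elements.
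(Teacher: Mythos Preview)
Your argument is correct and matches the paper's approach: bound the webs appearing in normal forms by $\deg(F)$, observe there are only finitely many coherence spaces of that size up to isomorphism, and use that $F$ sends each such tuple to a finite space. You are in fact slightly more thorough than the paper, which in the direction $\NF(F)$ finite $\Rightarrow$ $F$ finite only records that $\deg(F)$ is a supremum over a finite subset of $\naturalN$ and leaves implicit the check (which you supply via \cref{prop:instantiation-space}) that $F$ preserves finiteness of cardinality.
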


\begin{theorem}[Finiteness = polynomial growth]
  \label{thm:polynomialgrowth}
  Let $F : \CohI\rightarrow \CohI$ be a normal functor. There exists $d\in
  \naturalN$ s.t. $\card{\trame{F(X)}} = \bigo{\card{\trame{X}^{d}}}$ if and
  only if $F$ is a \emph{finite} normal functor. In that case, $\deg{F}$ is the
  smallest such $d$.
\end{theorem}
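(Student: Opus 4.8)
The plan is to establish both implications of \cref{thm:polynomialgrowth} by reducing to \cref{prop:nf-finite}, which already equates finiteness of $F$ with finiteness of $\NF(F)$. So it suffices to show that the polynomial growth bound $\card{\trame{F(X)}} = \bigo{\card{\trame{X}}^d}$ holds for some $d$ if and only if $\NF(F)$ is finite, and moreover that the optimal $d$ equals $\deg(F)$.

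For the direction ``finite $\Rightarrow$ polynomial growth'', I would use \cref{prop:instantiation-space}, which describes $\trame{F(X)}$ as the set of points of the form $F(\iota)(y)$ where $\nofo{Y}{y}$ ranges over $\NF(F)$ and $\iota : Y \hookrightarrow X$. If $\NF(F)$ is finite, write $d = \deg(F)$, which is then a finite natural number (the sup of the finitely many $\card{\trame{Y}}$ over normal forms). The key point is that the number of embeddings $Y \hookrightarrow X$ is at most $\card{\trame{X}}^{\card{\trame{Y}}} \leq \card{\trame{X}}^{d}$, since an embedding is in particular an injection on webs; more precisely it is bounded by the falling factorial $\card{\trame{X}} (\card{\trame{X}}-1)\cdots$, but the crude bound suffices. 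Summing over the finitely many normal forms gives $\card{\trame{F(X)}} \leq \card{\NF(F)} \cdot \card{\trame{X}}^{d} = \bigo{\card{\trame{X}}^{d}}$. This also shows that the smallest valid exponent is $\leq \deg(F)$.

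For the converse and the optimality of $d = \deg(F)$, I would argue contrapositively and quantitatively: suppose $\card{\trame{F(X)}} = \bigo{\card{\trame{X}}^{d}}$. I claim $\deg(F) \leq d$, which in particular forces $\deg(F)$ finite; then by \cref{prop:nf-finite} I still need $F$ to preserve finiteness of cardinality, which follows because a finite-degree normal functor whose normal forms all have webs of size $\leq d$ has only finitely many isomorphism classes of normal forms (a finite graph on $\leq d$ vertices, up to iso, is one of finitely many), so $\NF(F)$ is finite and \cref{prop:nf-finite} applies. To see $\deg(F) \leq d$: pick any normal form $\nofo{Y}{y}$ and let $k = \card{\trame{Y}}$. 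Take $X_n$ to be the discrete (fully incoherent, or fully coherent — whichever makes $Y$ embed) coherence space on $n$ points for large $n$; actually one must be careful that $Y$ embeds into the chosen $X$, so instead take $X$ to be a disjoint union of $n$ copies of $Y$ arranged so that each copy is an induced subgraph — e.g.\ the coherence space whose web is $\trame{Y} \times \{1,\dots,n\}$ with $(a,i) \coh (b,j) \iff i = j \wedge a \coh_Y b$. Then there are $n$ distinct embeddings $Y \hookrightarrow X$ (one into each copy), and by initiality of the normal form $\nofo{Y}{y}$ in its slice category, the points $F(\iota_i)(y)$ are pairwise distinct (two of them coinciding would, by the normal form property, force an isomorphism of $Y$ with itself commuting with two different inclusions into $X$, contradicting that the copies are disjoint). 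Hence $\card{\trame{F(X)}} \geq n$ while $\card{\trame{X}} = kn$, so $n = \bigo{(kn)^{d}}$, which is only possible if $d \geq 1$; to extract $d \geq k$ one iterates the construction along each of the — wait, with a single variable we cannot get growth faster than linear this way, so the right construction is to take $X$ with web $\trame{Y}$ itself but ``inflated'': replace each point $a \in \trame{Y}$ by a clique (or anticlique) $C_a$ of size $m$, with coherence determined by that of $a,b$ in $Y$. An embedding $Y \hookrightarrow X$ is then a choice of one representative in each $C_a$, so there are $m^{k}$ of them, all giving distinct points $F(\iota)(y)$ by initiality; thus $\card{\trame{F(X)}} \geq m^{k}$ while $\card{\trame{X}} = mk$, forcing $m^{k} = \bigo{(mk)^{d}}$ as $m \to \infty$, i.e.\ $k \leq d$. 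Taking the sup over all normal forms gives $\deg(F) \leq d$, and combined with the first half, the smallest valid $d$ is exactly $\deg(F)$.

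The main obstacle I anticipate is the ``inflation'' construction and the accompanying distinctness argument: one needs that replacing each web-point by an $m$-element clique/anticlique (matching the reflexivity convention and ensuring the coherence pattern makes $Y$ an induced subgraph in exactly $m^k$ ways) genuinely produces $m^k$ distinct embeddings, and — crucially — that $F$ applied to these distinct embeddings yields distinct points of $\trame{F(X)}$. The latter is where the \emph{initiality} of normal forms (not just weak initiality) is essential: if $F(\iota)(y) = F(\iota')(y)$ for embeddings $\iota,\iota' : Y \hookrightarrow X$, then by the universal property of the normal form $(Y,y)$ in the slice category $\El(\trame{F})/(X, F(\iota)(y))$, both $\iota$ and $\iota'$ must equal the unique mediating morphism, hence $\iota = \iota'$. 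Making this last step rigorous — and checking it survives the passage to isomorphism classes of normal forms — is the delicate part; the rest is bookkeeping with cardinalities and the $\bigo{\cdot}$ notation.
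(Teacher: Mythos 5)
Your proposal follows essentially the same route as the paper: the upper bound is the same count (sum over the finitely many normal forms of the number of embeddings of their webs into $X$, each bounded by $\card{\trame{X}}^{\deg(F)}$), and your ``inflation'' construction is exactly the paper's $X^0 \otimes [n]$ with $[n] = 1 \with \dots \with 1$ (an $n$-clique), together with the same family of $n^{\card{\trame{X^0}}}$ representative-choice embeddings and the same initiality-based distinctness argument. Two local points need repair, however. First, your justification that $F$ preserves finiteness is incorrect as stated: finite degree alone does \emph{not} imply that $\NF(F)$ is finite. You only counted the possible webs $Y$ up to isomorphism, but for a fixed $Y$ there may be infinitely many points $y$ such that $\nofo{Y}{y}$ is a normal form -- e.g.\ the constant functor at an infinite coherence space has degree $0$ and infinitely many normal forms. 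This conflation is precisely why \cref{prop:nf-finite} requires \emph{both} finite degree and preservation of finiteness. The fix is immediate, and it is how the paper opens its proof: the hypothesis $\card{\trame{F(X)}} = \bigo{\card{\trame{X}}^{d}}$ already forces $F(X)$ to be finite whenever $X$ is, so preservation of finiteness comes for free from the growth bound, not from the degree.

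Second, in the distinctness step, the claim that $F(\iota)(y) = F(\iota')(y)$ forces $\iota = \iota'$ is false for arbitrary embeddings $\iota, \iota' : Y \hookrightarrow X$: uniqueness of the normal form only yields an isomorphism $\rho$ of $(Y,y)$ in $\El(\trame{F})$ with $\iota' = \iota \circ \rho$, and $(Y,y)$ may well have nontrivial automorphisms, so ``both equal the unique mediating morphism'' is not a valid reading of initiality. What saves the argument is the special shape of the embeddings in your family: each sends $a \in \trame{Y}$ to a point lying over $a$ in the inflated space, so the commuting triangle forces $\rho(a) = a$ for every $a$, hence $\rho = \mathrm{id}$ and $\iota = \iota'$. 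This is exactly the step the paper makes explicit ($\rho$ can only be the identity because the considered embeddings differ only on the $[n]$-component); your earlier remark about ``an isomorphism of $Y$ with itself commuting with two different inclusions'' is the right idea and should replace the final formulation. With these two points made precise, your proof coincides with the paper's.
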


For applications such as the one in \cref{sec:regular}, the relevant notion of
\enquote{finite semantics} is a model with finite sets of morphisms. Finite
normal functors achieve this requirement.
\begin{proposition}
  \label{prop:finite-cliques}
  A finite normal functor has finitely many uniform families of cliques.
\end{proposition}
\begin{proof}
  By \cref{prop:nf-finite} together with \cref{prop:instantiation-clique}, since
  $\trame{\Tr(F)} \subseteq \NF(F)$.
\end{proof}

To obtain a finite semantics, our goal is therefore to show that inside the
model of coherence spaces and normal functors, the finite ones constitute a
submodel of \malltwo.

\begin{proposition} 
  \label{prop:degree-mallzero}
  If $F$ and $G$ normal functors in $\CohI^{n}\rightarrow\CohI$, then
  $\deg{F^{\bot}} = \deg{F}$, $\deg{F\otimes G} = \deg{F} + \deg{G}$, and
  $\deg{F\oplus G} = \max\{\deg{F}, \deg{G}\}$.
\end{proposition}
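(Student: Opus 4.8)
The plan is to analyze how the additive and multiplicative connectives act on normal forms, then read off the degree from the description of $\NF$ of the combined functor. Recall that for a normal functor $F:\CohI^n\to\CohI$, the set $\NF(F)$ consists of the (isomorphism classes of) finite initial objects in the categories of elements $\El(\trame{F})$, and the web $\trame{F(\vec X)}$ is obtained by pushing these normal forms forward along all tuples of embeddings (\cref{prop:instantiation-space}). So the first step is to pin down, for each of $F^\perp$, $F\otimes G$, $F\oplus G$, what the webs of the instantiations look like in terms of the webs of the instantiations of $F$ and $G$, and then identify which points are normal forms, i.e.\ which carry an \emph{initial} support.

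For linear negation: since $\trame{F^\perp(\vec X)}=\trame{F(\vec X)}$ as sets for all $\vec X$, and an embedding $X_i\hookrightarrow Y_i$ induces the same underlying injection whether we regard it as acting on $F$ or on $F^\perp$ (the coherence spaces $F^\perp(\vec X)$ and $F^\perp(\vec Y)$ are just the complements, and embeddings are preserved by $(-)^\perp$ as the excerpt notes), the categories of elements $\El(\trame{F^\perp})$ and $\El(\trame{F})$ are literally isomorphic. Hence $\NF(F^\perp)=\NF(F)$ with the same supports $\vec X$, and $\deg{F^\perp}=\deg{F}$ is immediate. For the additive sum: $\trame{(F\oplus G)(\vec X)}=\trame{F(\vec X)}\uplus\trame{G(\vec X)}$ and the functorial action is componentwise, so $\El(\trame{F\oplus G})$ is the disjoint union of $\El(\trame{F})$ and $\El(\trame{G})$; a point in the left copy has its initial support computed entirely inside $\El(\trame F)$, likewise on the right. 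Therefore $\NF(F\oplus G)=\NF(F)\uplus\NF(G)$, and the supremum defining the degree splits as $\max\{\deg F,\deg G\}$.

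The tensor case is the one requiring actual work, and I expect it to be the main obstacle. Here $\trame{(F\otimes G)(\vec X)}=\trame{F(\vec X)}\times\trame{G(\vec X)}$ with the diagonal functorial action $F(\vec\iota)\times G(\vec\iota)$. The key claim is that a point $(x,y)\in\trame{F(\vec Z)}\times\trame{G(\vec Z)}$ has its initial support $\vec Z_0$ given (component-wise, via the universal property of pullbacks in $\CohI$) by the ``union'' inside $Z_i$ of the initial support of $x$ and the initial support of $y$; more precisely, if $\nofo{\vec X}{x_0}$ and $\nofo{\vec Y}{y_0}$ are the normal forms of which $x$ and $y$ are instances, via embeddings $\iota_{\vec X,i}:X_i\hookrightarrow Z_i$ and $\iota_{\vec Y,i}:Y_i\hookrightarrow Z_i$, then the initial object over $(\vec Z,(x,y))$ is supported on the (apex of the) pullback, equivalently on the subspace $\iota_{\vec X,i}(\trame{X_i})\cup\iota_{\vec Y,i}(\trame{Y_i})$ of each $Z_i$. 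Proving this amounts to checking two things: that this candidate is an element (both $x_0$ and $y_0$ push forward into it, so $(x,y)$ does), and that it is initial among all elements over which $(x,y)$ factors (using that $x_0$, $y_0$ are each initial, so any element dominating $(x,y)$ must, in each coordinate, receive embeddings from both $X_i$ and $Y_i$, hence from their union — here one invokes that $\CohI$ has the relevant pushouts of embeddings along a common coherence space, or equivalently works concretely with unions of induced subgraphs). The normal-form theorem guarantees such a finite initial object exists; the point is to compute it. Once this is established, $\NF(F\otimes G)$ is described as pairs of normal forms of $F$ and $G$ glued along arbitrary overlaps of their supports, and the $i$-th component of the support of such a glued normal form has cardinality between $\max$ and the \emph{sum} of the two component cardinalities. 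Taking the supremum, the maximal overlap (empty intersection) is always attainable — one can always form the coproduct $X_i\uplus Y_i$ with no identification — so $\deg{F\otimes G}=\deg F+\deg G$, and a symmetric argument shows this bound is tight by choosing normal forms of $F$ and $G$ realizing their respective degrees. I would also remark, if the paper's conventions require it, that one should separately note $F\otimes G$ preserves finiteness of cardinality whenever $F$ and $G$ do, so that the degree computation is the only content of the finiteness-preservation statement here.
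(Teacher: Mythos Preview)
Your proof is correct and follows essentially the same approach as the paper: identical treatment of $(-)^\perp$ and $\oplus$ via $\NF(F^\perp)=\NF(F)$ and $\NF(F\oplus G)\cong\NF(F)\uplus\NF(G)$, and for $\otimes$ the same identification of the support of a normal form $\nofo{\vec Z}{(x,y)}$ as the componentwise union of the supports of the normal forms of $x$ and $y$, with the disjoint union realizing the upper bound. One small slip: where you write ``(apex of the) pullback'' you mean the pushout (as your own clarification via the union $\iota_{\vec X,i}(\trame{X_i})\cup\iota_{\vec Y,i}(\trame{Y_i})$ and your later mention of pushouts make clear).
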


\begin{proposition}
  \label{prop:degree-forall}
  For any normal functor $F: \CohI^{n+1}\rightarrow \CohI$,
  $\deg{\forall(F)}\leqslant \deg{F}$.
\end{proposition}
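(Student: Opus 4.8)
The plan is to reduce the inequality to a \emph{lifting property}: every normal form of $\forall(F)$ is induced, in the evident way, by a normal form of $F$ having the \emph{same} coherence spaces in the first $n$ coordinates. Granting this, \cref{def:finite-degree} applied to $F$ instantly bounds the cardinalities of those spaces by $\deg{F}$, and taking the supremum over all normal forms of $\forall(F)$ yields $\deg{\forall(F)} \leqslant \deg{F}$.

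Spelling things out: a normal form of $\forall(F)$ has the shape $\nofo{X_1,\dots,X_n}{w}$ with $w \in \trame{\forall(F)(\vec X)} = \trame{\Tr(F(\vec X,-))}$, and by \cref{def:trace} such a $w$ is (a representative of) a \emph{self-coherent} normal form of the one-variable normal functor $F(\vec X,-) : \CohI \to \CohI$; write $w = \nofo{Z}{z}$ with $z \in \trame{F(\vec X, Z)}$ and $(Z,z)$ minimal, i.e.\ every embedding into it inside $\El(\trame{F(\vec X,-)})$ is an isomorphism. The lifting property then reads: $\nofo{X_1,\dots,X_n,Z}{z} \in \NF(F)$. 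Given this, \cref{def:finite-degree} yields $\card{\trame{X_i}} \leqslant \deg{F}$ for each $i \leqslant n$, and we are done.

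To prove $\nofo{\vec X,Z}{z} \in \NF(F)$, I would take an arbitrary morphism of $\El(\trame F)$ into $(\vec X, Z, z)$, namely embeddings $\mu_i : Y_i \hookrightarrow X_i$ ($i \leqslant n$) and $\nu : W \hookrightarrow Z$ with $F(\mu_1,\dots,\mu_n,\nu)(y) = z$ for some $y \in \trame{F(\vec Y, W)}$, and show all of them are isomorphisms. For $\nu$ this is immediate: factoring $F(\vec\mu,\nu)$ through $F(\vec X, W)$ exhibits $\nu$ as an embedding into $(Z,z)$ inside $\El(\trame{F(\vec X,-)})$, which is minimal. For the $\mu_i$, set $\tilde y = F(\mathrm{id}_{\vec Y},\nu)(y) \in \trame{F(\vec Y, Z)}$, so that $F(\vec\mu,\mathrm{id}_Z)(\tilde y) = z$; I would check that $(Z,\tilde y)$ is again a self-coherent normal form of $F(\vec Y,-)$. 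Its minimality follows exactly as for $\nu$ (push any offending embedding forward along $\vec\mu$ and use minimality of $(Z,z)$). Its self-coherence is \emph{transported} from that of $(Z,z)$: since $F$ carries embeddings to embeddings of coherence spaces, $F(\vec\mu,\mathrm{id}_V) : F(\vec Y,V) \hookrightarrow F(\vec X,V)$ reflects the coherence relation, and via the interchange law $F(\vec\mu,\mathrm{id}_V)\circ F(\mathrm{id}_{\vec Y},a) = F(\mathrm{id}_{\vec X},a)\circ F(\vec\mu,\mathrm{id}_Z)$ for embeddings $a : Z \hookrightarrow V$, coherence of the pushforwards of $z$ (which holds, $(Z,z)$ being self-coherent) descends to coherence of the pushforwards of $\tilde y$. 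Hence $\nofo{Z}{\tilde y} \in \trame{\forall(F)(\vec Y)}$, and the embedding $\forall(F)(\vec\mu)$ --- obtained from the functoriality of $\Tr$ on the pointwise-embedding natural transformation $F(\vec\mu,-) : F(\vec Y,-) \Rightarrow F(\vec X,-)$ --- sends this normal form to the one generated by its pushforward $F(\vec\mu,\mathrm{id}_Z)(\tilde y) = z$, i.e.\ to $\nofo{Z}{z} = w$ itself, since $(Z,z)$ is already a normal form of $F(\vec X,-)$. Therefore $\vec\mu$ is a morphism of $\El(\trame{\forall(F)})$ into the normal form $\nofo{\vec X}{w}$, so each $\mu_i$ is an isomorphism, completing the lifting property.

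I expect the self-coherence transport to be the delicate point: minimality propagates essentially formally, but self-coherence genuinely relies on the \enquote{stability}-flavoured fact that normal functors turn embeddings into full sub-coherence-space inclusions (reflecting coherence on webs) --- the same feature that, as recalled above, makes the coherence model succeed where the relational one fails. A minor subsidiary point is to make the action of $\forall(F) = \Tr(F(\vec X,-))$ on embeddings explicit, so that the identity $\forall(F)(\vec\mu)(\nofo{Z}{\tilde y}) = w$ is fully justified; this is routine once the functoriality of the trace with respect to pointwise-embedding natural transformations is recorded.
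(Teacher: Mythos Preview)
Your proposal is correct and follows exactly the same route as the paper: reduce the inequality to the lifting property ``$\nofo{X_1,\dots,X_n}{\nofo{Z}{z}}\in\nf{\forall(F)}$ implies $\nofo{X_1,\dots,X_n,Z}{z}\in\nf{F}$'', then read off the bound on $\card{\trame{X_i}}$ from \cref{def:finite-degree}. The paper's proof merely states this reduction and leaves the verification implicit, whereas you spell out the two halves (minimality in the last coordinate, then in the first $n$ via self-coherence transport and the action of $\forall(F)$ on embeddings); your details are sound and there is nothing to add.
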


\begin{theorem}
  \label{thm:closure-malltwo}
  Finite normal functors are closed under \malltwo connectives.
\end{theorem}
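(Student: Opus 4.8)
The plan is to argue by induction on the construction of \malltwo formulas, which amounts to checking that each \malltwo connective sends finite normal functors to finite normal functors (normality itself being granted by the definitions of the operations $\otimes,\oplus,(-)^\perp,\forall$ on normal functors). By the De~Morgan laws of \cref{fig:mall2-macros}, the connectives $\parr$, $\with$ and $\exists$ are obtained from $\otimes$, $\oplus$, $\forall$ by pre- and post-composition with $(-)^\perp$; since $(-)^\perp$ is a covariant endofunctor of $\CohI$, it therefore suffices to treat $(-)^\perp$, $\otimes$, $\oplus$, $\forall$, the units, and the projection functors $\CohI^n\to\CohI$ that interpret free variables. Being finite means preserving finiteness of cardinality \emph{and} having finite degree, and I would verify these two requirements separately.

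The finite-degree requirement is essentially already discharged by the earlier propositions: the units and constant functors have degree $0$, a projection has degree $1$, and by \cref{prop:degree-mallzero,prop:degree-forall} the degree stays finite under $(-)^\perp$, $\otimes$, $\oplus$ and $\forall$. Preservation of finiteness of cardinality is equally routine for the units and projections, for $(-)^\perp$ (which leaves the web untouched), and for $\otimes$ and $\oplus$, since $\trame{(F\otimes G)(\vec X)}=\trame{F(\vec X)}\times\trame{G(\vec X)}$ and $\trame{(F\oplus G)(\vec X)}=\trame{F(\vec X)}\uplus\trame{G(\vec X)}$, so finiteness of the webs of $F$ and $G$ at $\vec X$ gives finiteness of the web of $F\star G$ at $\vec X$.

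The one case that needs genuine care — and the step I would single out as the main obstacle, though all the heavy lifting has in fact been done in the cited results — is $\forall$. Given a finite normal functor $F:\CohI^{n+1}\to\CohI$ and finite coherence spaces $\vec X=(X_1,\dots,X_n)$, I would consider the one-variable functor $G=F(\vec X,-):\CohI\to\CohI$ and show it is again finite: it is normal, it preserves finiteness of cardinality because $F$ does and $\vec X$ is finite, and $\deg G\leq\deg F$ since, by \cref{prop:instantiation-space}, every point of $\trame{G(Z)}$ is the image of some $y$ with $\nofo{\vec Y,W}{y}\in\NF(F)$ under embeddings whose last component targets $Z$, so a normal form of $G$ has underlying space embeddable into such a $W$, of cardinality at most $\deg F$. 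Then \cref{prop:nf-finite} gives that $\NF(G)$ is finite; since $\trame{\Tr(G)}\subseteq\NF(G)$ and $\forall(F)(\vec X)=\Tr(G)$, the web of $\forall(F)(\vec X)$ is finite. Combined with \cref{prop:degree-forall}, this shows $\forall(F)$ is finite, which closes the induction.

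(Alternatively, one could route the whole argument through \cref{prop:nf-finite}, reducing each case to the finiteness of the set $\NF$ of normal forms; but with the degree computations of \cref{prop:degree-mallzero,prop:degree-forall} already in hand, verifying the two defining conditions directly is the most economical route.)
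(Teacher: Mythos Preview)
Your proposal is correct and follows essentially the same approach as the paper: use \cref{prop:degree-mallzero,prop:degree-forall} for the degree condition, check cardinality preservation directly for $\otimes,\oplus,(-)^\perp$, and for $\forall$ reduce to showing that $\trame{\Tr(G)}\subseteq\NF(G)$ is finite via \cref{prop:nf-finite}. The paper's proof is terser and simply says the $\forall$ case ``reduces to the case $n=1$'', leaving implicit the verification you spell out---that $G=F(\vec X,-)$ is itself a finite normal functor when $\vec X$ is finite---so your added justification that $\deg G\leq\deg F$ is a welcome clarification of a step the paper glosses over.
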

\begin{proof}%[Proof of \cref{thm:closure-malltwo}]
  We still need to show that if $F,G : \CohI^n \to \CohI$ are finite normal
  functors, then $F \oplus G, F \otimes G$ and $\forall(F)$ preserve finiteness
  of cardinality. This is immediate for the first two, and the latter reduces to
  the case $n = 1$: we must show that $\Tr(F)$ is finite. This follows from
  \cref{prop:nf-finite} since $\trame{\Tr(F)} \subseteq \NF(F)$.
\end{proof}

The above results, together with \cref{prop:finite-cliques}, entails the
\cref{thm:existence-finite} claimed in the introduction. We can be a bit more
precise:

\begin{corollary}
  \label{cor:fincoh}
  Let $A$ be a formula of second-order linear logic. Suppose that in all
  subformulas of $A$ of the form $\oc B$ or $\wn B$, any type variable in $B$ is
  bound by a quantifier in $B$. Then $\denot{A}$ is a finite normal functor. In
  particular:
  \begin{itemize}
  \item this applies when $A$ is a \malltwo formula;
  \item when $A$ is closed, $\denot{A}$ is a finite coherence space.
  \end{itemize}
\end{corollary}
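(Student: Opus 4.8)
The plan is to prove \cref{cor:fincoh} by structural induction on $A$, working in a fixed ambient context $\Gamma = X_1, \ldots, X_n$ of type variables that contains all the free variables of $A$, so that $\denot{A}$ is a normal functor $\CohI^n \to \CohI$. The statement to establish by induction is: for every formula $A$ satisfying the hypothesis on its $\oc/\wn$-subformulas, $\denot{A}$ is a finite normal functor. The base cases are immediate: the type variables $X$ and $X^\perp$ are interpreted by the identity functor (post-composed with $(-)^\perp$), which has degree $1$ and preserves finiteness; and the units $1, \bot, 0, \top$ are interpreted by constant functors with value a singleton or empty coherence space, of degree $0$. All of these are finite normal functors.

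For the inductive cases corresponding to \malltwo connectives — that is, $A = B \otimes C$, $B \parr C$, $B \oplus C$, $B \with C$, $\forall X.\, B$, $\exists X.\, B$, keeping in mind that $\parr$, $\with$, $\exists$ are obtained from $\otimes$, $\oplus$, $\forall$ by the pointwise operation $(-)^\perp$ — each immediate subformula still satisfies the hypothesis, since its $\oc/\wn$-subformulas are among those of $A$. Hence the induction hypothesis applies to the subformulas, and \cref{thm:closure-malltwo} yields that $\denot{A}$ is again a finite normal functor. The only genuinely new case is $A = \oc B$; the case $A = \wn B$ reduces to it via $\wn B = (\oc B^\perp)^\perp$, using that $B^\perp$ is closed whenever $B$ is, that $B^\perp$ inherits the hypothesis, and that $(-)^\perp$ preserves finiteness by \cref{prop:degree-mallzero} together with $\trame{X^\perp} = \trame{X}$. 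In the case $A = \oc B$, the hypothesis forces every type variable occurring in $B$ to be bound by a quantifier in $B$, so $B$ is a \emph{closed} formula, and moreover it inherits the hypothesis (its $\oc/\wn$-subformulas are $\oc/\wn$-subformulas of $A$). By the induction hypothesis applied to $B$ in the empty context, $\denot{B}$ is a finite coherence space. Since $B$ is closed, $\denot{\oc B}$ in the context $\Gamma$ is the constant functor with value $\oc\denot{B}$, hence of degree $0$; and $\oc\denot{B}$ is finite because, by the definition of the set-based exponential, its web is the set of cliques of the finite coherence space $\denot{B}$, which is a finite set. Thus $\denot{\oc B}$ is a finite normal functor, closing the induction.

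I expect no real obstacle here: once \cref{thm:closure-malltwo} is available, the entire content of the corollary is the observation that the hypothesis collapses every exponential subformula to a constant functor, and that the set-based exponential sends a finite coherence space to a finite one (there are only finitely many cliques to populate the new web). The one point requiring mild care is the bookkeeping of the ambient context of type variables, so that the implication \enquote{closed subformula $\Rightarrow$ constant functor} is phrased correctly; this is routine. The two \enquote{in particular} clauses then follow at once: a \malltwo formula has no $\oc/\wn$-subformulas, so the hypothesis is vacuously satisfied; and when $A$ is closed we have $n = 0$, and a finite normal functor $\CohI^0 \to \CohI$ is by definition a finite coherence space. Combined with \cref{prop:finite-cliques}, this also delivers the \cref{thm:existence-finite} announced in the introduction.
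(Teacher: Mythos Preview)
Your proposal is correct and matches the paper's intended argument: the paper states \cref{cor:fincoh} immediately after \cref{thm:closure-malltwo} without a separate proof, treating it as a direct consequence, and your structural induction is exactly the routine elaboration that justifies this. In particular, your handling of the exponential case---observing that the hypothesis forces $B$ to be closed, so $\denot{\oc B}$ is a constant functor with finite value---is the key point the paper leaves implicit, and you have it right.
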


\subsection{Effectivity properties via a combinatorial description}
\label{sec:combinatorial}

We are now ready to revisit the example outlined at the start of
\cref{sec:normal-functors}, and discuss in more generality the
\enquote{combinatorial} or \enquote{syntactic} presentation of the
\malltwo-definable coherence spaces. Our exposition here is inspired by the
description of normal functors over the category of sets and injections
in~\cite[\S IV.5]{Bac}.

The idea is to see the webs $\trame{X_i}$ in a normal form $\nofo{X_1, \ldots,
  X_n}{x}$ as sets of \emph{bound variables} in $x$. Recall that these normal
forms are considered up to isomorphism in a category of elements
$\El(\trame{F})$; these isomorphisms should be understood as $\alpha$-renamings.
The initiality condition on normal forms means that all the variables in the
$\trame{X_i}$ appear free in $x$ -- otherwise, one could take a smaller $X'_i$.
Note that the coherence spaces $X_i$ specify not only which variables are bound,
but also the coherence relation between them.

In turn, this $x$ is a syntax tree with binders -- indeed the interpretation of
quantifiers uses (unary) normal forms. The \malltwo connectives induce a grammar
of terms
\[ x ::= a \in \mathrm{Var} \mid (x,x) \mid \mathrm{inl}(x) \mid \mathrm{inr}(x)
  \mid \nofo{X}{x} \]
where $\trame{X} \subset \mathrm{Var}$. The functorial action of a
\malltwo-definable functor $F$ on embeddings then corresponds to substitution --
indeed an embedding $\iota_i : X_i \hookrightarrow Y_i$ is an assignment of
variables.

The shape of the term is in fact heavily constrained by the \malltwo formula
which $F$ interprets. With this point of view, one sees that $\deg(F)$ is the
maximum number of leaves which a syntax tree in $\nf{F}$ can have.

With such a concrete description it becomes easier to see how one can compute
operations on these variable types and cliques. For instance:

\begin{proposition}
  \label{prop:cohnf-decidable}
  For any \malltwo-definable functor $F$, the non-reflexive coherence relation
  on $\nf{F}$ (\cref{def:trace}) is decidable.
\end{proposition}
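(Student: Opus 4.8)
The plan is to exploit the combinatorial description of \malltwo-definable functors developed just above: a normal form $\nofo{\vec X}{x} \in \nf{F}$ is, up to $\alpha$-renaming, a finite syntax tree $x$ over the grammar $x ::= a \mid (x,x) \mid \mathrm{inl}(x) \mid \mathrm{inr}(x) \mid \nofo{X}{x}$ whose leaves are variables, together with the finite coherence spaces $X_i$ recording which variables sit in which web and how they cohere. All this data is finite and can be represented explicitly, so the real content is to turn the quantifier-over-all-$\vec Z$-and-all-embeddings condition of \cref{def:trace} into a bounded check. First I would observe that it suffices to test coherence against a single canonical tuple $\vec Z$: namely take $Z_i$ to be (an isomorphic copy of) the coherence space obtained by gluing $X_i$ and $Y_i$ along nothing, i.e.\ the disjoint union $X_i \uplus Y_i$ where one makes arbitrary choices for cross-coherences — but in fact, because the condition quantifies over \emph{all} such $\vec Z$ and all pairs of embeddings, what matters is only the finitely many possible "merge patterns" of the variables of $x$ with the variables of $y$ (which variable of $X_i$ is identified with which variable of $Y_i$, if any) together with the finitely many possible coherence relations on the merged web that are compatible with both $X_i$ and $Y_i$. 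Each such pattern is a finite object; there are finitely many of them; and for each one the resulting pair of points $F(\iota_{\vec X})(x), F(\iota_{\vec Y})(y)$ in $F(\vec Z)$ is computed by the (substitution-flavoured) functorial action, which is itself effective.

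The key steps, in order, are: (1) fix explicit finite encodings of normal forms as annotated syntax trees; (2) show that the coherence action of a \malltwo-definable $F$ on a concrete tuple $\vec Z$ and concrete points is computable — this is a straightforward structural recursion on the \malltwo formula presenting $F$, using the explicit formulas for $\coh$ under $(-)^\perp, \otimes, \oplus$ and, for the $\forall$ case, \cref{def:trace} applied recursively one level down (the recursion terminates because the formula strictly decreases); (3) prove the reduction lemma: $\nofo{\vec X}{x} \coh_{\nf F} \nofo{\vec Y}{y}$ holds iff for every one of the finitely many merge-patterns $m$ of the variables of $x$ and $y$ and every coherence extension $\vec Z_m$ compatible with $X_i$ and $Y_i$ on each component, the images of $x$ and $y$ are coherent in $F(\vec Z_m)$; (4) conclude that the predicate is decidable, being a finite conjunction of decidable atomic tests by (2). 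Step (3) rests on the fact that an embedding $\iota : X_i \hookrightarrow Z_i$ is exactly an isomorphism onto an induced subgraph, so two embeddings $\iota_{\vec X}, \iota_{\vec Y}$ into a common $\vec Z$ determine, and are determined up to the relevant equivalence by, how the two subgraphs overlap inside $\vec Z$ — and any overlap realised by \emph{some} $\vec Z$ is realised by one of the finite canonical ones.

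The main obstacle I expect is precisely nailing down step (3): making rigorous that "quantify over all $\vec Z$ and all embeddings" collapses to "quantify over a finite, effectively enumerable set of canonical overlaps", and that nothing is lost by this restriction. One has to be careful that the coherence relation on $Z_i$ is not forced by $X_i$ and $Y_i$ on the cross pairs, so the canonical family $\vec Z_m$ must range over \emph{all} completions of those cross-coherences; checking that a pair of points is coherent in $F(\vec Z)$ for all $\vec Z$ then genuinely requires iterating over these completions. The finiteness of the number of completions follows because each $X_i$, $Y_i$ has a finite web (these are \malltwo normal forms, hence have boundedly many leaves by the degree bound), so the merged web is finite and has finitely many coherence relations on it. Granting that combinatorial bookkeeping, the remaining steps are routine structural recursions, and the proposition follows.
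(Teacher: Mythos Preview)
Your proposal is correct and is precisely the argument the paper has in mind: the paper does not spell out a proof of this proposition, treating it instead as a direct consequence of the combinatorial/syntactic description of normal forms given just before (syntax trees with bound variables, functorial action as substitution). Your steps (1)--(4) are exactly the way one cashes out that description, and your identification of the crux --- reducing the universal quantification over all $\vec Z$ and all embeddings to a finite enumeration of merge patterns and compatible coherence extensions on the (finite, by the degree bound) merged webs --- is the right one; the recursion into $\forall$-subformulas terminates for the reason you give, namely structural induction on the \malltwo formula presenting $F$.
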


This may be used to establish the \emph{effectivity} of our finite semantics of
\malltwo:

\begin{theorem}\label{thm:effectivity}
  The function sending a \malltwo formula $A$ to $\Tr(\denot{A})$ is computable.
  Futhermore, the function taking a formula $A$ and a proof $\pi : A$ as input
  and returning the clique of $\Tr(\denot{A})$ corresponding to the uniform
  family $\denot{\pi}(\vec{X}) \sqsubset \denot{A}(\vec{X})$ is computable.
\end{theorem}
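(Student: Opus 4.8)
The plan is to extract both statements from the combinatorial picture of \cref{sec:combinatorial}, whose technical core is the decidability result \cref{prop:cohnf-decidable}. I treat the two claims in turn.

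\emph{Computing $\Tr(\denot A)$.} By \cref{prop:cohnf-decidable} together with \cref{def:trace}, it suffices to compute $\NF(\denot A)$, presented as a finite set of terms of the grammar of \cref{sec:combinatorial} (extended with a nullary constructor $*$ for the point of $1$), each decorated with the coherence spaces $X_i$ recording its bound variables: one then obtains $\Tr(\denot A)$ by discarding, using \cref{prop:cohnf-decidable}, the normal forms that are not self-coherent and equipping the rest with the coherence relation of \cref{def:trace}. I would compute $\NF(\denot A)$ by induction on $A$. An atom $X$ or $X^{\perp}$ yields the single normal form $a$ over $X_i = \{a\}$ (all other webs empty); the units $1,\bot$ yield the single normal form $*$ over empty webs, and $0,\top$ yield no normal form. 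Since a formula and its dual have the same normal forms, it remains to describe $\NF(F \oplus G)$, $\NF(F \otimes G)$ and $\NF(\forall(F))$ (the cases $\parr$, $\with$, $\exists$ then follow by De Morgan duality). The first is the disjoint union of the $\mathrm{inl}(\cdot)$- and $\mathrm{inr}(\cdot)$-tagged copies of $\NF(F)$ and $\NF(G)$. For $\NF(F \otimes G)$ one ranges over pairs $(\nofo{\vec U}{x},\nofo{\vec V}{y}) \in \NF(F) \times \NF(G)$ and over all amalgamations of the webs, i.e.\ all jointly surjective cospans $U_i \hookrightarrow W_i \hookleftarrow V_i$ in $\CohI$ up to isomorphism (a finite search, since the $U_i,V_i$ are finite), keeping those satisfying the initiality condition. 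For $\NF(\forall(F))$, a normal form is $\nofo{\vec X}{\nofo{Y}{y}}$ where $\nofo{Y}{y}$ is a normal form of $F(\vec X,-)$ that is self-coherent in $\Tr(F(\vec X,-))$; these can be enumerated from the normal forms of $F$, the self-coherence being tested via \cref{prop:cohnf-decidable}. Verifying that this recipe indeed produces $\NF(\denot A)$ is a matter of unwinding the definitions of the pointwise operations on normal functors and of $\forall(-)$; that it terminates with a finite answer at every stage is ensured by \cref{cor:fincoh} (equivalently \cref{prop:nf-finite}). As a byproduct, the functorial action of $\denot A$ on embeddings between finite coherence spaces -- which in this description is just a substitution of variables -- is computable, and hence so is $\denot A(\vec X)$ for any tuple $\vec X$ of finite coherence spaces, via \cref{prop:instantiation-space}.

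\emph{Computing $\denot\pi$.} Fix any of the usual proof formalisms for \malltwo, say the sequent calculus (the choice is immaterial). By the bijection of \cref{prop:instantiation-clique}, the clique of $\Tr(\denot A)$ associated with $\denot\pi$ contains a normal form $\nofo{\vec Y}{y}$ exactly when $y \in \denot\pi(\vec Y)$; since $\NF(\denot A)$ is already computed, it is enough to be able to evaluate $\denot\pi$ at the finitely many finite tuples $\vec Y$ occurring in those normal forms. I would do this by induction on the derivation of $\pi$: given the cliques associated with the immediate subproofs (by the induction hypothesis, as cliques of the traces of their conclusions), one evaluates each of them at the required finite instantiation via \cref{prop:instantiation-clique} -- a finite computation, using the computability of functorial actions and of the finite webs of instantiations established above -- and then applies the effective finite operation corresponding to the last rule: identity relations for axioms, relational composition on the finite web of the cut formula for cut, the evident pairings and injections for the multiplicative and additive rules, and the manifestly effective re-bracketing (resp.\ grafting of normal forms) for $\forall$- (resp.\ $\exists$-) introduction. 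Reading the result back as a clique of $\Tr(\denot A)$ uses the same bijection.

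The main obstacle lies in the cut rule. Semantically it is the composition of uniform clique families, and the very fact that this composition is well-defined -- uniformity being preserved under pointwise composition in $\CohL$ -- is the non-trivial point flagged in \cref{rem:composition}, which has to be imported from the folklore; once it is granted, computability is immediate, since at each finite instantiation the composite is just ordinary relational composition on a finite web. Apart from that import, the only genuinely delicate bookkeeping is the amalgamation description of $\NF$ under $\otimes$; everything else is routine.
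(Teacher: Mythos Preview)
Your proposal is correct and follows essentially the route the paper indicates: the paper does not spell out a proof of this theorem, merely pointing to the combinatorial presentation of \cref{sec:combinatorial} and to \cref{prop:cohnf-decidable} as the ingredients, and you have filled in precisely those details. Your inductive computation of $\NF(\denot A)$ matches the structural facts used elsewhere in the paper (e.g.\ $\NF(F \oplus G) \cong \NF(F) \uplus \NF(G)$ and the description of $\NF(F \otimes G)$ via unions in the proof of \cref{prop:degree-mallzero}, and the correspondence between $\NF(\forall(F))$ and $\NF(F)$ used in \cref{prop:degree-forall}); the treatment of proofs by induction on the derivation, reducing to finite instantiations via \cref{prop:instantiation-clique}, is the expected one.

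Two minor remarks. First, in the $\otimes$ case your caveat ``keeping those satisfying the initiality condition'' is harmless but in fact redundant: every jointly surjective cospan of finite coherence spaces already yields a normal form, since the union description in the proof of \cref{prop:degree-mallzero} shows minimality is automatic once no spurious points are added. Second, your remark about the cut rule importing the folklore closure of uniform families under composition (\cref{rem:composition}) is well placed, but note that for the \emph{computability} claim per se one does not strictly need it: one could alternatively compute $\denot\pi$ on the cut-free normal form of $\pi$, where the issue does not arise.
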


% TODO really last section??

\begin{theorem}\label{thm:logspace}
  Let $F : \CohI^n \to \CohI$ be a fixed $\malltwo$-definable finite normal
  functor.
  \begin{itemize}
  \item The function $\vec{X} \mapsto F(\vec{X})$ is computable in
    \emph{logarithmic space}.
  \item For any $c \sqsubset \Tr(F)$, the function $\vec{X} \mapsto c_{\vec{X}}
    \sqsubset F(\vec{X})$ is computable in \emph{logarithmic space}.
  \end{itemize}
\end{theorem}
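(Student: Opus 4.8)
The plan is to make the combinatorial description of \cref{sec:combinatorial} run ``online'' in bounded space. Since $F$ is fixed, almost all the data it involves is constant and can be hard-coded: the finite set $\NF(F)$ (\cref{prop:nf-finite}), the bound $\deg(F)$, the bounded shapes of the syntax trees of \cref{sec:combinatorial}, and the decidable relation of \cref{prop:cohnf-decidable}. The only input that varies is the tuple $\vec X = (X_1,\dots,X_n)$ of finite coherence spaces, on a read-only tape; the claim is that each computation can be redone from scratch using $\bigo{\log}$ bits of work tape. I would represent a point of $\trame{F(\vec X)}$ as an instantiated normal form: a normal form $\nofo{\vec Y}{y}\in\NF(F)$ -- a bounded-size syntax tree with binders, with at most $\deg(F)$ leaf-variables, partitioned among the finite webs $\trame{Y_i}$ -- together with a tuple of embeddings $\iota_i:Y_i\hookrightarrow X_i$, the point being $F(\iota_1,\dots,\iota_n)(y)$. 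By \cref{prop:instantiation-space} these exhaust $\trame{F(\vec X)}$, and moreover the representation is unique up to a fixed choice of $\alpha$-representatives for normal forms, because by initiality every variable of $Y_i$ occurs in $y$: the tree's shape then recovers the normal form, and its leaf labels recover the $\iota_i$.

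For the first bullet: to list the web of $F(\vec X)$ I would loop over the hard-coded $\NF(F)$ and, for each $\nofo{\vec Y}{y}$, over all tuples of maps $\iota_i:Y_i\to\trame{X_i}$; such a tuple is given by the images of constantly many variables, hence stored in $\bigo{\log}$ bits, and checking that it is an embedding (injectivity, plus coherence-reflection on the constantly many pairs of variables of each $Y_i$) only reads the input. For each valid $\vec\iota$ I would emit $F(\vec\iota)(y)$ by walking the fixed tree $y$ and printing $\iota_i(a)$ at each leaf $a$; by the uniqueness above this enumerates each point once. To list the coherence relation, I would loop over ordered pairs of points and decide coherence by a recursion on tree structure mirroring the inductive definition of the \malltwo operations on coherence spaces: at a pairing node or an $\inl$/$\inr$ node the answer is a fixed Boolean combination of the corresponding tests -- and of equality tests -- on the strictly smaller immediate subtrees; at a leaf labelled $a,a'\in\trame{X_i}$ it is one lookup of $\coh_{X_i}$ on the input (or of its De Morgan variant for a negated variable); and at a binder node it is the trace coherence of \cref{def:trace} (or, for an $\exists$, its complement). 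The recursion depth is bounded by the fixed size of the trees, each leaf test costs $\bigo{\log}$ bits, and the loop counters cost $\bigo{\log}$ bits, so it only remains to fit the binder case into the same budget.

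That is the step I expect to be the real obstacle. By \cref{def:trace}, deciding whether $\nofo{U}{u}\coh\nofo{V}{v}$ holds inside a trace $\Tr(G(\vec X,-))$ requires checking, for all coherence spaces $Z$ and all pairs of embeddings $U\hookrightarrow Z\hookleftarrow V$, a coherence in $G(\vec X,Z)$. The crucial point -- which is exactly how \cref{prop:cohnf-decidable} is proved -- is that it suffices to range over the finitely many $Z$ whose web is a quotient of $\trame{U}\uplus\trame{V}$, hence of size at most $2\deg(G)$, a constant; and this remains true uniformly in $\vec X$, since $\vec X$ only ever appears as the input coherence relations consulted at the leaves. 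For each such $Z$ (there are $\bigo{1}$ of them, each of constant size) and each of the $\bigo{1}$ relevant embedding pairs I would recurse into $G(\vec X,Z)$, carrying $Z$ on the work tape. Since the formula defining $F$ has a fixed quantifier depth, at most constantly many such test spaces are alive at once, costing $\bigo{1}$ extra space; with the $\bigo{\log}$ bits for input pointers, counters and the bounded-depth recursion stack, the coherence test -- and therefore $\vec X\mapsto F(\vec X)$ -- runs in logarithmic space. What must be checked with care is precisely this: that the procedure behind \cref{prop:cohnf-decidable} never needs to know $G(\vec X,-)$ as a whole but only queries the $\coh_{X_i}$ pointwise, so that it composes through nested quantifiers without accumulating more than constant extra space per level.

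For the second bullet, the given clique $c\sqsubset\Tr(F)$ is a subset of the fixed finite set $\NF(F)$, hence itself a constant. By \cref{prop:instantiation-clique}, $c_{\vec X}=\{\,F(\iota_1,\dots,\iota_n)(y)\mid\nofo{\vec Y}{y}\in c,\ \iota_i:Y_i\hookrightarrow X_i\,\}$, which is enumerated exactly as the web above, but with the outer loop restricted to the normal forms belonging to $c$; the logarithmic-space bookkeeping is identical.
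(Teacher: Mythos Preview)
Your argument is correct and follows the same plan as the paper: represent each point of $\trame{F(\vec X)}$ uniquely as a pair (normal form, tuple of embeddings), observe that there are $O(1)$ normal forms and that each embedding is specified by $\deg(F)=O(1)$ indices into the input webs, and enumerate in logspace. Your treatment is in fact more complete than the paper's written proof, which only spells out the enumeration of $c_{\vec X}$ and leaves the computation of $\coh_{F(\vec X)}$ implicit; your recursive descent through the formula structure, with the bounded search over test spaces of size $\leq 2\deg(G)$ at each quantifier (the content of \cref{prop:cohnf-decidable}) and the observation that constant quantifier depth keeps the stack of such test spaces $O(1)$, is exactly the right way to fill that in.
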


Since the output of a logarithmic space algorithm has polynomial size, this is
consistent with \cref{thm:polynomialgrowth}.

\section{A finite semantics of the second-order affine $\lambda$-calculus}
\label{sec:affine}

Let us discuss briefly how to adapt our finite coherence semantics from \malltwo
to its intuitionistic affine variant. Indeed, the existence of a finite
semantics for this variant is used in one of the applications to implicit
complexity mentioned in the introduction~\cite{ealreg}.

The starting point is to give a semantics of the propositional affine
$\lambda$-calculus. To do so, the natural idea is to use the well-known notion
of affine function space between coherence spaces: $X \affimplies Y = (X
\multimap Y) \with Y$, so that every clique $c \sqsubset X \affimplies Y$ can be
written uniquely as the disjoint union $c = c_1 \sqcup c_2$ of a linear function
from $X$ to $Y$ ($c_1 \sqsubset X \multimap Y$) and a constant part ($c_2
\sqsubset Y$). Composition is defined as\footnote{Fore $r \subseteq S \times T$
  and $s \subseteq S$, $r(s)$ denotes $\{y \mid \exists x \in s : (x,y) \in r
  \}$; this generalizes function composition and can be seen as a degenerate
  case of relational composition.} $c' \circ_{\mathrm{aff}} c = (c'_1 \circ c_1)
\sqcup (c'_2 \cup c'_1(c_2))$ for $c \sqsubset X \affimplies Y$ and $c'
\sqsubset Y \affimplies Z$. Coherence spaces and affine functions form a
category $\CohA$ which can also be seen as the Kleisli category for the comonad
$(- \with 1)$, via the isomorphism $(X \multimap Y) \with Y \cong (X \with 1)
\multimap Y$.

One issue is that this category does not quite fit into the definition of a
\enquote{symmetric monoidal closed category with terminal unit}, which is the
usual notion of denotational model for the affine $\lambda$-calculus. The reason
is that it does not admit a monoidal product $*$ such that, for any
coherence space $B$, $(- * B)$ is left adjoint to $(B \affimplies -)$. Let
us review the two main candidates:
\begin{itemize}
\item $* = \otimes$ fails: morally, an affine function from $A \otimes B$
  can either use both $A$ and $B$ or none of them, but not one out of the two;
\item $A * B = (A \with 1) \otimes (B \with 1)$ almost works, except for the
  fact that it is not associative\footnote{A similar problem afflicts the
    treatment of intuitionistic disjunction in domain-theoretic models.}.
\end{itemize}
To sidestep this issue, we do not attempt to interpret an affine tensor product
-- it does not exist anyway in the syntax of the affine $\lambda$-calculus (in
practice, one uses the second-order encoding\footnote{Whose interpretation in
  our coherence space semantics coincides with $(A \with 1) \otimes (B \with
  1)$.} $A \odot B = \forall X.\, (A \affimplies B \affimplies X) \affimplies
X$). Instead, we give a semantics in a \enquote{closed multicategory}. In the
absence of a well-established theory of multicategorical models of typed
$\lambda$-calculi and their extension with second-order quantification, we
merely give a concrete interpretation of second-order affine $\lambda$-terms.

An affine type in the grammar $A, B ::= X \mid A \affimplies B \mid \forall X.\,
A$ with $n$ type variables is interpreted as a normal functor $\CohI^n \to
\CohI$ by induction:
\[ \denot{A \affimplies B} = (A \with 1) \multimap B \qquad \denot{\forall X.\,
    A} = \forall(\denot{A}) \]
An affine $\lambda$-term $t : B$ with $m$ free variables $x_1 : A_1, \ldots, x_m
: A_m$ is mapped to a uniform family $\denot{t}(\vec{X}) \sqsubset
(\denot{A_1}(\vec{X}) \with 1) \otimes \ldots \otimes (\denot{A_m}(\vec{X})
\with 1) \multimap \denot{B}(\vec{X})$. The details are unsurpring and are given
in \cref{sec:appendix-affine}.
\begin{remark}
  One could try instead to use the Eilenberg--Moore category of coalgebras for
  the comonad $(- \with 1)$ instead of its Kleisli category. But in that case,
  even though the monoidal product can be defined, the function space cannot.
\end{remark}

\section{Hypercoherences}
\label{sec:hypercoh}

The remainder of the paper is dedicated to applying the ideas we developed in
coherence spaces to Ehrhard's \emph{hypercoherence}~\cite{hypercoh} model of
linear logic. To our knowledge, the treatment of quantifiers in hypercoherences
has not appeared in the literature, but it is easily constructed by analogy with
coherence spaces, as we do in \cref{sec:hypercoh-basic}. As mentioned in the
introduction, obtaining an effective model will be harder than in the case of
coherence spaces, and this is where most of our energy will be spent
(\cref{sec:hypercoh-effective}). Some proofs are in
\cref{sec:appendix-hypercoh}.

\begin{notation}
  Given a set $S$, we write $\powerfin(S)$ (resp.\ $\powerfin^*(S)$) for the set
  of finite (resp.\ finite non-empty) subsets of $S$. An alternative notation
  for $S' \in \powerfin(S)$ (resp.\ $S' \in \powerfin^*(S)$) is $S' \subsetfin
  S$ (resp.\ $S' \subsetfin^* S$).
\end{notation}

Recall that a \emph{hypercoherence} $X$ is a pair $(|X|,\hc(X))$ where $|X|$ is
a set and $\hc(X) \subseteq \powerfin^*(S)$ contains all singletons ($\forall x
\in |X|,\, \{x\} \in \hc(X)$). Equivalently, one could specify a hypercoherence
$X$ by the data of $|X|$ and $\hcns(X) = \hc(X) \setminus \{\{x\} \mid x \in
|X|\}$. A subset $c$ of $|X|$ is a \emph{clique} of $X$ (as in coherence spaces,
this is denoted $c \sqsubset X$) if $\powerfin^*(c) \subseteq \hc(X)$.

Hypercoherences support the following operations:
\begin{itemize}
\item $X^\perp = (|X|, \powerfin^*(X) \setminus \hcns(X))$
\item $X \otimes Y = (|X| \times |Y|, \{ S \mid \pi_1(S) \in \hc(X) \land
  \pi_2(S) \in \hc(Y) \})$
\item $X \oplus Y = (|X| + |Y|, \inl(\hc(X)) \cup \inr(\hc(Y))$
\end{itemize}
As in coherence spaces, one builds a category $\HCohL$ whose objects are
hypercoherences and whose morphisms between $X$ and $Y$ are the cliques of $X
\multimap Y = (X \otimes Y^\perp)^\perp$. These morphisms are composed by
relational composition.

\subsection{Variable and second-order hypercoherences}
\label{sec:hypercoh-basic}

We follow the recipe of coherence spaces to extend this model to \malltwo.

An \emph{embedding} $X \hookrightarrow Y$ of hypercoherences is an injection
$|X| \hookrightarrow |Y|$ which preserves both coherence and incoherence of
subsets (equivalently, the graph of the injection is both a morphism $X
\multimap Y$ and a morphism $X^\perp \multimap Y^\perp$). (Similarly, if $S
\subseteq |X|$, the sub-hypercoherence of $X$ \emph{induced by $S$} is $(S,
\powerfin^*(S) \cap \hc(X))$.)

\begin{definition}
  The category $\HCohI$ has hypercoherences as objects, and embeddings as
  morphisms. A $n$-parameter \emph{variable hypercoherence} is a normal functor
  $\HCohI^n \to \HCohI$.

  A variable hypercoherence $F$ is \emph{(weakly) finite} if $\NF(F)$ is finite
  -- as in coherence spaces, $\NF(F)$ is defined from the presheaf $|F|$. Its
  \emph{degree} $\deg(F)$ is defined as in \cref{def:finite-degree}.
\end{definition}
\begin{definition}
  Let $F : \HCohI^n \to \HCohI$ be a variable hypercoherence. A family $c_{X_1,
    \ldots, X_n} \sqsubset F(X_1, \ldots, X_n)$ is \emph{uniform} if for any
  embeddings $\iota_i : X_i \hookrightarrow Y_i$ ($i \in \{1, \ldots, n\}$),
  $c_{X_1, \ldots, X_n} = F(\iota_1, \ldots, \iota_n)^{-1}(c_{Y_1, \ldots,
    Y_n})$.
\end{definition}
\begin{proposition}
  For all $n \in \mathbb{N}$, the $n$-parameter variable hypercoherences and
  uniform families of cliques form a category $\HCohL(n)$ which is a model of
  propositional \mall.
\end{proposition}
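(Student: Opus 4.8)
The statement asserts that $\HCohL(n)$ is a model of propositional \mall, so the task breaks into three parts: (i) that $n$-parameter variable hypercoherences and uniform clique families genuinely form a category; (ii) that this category carries the structure needed to interpret the multiplicative-additive connectives and their units; and (iii) that the interpretation validates the \mall\ proof rules. The plan is to mirror, essentially verbatim, the development already carried out for coherence spaces in \cref{sec:normal-functors}, since every ingredient there was stated in enough generality to transpose. First I would observe that normal functors $\HCohI^n \to \HCohI$ compose (normality being a colimit/pullback preservation condition, hence closed under composition), so they form a category; the genuinely delicate point, exactly as in \cref{rem:composition}, is that pointwise composition of uniform clique families is again uniform, and I would invoke the same folklore argument (the stability property of morphisms in $\HCohL$, which holds for the same reason as in $\CohL$) to secure this. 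With composition in hand, identities being the diagonal families, $\HCohL(n)$ is a category.

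For the connectives: the pointwise operations $F^\perp$, $F \otimes G$, $F \oplus G$ on variable hypercoherences are defined by applying the hypercoherence operations $(-)^\perp, \otimes, \oplus$ fibrewise, and I would check these preserve normality. Preservation of filtered colimits is immediate since the operations on hypercoherences are themselves finitary/colimit-friendly at the level of webs; preservation of finite pullbacks is the analogue of \cref{prop:instantiation-space}-style reasoning and follows because the web operations are exactly products and coproducts of sets, which preserve pullbacks, while the coherence data is determined pointwise. The units $1, \bot$ (the singleton hypercoherence) and $\top, 0$ (the empty hypercoherence) are constant normal functors. I would then note that, just as \cref{def:trace} and \cref{prop:instantiation-clique} build the trace coherence space and identify its cliques with uniform families, the same construction applied to a variable hypercoherence $F : \HCohI^{n+1} \to \HCohI$ yields a trace \emph{hypercoherence} $\Tr(F)$ — the $n$-ary coherence data being transported from the $F(\vec Z)$'s — and $\forall(F)(X_1,\dots,X_n) = \Tr(F(X_1,\dots,X_n,-))$ extends to a normal functor $\HCohI^n \to \HCohI$; this handles the second-order fragment, though for the present proposition (propositional \mall) it is not strictly needed and I would only mention it in passing.

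Finally, to see that the structure is a model of \mall, I would verify it is a $*$-autonomous category with finite products and coproducts. Linear distributivity, the involutivity of $(-)^\perp$, and the adjunction $\homset{F \otimes G}{H} \cong \homset{F}{G^\perp \parr H}$ all reduce, via \cref{prop:instantiation-clique} and its hypercoherence analogue, to the corresponding facts in $\HCohL$ applied fibrewise; since $\HCohL$ is a known model of propositional \mall, these lift to uniform families. Products and coproducts are computed by $\with$ and $\oplus$ componentwise, with the projections and injections given by the obvious uniform families. The main obstacle, as flagged above, is the closure of uniform families under composition (\cref{rem:composition}): this is the one step that is not a routine fibrewise transfer, because the relevant square cannot be pasted formally. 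I expect that the coherence-space argument goes through for hypercoherences with only cosmetic changes, since it rests on a stability property — the fact that a morphism's behaviour on a clique is determined by its behaviour on the "minimal" sub-cliques — which hypercoherences enjoy just as coherence spaces do; but making this precise is where the real content of the proof lies, and it is the part I would write out in full in the appendix rather than leaving to the reader.
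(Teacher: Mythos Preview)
Your proposal is correct and follows essentially the same approach as the paper: the paper's own treatment is a one-sentence remark that the non-trivial point is closure of uniform families under composition (as in \cref{rem:composition}), and that because morphisms in $\HCohL$ enjoy the stability property, the coherence-space argument applies \emph{mutatis mutandis}. You identify the same crux and the same ingredient (stability), and your additional discussion of the \mall\ structure merely spells out routine details the paper leaves implicit.
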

The $n=0$ case is $\HCohL(0) = \HCohL$. As we saw in \cref{rem:composition}, it
is not quite trivial to show that $\HCohL(n)$ is a category, but since the
morphisms in $\HCohL$ enjoy the domain-theoretic \emph{stability} property, the
proof used in the case of coherence spaces applies \emph{mutatis mutandis}.

Now we wish to define a notion of trace $\Tr(F)$ of a normal functor $F$ to
interpret $\forall$. The fundamental thing to ensure is the adjunction (usual in
categorical semantics)
\[ \HCohL(n)(A, \forall(F)) \cong \HCohL(n+1)(\Const(A),F) \] where $\Const$ is
a \enquote{weakening} map sending a $n$-parameter hypercoherence to a
$(n+1)$-parameter one. This reduces to the case\footnote{If one were to
  specialize this further to $\HCohL(0)(1, \Tr(F)) \cong \HCohL(1)(1,F)$ --
  i.e.\ the cliques of $\Tr(F)$ are in bijection with the variable cliques of
  $F$ -- then it would not determine $\Tr(F)$ uniquely, unlike the case of
  coherence spaces: in general, the structure of a hypercoherence cannot be
  fully recovered from its domain of cliques. That said, the adjunction defining
  $\Tr(-)$ can be derived from this bijection between cliques together with
  $\Tr(\Const(A) \multimap F) \cong A \multimap \Tr(F)$ (i.e.\ commutation
  between $\forall$ and $\parr$).} $n=0$: $\HCohL(A, \Tr(F)) \cong
\HCohL(1)(\Const(A),F)$. So $\Tr$, being an adjoint functor, is unique up to
natural isomorphism\footnote{An isomorphism $X \cong Y$ in $\HCohL$ is just a
  bijection from $|X|$ to $|Y|$ sending $\hc(X)$ to $\hc(Y)$.} (if it exists);
we can just state the definition -- mimicking coherence spaces -- and check that
the adjunction holds.

\begin{definition}
  For $x \in |F(X)|$, $\NF(x;F)$ denotes (the isomorphism class of) the normal
  form of $x$. For $S \subseteq |F(X)|$, $\NF(S;F)$ is the direct image of $S$
  by the function $\NF(-;F)$.
\end{definition}

\begin{definition}
  Let $F$ be a one-parameter variable hypercoherence. $N \subsetfin^* \NF(F)$ is
  said to be coherent when for any $X$ and any $S \subsetfin^* |F(X)|$, if
  $\NF(S;F) = N$ then $S \in \hc(F(X))$. The set of coherent sets of normal
  forms is denoted by $\hc\NF(F)$.

  The \emph{trace} of $F$ (notation: $\Tr(F)$) is the hypercoherence defined by
  \[ |\Tr(F)| = \{x \in \NF(F) \mid \{x\} \in \hc\NF(F)\}\quad \hc(\Tr(F)) =
    \powerfin^*(\Tr(F)) \cap \hc\NF(F)\]
  and from this we define $\forall(G) : \vec{X} \mapsto \Tr(G(\vec{X},-))$ for
  $G : \HCohI^{n+1} \to \HCohI$.
\end{definition}

One can then routinely check that the adjunction holds.
% TODO: faire en appendix ?
We therefore conclude:
\begin{theorem}
  Variable hypercoherences and uniform families of cliques form a semantics of
  second-order linear logic. Furthermore, by restricting to weakly finite
  variable hypercoherences, we obtain a finite semantics of \malltwo.
\end{theorem}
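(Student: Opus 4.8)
The plan is to assemble this result from the pieces developed for coherence spaces, checking at each point that the stability-based arguments carry over. First I would verify the soundness of the model for full second-order linear logic. The propositional fragment is handled by $\HCohL(n)$ being a model of \mall\ (already stated), together with Ehrhard's original treatment of the exponentials in $\HCohL = \HCohL(0)$, lifted pointwise to each $\HCohL(n)$. For the quantifiers, the crucial input is the adjunction $\HCohL(n)(A,\forall(F)) \cong \HCohL(n+1)(\Const(A),F)$, which the excerpt instructs us to "routinely check"; concretely, I would unfold both sides via \cref{prop:instantiation-clique}-style bijections between uniform families and cliques of traces — the one subtlety, flagged in the footnote, being that a hypercoherence is not determined by its clique domain, so one must verify not just that the underlying sets of cliques match but that the coherence relation on $\Tr(F)$ is exactly the one forced by the adjunction. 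This amounts to checking that $N \in \hc(\Tr(F))$ iff the corresponding family of cliques is coherent in every instance, which is immediate from the definition of $\hc\NF(F)$. The naturality squares and the triangle identities then follow by the same diagram chases as in $\CohL$, using stability to justify the horizontal pasting that \cref{rem:composition} warns about.

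Next I would establish the finiteness half. The key observation is \cref{prop:nf-finite}'s analogue: weak finiteness is \emph{defined} here as $\NF(F)$ being finite, so I must show that this property is preserved by all \malltwo\ connectives. For $(-)^\perp$ this is trivial since $\NF(F^\perp)$ has the same underlying normal forms as $\NF(F)$ (negation only flips the coherence data, which does not affect the category of elements of the web presheaf $|F|$). For $\otimes$ and $\oplus$, the normal forms of $F \otimes G$ (resp.\ $F \oplus G$) are built from pairs (resp.\ disjoint sums) of normal forms of $F$ and $G$ — exactly as in coherence spaces, since the web operations $|X \otimes Y| = |X| \times |Y|$ and $|X \oplus Y| = |X| \uplus |Y|$ are literally the same — so $\NF$ stays finite. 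For $\forall(G)$ with $G : \HCohI^{n+1} \to \HCohI$ weakly finite, one has $|\Tr(G)| \subseteq \NF(G)$ by construction, and $\NF(\forall(G))$ is built from the traces $\Tr(G(\vec X,-))$, so the bound $\deg(\forall(G)) \leq \deg(G)$ and finiteness of $\NF$ transfer by the same argument as \cref{prop:degree-forall} and \cref{thm:closure-malltwo}. Since a weakly finite variable hypercoherence has $\trame{\Tr(F)} \subseteq \NF(F)$ finite, it has finitely many uniform families of cliques (the hypercoherence analogue of \cref{prop:finite-cliques}), giving finite hom-sets; and non-triviality is witnessed exactly as in the introduction by $1 \oplus 1$, whose denotation is a two-point hypercoherence distinguishing $\inl$ from $\inr$.

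The main obstacle I anticipate is not any single hard lemma but the accumulation of verifications needed to confirm that hypercoherences genuinely behave like coherence spaces throughout — in particular, pinning down the trace construction. The delicate point is the one isolated in the footnote: because the passage from a hypercoherence to its set of cliques is not faithful, the "mimicking coherence spaces" strategy for \emph{defining} $\Tr(F)$ cannot be justified merely by matching cliques, and one genuinely needs the stronger adjunction $\HCohL(n)(A,\forall(F)) \cong \HCohL(n+1)(\Const(A),F)$ together with $\Tr(\Const(A) \multimap F) \cong A \multimap \Tr(F)$ (commutation of $\forall$ with $\parr$) to characterize it. I would therefore spend most of the effort verifying that the definition of $\hc\NF(F)$ given above makes this adjunction hold on the nose — the inclusion $\hc(\Tr(F)) \subseteq \powerfin^*(\Tr(F)) \cap \hc\NF(F)$ is by definition, and the reverse direction requires showing that a coherent set of self-coherent normal forms does assemble into a coherent family in every instantiation, which follows from \cref{prop:instantiation-space} (every point of $F(X)$ is a functorial image of a normal form) combined with the fact that embeddings preserve and reflect coherence of finite subsets. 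Everything else — soundness of the exponential rules, the cut-elimination invariance — is inherited pointwise from Ehrhard's propositional model and the stability argument that makes $\HCohL(n)$ a category.
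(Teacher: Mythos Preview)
Your proposal is correct and follows essentially the same approach as the paper: the paper gives no detailed proof here, merely stating that one ``routinely check[s] that the adjunction holds'' and then concludes the theorem, leaving the finiteness to be read off from the coherence-space arguments of \cref{sec:coh-finiteness}. Your plan is a faithful and accurate expansion of exactly those routine checks, including the correct identification of the one genuinely delicate point (that the trace is not determined by its cliques alone, so the full adjunction must be verified) which the paper flags in its footnote.
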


\subsection{Strong finiteness and effectivity}
\label{sec:hypercoh-effective}

Unfortunately, the model of weakly finite variable hypercoherences is \emph{not
  effective}. Let us give an exemple: let $f : \mathbb{N} \to \{0,1\}$ be any
function, and
\[ F_f(X) = (|X|, \{S \subsetfin^* |X| \mid f(\Card(S)) = 1 \}) \]
This map on objects can be extended to a functor $F_f : \HCohI \to \HCohI$
which is in fact a weakly finite variable hypercoherence. But if $f$ is
uncomputable, then $X \mapsto \hc(F_f(X))$ also is.

We are therefore seeking an additional condition on variable hypercoherences
which would both guarantee effectivity and be preserved by \malltwo connectives.
More precisely, our goal is to exhibit a class of variable hypercoherences $F$
such that $\Gamma(F)$ can be described canonically by some finite data -- just
as $\NF(F)$ fulfills this role for $|F|$ when $F$ is weakly finite. This is the
purpose of the following defintions.

\begin{definition}
  Let $F$ be a $n$-parameter variable hypercoherence and $1 \leq k \leq n$.

  For any $P \subseteq \NF(F)$, a \emph{projection from $P$} on the $k$-th
  parameter is a dependent function
  \[ f : (\nofo{X_1, \ldots, X_n}{x} \in P) \to |X_k|\]
  i.e.\ it is a function $f$ defined on $P$ such that $f(\nofo{X_1, \ldots,
    X_n}{x}) \in |X_k|$. The set of projections from $P$ on the $k$-th parameter
  is written $\Proj_k(P)$.

  Any $f \in \Proj_k(P)$ induces a family of functions indexed by
  hypercoherences $Y_1, \ldots, Y_n$
  \[f^F_{Y_1, \ldots, Y_n} : \{y \in |F(Y_1, \ldots, Y_n)| \mid
    \NF(y;F) \in P\} \to |Y_k|\]
  as follows: let $\NF(y;F) = \nofo{X_1, \ldots, X_n}{x} \in P$, then
  $F(\iota_1, \ldots, \iota_n)(x) = y$ for some (unique) embeddings $\iota_i :
  X_i \hookrightarrow Y_i$; one then takes $f^F_{Y_1, \ldots, Y_n}(y)$ to be
  $\iota_k(f(x))$.

  By direct image, this induces a family of functions
  \[f^F_{Y_1, \ldots, Y_n} : \{S \subsetfin^* |F(Y_1, \ldots, Y_n)| \mid
    \NF(S;F) = P \} \to \powerfin^*(|Y_k|)\]
  Note that this could be extended to \enquote{$\NF(S;F) \subseteq P$} but most
  uses of this direct image will happen with $\NF(S;F) = P$.
  
  We also write $\Proj(P) = \Proj_1(P) \cup \ldots \cup \Proj_n(P)$.
\end{definition}

\begin{notation}
  Given a hypercoherence $X$ and $S \subsetfin^* |X|$, we define $l_X(S)$ to be
  $\odot$ if $S$ is a singleton, $\ominus$ if $S \in \hcns(X)$, $\oplus$ if $S
  \in \hcns(X^\perp)$.

  (Mnemonic: $l_{A \oplus B}(S \cup T) = \oplus$ for all non-empty $S \subseteq
  |A|$ and $T \subseteq |B|$.)
\end{notation}

Our eventual goal is to specify variable hypercoherences using projections as
follows: given $S \subsetfin^* |F(\vec{Y})|$, $l_{F(\vec{Y})}(S)$ should
be determined by the $l_{Y_k}(f^F_{\vec{Y}}(S))$ for $f \in \Proj_k(\NF(S;F))$,
$k \in \{1, \ldots, n\}$, following a sort of \enquote{truth table}. However, to
make this specification canonical, one should ensure that all \enquote{rows} in
the table are meaningful, i.e.\ serve to determine the (in)coherence of at least
one $S \subsetfin^* |F(\vec{Y})|$ for some $\vec{Y}$. This is the purpose of the
following.

\begin{definition}
  Let $F$ be a $n$-parameter variable hypercoherence, and $P \subseteq \NF(F)$.
  We write $\Val(P)$ for the set of \emph{valuations} on $P$, that is, of
  functions $\Proj(P) \to \{\odot, \ominus, \oplus\}$.

  A valuation $v$ on $P$ is \emph{possible} when there exist hypercoherences
  $Y_1, \ldots, Y_n$ and a subset $S \subseteq |F(Y_1, \ldots, Y_n)|$ such that
  $\NF(S;F) = P$ and $v(f) = l_{Y_k}(f^F_{Y_1,\ldots,Y_n}(S))$ for $f \in
  \Proj_k(P,F)$. The set of possible valuations on $P$ is denoted $\PVal(P)$.
\end{definition}

\begin{definition}
  A \emph{specification by projections} of $F$ is a dependent function
  \[ \sigma_F : (P \in \powerfin^*(\NF(F))) \to
    \PVal(P) \to \{\odot, \ominus, \oplus\} \]
  such that, for all $Y_1, \ldots, Y_n \in \HCohL$ and  $S \subsetfin^*
  |F(Y_1, \ldots, Y_n)|$,
  \[ l_{F(Y_1,\ldots,Y_n)}(S) = \sigma_F(\NF(S;F), (f \in \Proj_k(\NF(S;F))
    \mapsto l_{Y_k}(f^F_{Y_1,\ldots,Y_n}(S)))) \]

  A variable hypercoherence is \emph{strongly finite} if it is weakly finite and
  admits a specification by projections. (Note that if a specification by
  projections exists for $F$, it is unique.)
\end{definition}

To justify the terminology, observe that for a weakly finite $F$, there are
finitely many $P \subsetfin^* \NF(F)$ and the sets $\Proj_k(P,F)$ are finite.
Therefore, if $F$ admits a specification by projections, then this specification
is a finite object, and so $\Gamma(F)$ is finitely described. This notion
successfully excludes pathological examples such as our $F_f$ above:

\begin{proposition}
  If $F : \HCohI^n \to \HCohI$ is strongly finite, then $\vec{X} \mapsto
  F(\vec{X})$ is computable.
\end{proposition}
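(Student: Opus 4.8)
The goal is, for a \emph{fixed} strongly finite $F$, to describe an algorithm that on input finite hypercoherences $X_1, \ldots, X_n$ outputs a presentation of the (finite, by weak finiteness) hypercoherence $F(\vec{X})$, i.e.\ its web $|F(\vec{X})|$ together with the predicate $l_{F(\vec X)}$ (equivalently $\hc(F(\vec X))$). Since $F$ is fixed, the algorithm may treat as hard-coded constants the finite set $\NF(F)$ --- given, say, by a chosen representative $(\vec{Y}^{(j)}, y^{(j)})$ of each isomorphism class $j$ together with its automorphism group $\mathrm{Aut}(j)$ in $\El(|F|)$ --- and the specification by projections $\sigma_F$, which by weak finiteness is a genuinely finite object.

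\emph{Computing the web.} By the $\HCohI$-analogue of \cref{prop:instantiation-space} (the web functor $|F|$ is a normal presheaf, so Girard's normal form theorem applies), every point of $|F(\vec{X})|$ is of the form $F(\iota_1, \ldots, \iota_n)(y^{(j)})$ for some $j \in \NF(F)$ and embeddings $\iota_k : Y^{(j)}_k \hookrightarrow X_k$, and two such data yield the same point exactly when they differ by precomposition with an element of $\mathrm{Aut}(j)$. Weak finiteness forces $\deg(F) < \infty$, so each $Y^{(j)}_k$ is finite; hence for finite $X_k$ there are only finitely many embedding tuples, which we can enumerate and then quotient by the (finite, hence decidable) action of the automorphism groups. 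This produces $|F(\vec{X})|$ as an explicit finite set in which each element $x$ carries a canonical representative $(\NF(x;F), \vec{\iota}_x)$; in particular $\NF(x;F)$ and $\vec{\iota}_x$ are computable from $x$, and equality of points is decidable.

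\emph{Computing the coherence.} Fix $S \subsetfin^* |F(\vec{X})|$. Using the previous step, compute $P := \NF(S;F) \in \powerfin^*(\NF(F))$. The sets $\Proj_k(P)$ are finite and enumerable (a product of finite webs indexed by the finite set $P$), and for each $f \in \Proj_k(P)$ the set $f^F_{\vec{X}}(S) = \{ \iota_{x,k}(f(\NF(x;F))) \mid x \in S \} \subsetfin^* |X_k|$ is computable from the representatives above; since $X_k$ is part of the input, $l_{X_k}(f^F_{\vec X}(S))$ is then computable. This yields the valuation $v \in \Val(P)$ sending $f \in \Proj_k(P)$ to $l_{X_k}(f^F_{\vec{X}}(S))$, and $v \in \PVal(P)$ by construction (it is realized by $S$ itself). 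By the defining property of a specification by projections, $l_{F(\vec{X})}(S) = \sigma_F(P, v)$, obtained by a table lookup in the hard-coded $\sigma_F$. Recovering $\hc(F(\vec{X})) = \{ S \mid l_{F(\vec{X})}(S) \in \{\odot, \ominus\} \}$ completes the computation.

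\emph{The main obstacle.} The delicate part is the web step: one must pin down a canonical finite combinatorial description of $|F(\vec{X})|$ with decidable equality --- the hypercoherence counterpart of the syntactic presentation of \cref{sec:combinatorial} --- which rests on the \emph{strong} initiality of normal forms in their slices of $\El(|F|)$ (as opposed to mere weak initiality), for this is exactly what makes the representatives $(\NF(x;F), \vec{\iota}_x)$ canonical and the automorphism action the precise identification of points. Once the web is handled, the coherence step is a direct unwinding of the definition of $\sigma_F$; all that makes it work is precisely that a specification by projections \emph{exists} (so $l_{F(\vec X)}(S)$ is a function of $P$ and the induced valuation) and is \emph{finite} (so it can be stored and queried).
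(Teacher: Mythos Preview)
Your proof is correct and follows the same idea as the paper --- treat the finite object $\sigma_F$ as a precomputed lookup table, compute the valuation induced by the projections, and read off $l_{F(\vec X)}(S)$ --- though you spell out in detail both the web computation (via normal forms and the automorphism quotient) and the coherence computation, whereas the paper's two-line argument leaves the web step implicit and focuses only on the table lookup.
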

\begin{proof}
  All projections are computable, so it suffices to precompute a table encoding
  the specification, and to look up the relevant row.
\end{proof}

\begin{remark}
  Our definition of specification by projections is very restrictive. For
  instance, if $\nofo{\varnothing, \ldots, \varnothing}{x} \in P \subsetfin^*
  \NF(F)$ and $\Card(P) \geq 2$, then the $S \subsetfin^* |F(\vec{Y})|$ such
  that $\NF(S;F) = P$ are either all coherent or all incoherent, independently
  of $\vec{Y}$.
\end{remark}

We still need to show that strongly finite variable hypercoherences are closed
under all \malltwo connectives, and that the interpretations of \malltwo
formulas and proofs can be effectively computed. The main lemma is:

\begin{lemma}
  \label{lem:spec}
  There exists a criterion to determine, given $P \subsetfin^* \NF(F)$ and $v
  \in \Val(P)$, whether the valuation $v$ is possible; this criterion is
  effective when $F$ is weakly finite. Note that $F$ is not part of the input;
  that means that $\PVal(P)$ depends only on $P$, not on any other information
  on $F$.
\end{lemma}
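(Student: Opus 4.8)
The plan is to characterize possible valuations intrinsically, without reference to $F$, by observing that the "witness" data $(Y_1,\dots,Y_n,S)$ required by the definition of $\PVal(P)$ can always be taken in a \emph{canonical minimal form} controlled only by $P$. The point is this: a set $P = \{\nofo{X^{(1)}_1,\dots,X^{(1)}_n}{x^{(1)}},\dots,\nofo{X^{(m)}_1,\dots,X^{(m)}_n}{x^{(m)}}\} \subsetfin^* \NF(F)$, together with a valuation $v \in \Val(P)$, already contains enough information to reconstruct a \enquote{free} candidate pair: take $Y_k$ to be a quotient of the disjoint union $\bigsqcup_j |X^{(j)}_k|$ in which two variables are identified exactly when forced to be, and then $S$ is the image of $\{x^{(1)},\dots,x^{(m)}\}$ under the evident embeddings. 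The identifications are governed precisely by $v$: if $f,g \in \Proj_k(P)$ and $v$ assigns $\odot$ to the set they generate (here I mean the projection that on $\nofo{X_1,\dots,X_n}{x}$ picks the two-element or one-element set $\{f(x),g(x)\}$ — one must first set up the bookkeeping so that valuations really are evaluated on \emph{pairs} of projections, or reformulate slightly), then $f$ and $g$ must land on the same point of $Y_k$. So the first step is to make this \enquote{smallest hypercoherence realizing $v$} construction precise, including defining the coherence relation $\hc(Y_k)$ on it from the $\ominus/\oplus$ values that $v$ takes.

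Second, I would prove the key equivalence: \emph{$v$ is possible if and only if this canonical candidate $(Y_1,\dots,Y_n,S_v)$ actually witnesses $v$}, i.e.\ $\NF(S_v;F) = P$ and $l_{Y_k}(f^F_{\vec Y}(S_v)) = v(f)$ for all $f$. One direction is immediate by definition. For the converse, if some witness $(Z_1,\dots,Z_n,T)$ exists, one shows there are embeddings $Y_k \hookrightarrow Z_k$ carrying $S_v$ to $T$ — this uses the normal form theorem (initiality of normal forms in their slice categories) to produce the comparison maps on each $|X^{(j)}_k|$, then checks they glue along the identifications to a well-defined embedding, then checks $S_v$ has the right image — and finally uses functoriality/uniformity-type arguments to transport the equalities $l_{Z_k}(\dots) = v(f)$ back down to $Y_k$. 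A subtle point is verifying $\NF(S_v;F) = P$: one inclusion needs that no $x^{(j)}$ degenerates (i.e.\ the identifications in $Y_k$ don't collapse $x^{(j)}$ to something with a smaller normal form), which again is where initiality of normal forms and the minimality of the quotient are used together.

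Third, once the equivalence is in hand, decidability and effectivity (when $F$ is weakly finite) follow by inspection: the canonical $(Y_1,\dots,Y_n,S_v)$ is a finite object built uniformly from $P$ and $v$, which are themselves finite (since $\NF(F)$ is finite and each $\Proj_k(P)$ is finite), and the conditions \enquote{$\NF(S_v;F)=P$} and \enquote{$l_{Y_k}(f^F_{\vec Y}(S_v))=v(f)$} are decidable — they only require computing finitely many normal forms and finitely many values of $l$ and of the projection maps, all of which are computable for a $\malltwo$-definable $F$ by the combinatorial description of \cref{sec:combinatorial} (\cref{prop:cohnf-decidable}), or can be read off from the weakly finite data directly. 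The fact that $\PVal(P)$ depends only on $P$ is then automatic: the criterion \enquote{the canonical realizer of $v$ works} was phrased purely in terms of $P$ and $v$.

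The main obstacle I expect is the second step — specifically, showing that the canonical candidate is \emph{universal} among witnesses, i.e.\ that any witness factors through it via embeddings. The difficulty is that \enquote{$S_v$ is minimal} is a statement about a \emph{set} of points simultaneously, not a single point, so one cannot just invoke the slice-category initiality of one normal form; one has to assemble the individual comparison embeddings $X^{(j)}_k \hookrightarrow Z_k$ into a single embedding out of the quotient $Y_k$, and the compatibility needed for this gluing is exactly what the identifications dictated by $v$ are designed to enforce — so the argument has a mildly circular flavor that needs care to untangle. A secondary delicate point is making sure the reformulation of valuations (on projections vs.\ on pairs of projections, and how singletons versus two-element sets are handled) is coherent enough that \enquote{$l_{Y_k}$ of the generated set} is even well-defined from $v$; getting the combinatorial setup right is where most of the genuine work lies, even though no single estimate is hard.
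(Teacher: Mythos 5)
There is a genuine gap, and it is located exactly where you place your ``key equivalence''. The definition of a possible valuation quantifies over \emph{all} $S \subsetfin^* |F(\vec Y)|$ with $\NF(S;F)=P$, and such an $S$ may contain several distinct points sharing the same normal form (the condition is that the direct image of $S$ under $\NF(-;F)$ equals $P$, not that $S$ is in bijection with $P$). Your canonical candidate takes exactly one copy of each $\nofo{X^{(j)}_1,\dots,X^{(j)}_n}{x^{(j)}}$ and one point $x^{(j)}$ per element of $P$, so it can never account for these multiplicities. Concretely, let $F$ be the interpretation of a type variable (the identity functor on $\HCohI$); then $\NF(F)$ consists of the single normal form $p=\nofo{\mathbf{1}}{\ast}$, $P=\{p\}$, and there is a unique projection $f$ with $f(p)=\ast$. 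Taking $Y$ with two points forming a strictly coherent pair and $S$ both points gives $\NF(S;F)=P$ and $f^F_Y(S)=S$, so the valuation $v(f)=\ominus$ is possible (and likewise $\oplus$); but your candidate has $S_v$ a singleton, hence realizes only $\odot$, and your criterion would declare $\ominus$ and $\oplus$ impossible. This is not a corner case: these multi-copy valuations are precisely what the specification of the functor $(X_1,\dots,X_n)\mapsto X_k$ in \cref{prop:constant-spec} must use, so a criterion that excludes them cannot support the rest of the development. Repairing the approach requires witnesses built from several copies of the normal forms in $P$, together with an argument bounding (or canonically organizing) how many copies and which identification patterns must be examined --- and that is exactly the content your plan does not supply.

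A secondary, independent problem is the direction of your universality argument. An arbitrary witness may identify \emph{more} points than the identifications forced by the $\odot$-values of $v$, so the natural comparison map goes \emph{out of} your minimal quotient and is generally non-injective; there is no embedding $Y_k \hookrightarrow Z_k$ carrying $S_v$ to $T$ (already for cardinality reasons in the example above). What one can hope to transport along this non-injective comparison is information about which sets $A_f=f^F_{\vec Y}(S_v)$ are singletons and which labels are forced because $A_f$ lies inside the image of a single copy (where the embedding condition pins down $l$ via the corresponding $\hc(X^{(j)}_k)$); but that is a different argument from the factorization you describe, and it still needs the coherence structure of $Y_k$ to be defined on all finite subsets, not only on the sets $A_f$, in a way compatible with the embedding constraints. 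So even setting aside multiplicities, step two would need to be restructured rather than merely tightened.
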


\begin{proposition}
  \label{prop:constant-spec}
  The normal functors $(X_1, \ldots, X_k) \mapsto A$ and $(X_1, \ldots,
  X_n) \mapsto X_k$ admit specifications by projections (and are therefore
  strongly finite).
\end{proposition}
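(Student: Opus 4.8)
The plan is to compute $\NF(F)$ and the projection sets $\Proj_k(P)$ explicitly for each of the two functors, and to observe that they are small enough that the value $l_{F(\vec Y)}(S)$ is completely determined by $\NF(S;F)$ together with the induced valuation; defining $\sigma_F$ by that forced value then makes the required identity hold by construction, while weak finiteness and effectivity come for free.

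First I would treat the constant functor $F$ at a fixed (finite) hypercoherence $A$, i.e.\ $F(\vec X) = A$ and $F(\vec\iota) = \mathrm{id}_A$. Since every embedding acts as the identity, the slice of the category of elements of $|F|$ over a pair $(\vec Y, a)$ has initial object $\big((\varnothing, \ldots, \varnothing), a\big)$, so $\NF(F) \cong |A|$ via $a \mapsto \nofo{\varnothing, \ldots, \varnothing}{a}$; in particular $F$ is weakly finite, and for $S \subsetfin^* |F(\vec Y)| = |A|$ the set $\NF(S;F)$ is just the image of $S$ under this bijection, hence determines $S$. Because the parameter webs in every normal form of $F$ are empty, $\Proj_k(P) = \varnothing$ for all $k$ and all non-empty $P$, so $\Val(P) = \PVal(P)$ is the one-element set containing the empty valuation $\varepsilon_P$. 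I then set $\sigma_F(P)(\varepsilon_P) = l_A(S_P)$, where $S_P$ is the subset of $|A|$ corresponding to $P$; the defining identity holds since $l_{F(\vec Y)}(S) = l_A(S)$ and $S = S_{\NF(S;F)}$.

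Next, for the projection functor $F = \pi_k : (X_1, \ldots, X_n) \mapsto X_k$, the presheaf $|F|$ is $\vec Y \mapsto |Y_k|$, and the initial object of the slice over $(\vec Y, x)$ is the tuple whose $k$-th component is the one-point sub-hypercoherence of $Y_k$ on $\{x\}$ and whose other components are $\varnothing$. All of these are isomorphic, so $\NF(\pi_k)$ is a singleton $\{\xi\}$ -- the \enquote{variable} normal form -- and the only non-empty $P$ is $\{\xi\}$. We have $\Proj_j(\{\xi\}) = \varnothing$ for $j \neq k$, while $\Proj_k(\{\xi\})$ has exactly one element $p$; unwinding the definitions shows that the induced map $p^F_{\vec Y}$ is the identity of $|Y_k|$, so that $l_{Y_k}(p^F_{\vec Y}(S)) = l_{Y_k}(S) = l_{F(\vec Y)}(S)$. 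I then set $\sigma_{\pi_k}(\{\xi\})(v) = v(p)$; the defining identity holds by construction, and a one-point hypercoherence, a coherent pair and an incoherent pair witness that the three valuations $v(p) \in \{\odot, \ominus, \oplus\}$ are all possible, so that $\sigma_{\pi_k}$ is indeed defined on all of $\PVal(\{\xi\})$.

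In both cases $\sigma_F$ is a finite table computable from $A$ (respectively from no data at all), and the functors are weakly finite, hence strongly finite as claimed. The only part that takes any care is the explicit identification of the normal forms -- and, in the projection case, of the induced family $p^F_{\vec Y}$ -- after which nothing but bookkeeping remains; I do not expect a genuine obstacle here.
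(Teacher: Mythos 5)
Your proposal is correct: identifying $\NF$ of the constant functor with $|A|$ (all parameter webs empty, hence no projections and a unique, vacuously possible valuation) and $\NF$ of the $k$-th projection functor with a single one-point normal form whose unique projection induces the identity on $|Y_k|$, and then defining $\sigma_F$ by the forced value, is exactly the direct verification this proposition calls for, and your checks that these valuations are indeed possible and that the defining identity holds are sound. The paper relegates this proof to an appendix (omitted in this version), but your argument is the intended one and has no gaps.
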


\begin{proposition}
  \label{prop:connectives-spec}
  If the $n$-parameter variable hypercoherences $F$ and $G$ can be specified by
  projections, then it is also the case for $F^\perp$, $F \otimes G$ and $F
  \oplus G$. Furthermore, if $F$ and $G$ are finite, then $\sigma_{F^\perp}$,
  $\sigma_{F \otimes G}$ and $\sigma_{F \oplus G}$ are computable from
  $\sigma_F$ and $\sigma_G$.
\end{proposition}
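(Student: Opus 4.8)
The plan is to treat the three connectives separately, following the same pattern in each case: first describe $\NF$ of the compound functor and its projections in terms of those of the components; then read off how the label $l_{(-)}$ of the compound on a subset $S$ is determined by the labels of the components on suitable ``projections'' of $S$; then define the specification of the compound by composing $\sigma_F$ (and $\sigma_G$) with this dictionary; and finally verify the defining equation of a specification by projections. Effectivity will come for free at the end: when $F,G$ are weakly finite, all the sets in sight ($\NF$, the $\Proj_k(P,-)$, the $\Val(P)$) are finite, and $\PVal(P)$ can be enumerated effectively by \cref{lem:spec}, so the table of $\sigma$ for the compound can be filled in entrywise from the tables of $\sigma_F,\sigma_G$.

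The case $F^\perp$ is trivial: the web functor is unchanged, so $\NF(F^\perp)=\NF(F)$, $\Proj_k(P,F^\perp)=\Proj_k(P,F)$ and $f^{F^\perp}_{\vec Y}=f^F_{\vec Y}$ literally; hence $\PVal(P)$ is literally the same set. Since $(-)^\perp$ fixes singletons and swaps $\hcns$ with the $\hcns$ of the dual, $l_{F^\perp(\vec Y)}(S)$ is obtained from $l_{F(\vec Y)}(S)$ by the involution of $\{\odot,\ominus,\oplus\}$ exchanging $\ominus$ and $\oplus$ and fixing $\odot$, so I set $\sigma_{F^\perp}(P,v)$ to be that involution applied to $\sigma_F(P,v)$. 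For $F\oplus G$ the web functor is the coproduct, so $\NF(F\oplus G)=\inl(\NF(F))\sqcup\inr(\NF(G))$, with projections and induced functions inherited summand-wise. Writing $P=\inl(P_F)\sqcup\inr(P_G)$: if $P_F,P_G$ are both non-empty, every $S$ with $\NF(S;F\oplus G)=P$ meets both summands of the coproduct web, so $l_{(F\oplus G)(\vec Y)}(S)=\oplus$; if $P=\inl(P_F)$ then $S=\inl(S')$ and $l_{(F\oplus G)(\vec Y)}(S)=l_{F(\vec Y)}(S')$, since on the $\inl$-part the hypercoherence of $F\oplus G$ agrees with that of $F$. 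Accordingly $\sigma_{F\oplus G}(P,-)$ is the constant $\oplus$ in the first case, and $\sigma_F(P_F,-)$ transported along $\Proj_k(\inl(P_F),F\oplus G)\cong\Proj_k(P_F,F)$ in the second (symmetrically for $\inr$); that this is a genuine specification, and that $\PVal(\inl(P_F))$ matches $\PVal(P_F)$, is checked by restricting the witnessing subset along $\inl$.

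The case $F\otimes G$ is the crux. The web functor is the pointwise product, and a normal form of $F\otimes G$ is, up to isomorphism, a pair $(\nu_F,\nu_G)\in\NF(F)\times\NF(G)$ together with, in each parameter, a partial identification of the (finite, by normality) variable set of $\nu_F$ with that of $\nu_G$ respecting the two hypercoherence structures, plus a choice of hypercoherence structure on the resulting glued set extending both; there are finitely many such data, so weak finiteness is preserved. From this description one extracts, for $P\subsetfin^*\NF(F\otimes G)$, the sets $P_F,P_G$ of $F$- and $G$-components of the normal forms in $P$, and for any $S$ with $\NF(S;F\otimes G)=P$ one has $P_F=\NF(\pi_1(S);F)$ and $P_G=\NF(\pi_2(S);G)$. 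There are lift maps $\Proj_k(P_F,F)\to\Proj_k(P,F\otimes G)$ sending a chosen variable of $\nu_F$ to its image in the glued set (symmetrically for $G$), and the technical heart is the compatibility identity $\widehat f{}^{\,F\otimes G}_{\vec Y}(S)=f^F_{\vec Y}(\pi_1(S))$ for $f\in\Proj_k(P_F,F)$, proved by unwinding the realizing embeddings. Combined with the elementary computation that $l_{(F\otimes G)(\vec Y)}(S)=\max\bigl(l_{F(\vec Y)}(\pi_1(S)),\,l_{G(\vec Y)}(\pi_2(S))\bigr)$ for the order $\odot<\ominus<\oplus$ (a case split on whether each projection is a singleton, coherent, or incoherent), this leads to the definition
\[ \sigma_{F\otimes G}(P,v)=\max\bigl(\sigma_F(P_F,\,v\circ(F\text{-lift})),\ \sigma_G(P_G,\,v\circ(G\text{-lift}))\bigr). \]
The defining equation then follows from those of $\sigma_F$ and $\sigma_G$: given $\vec Y$ and $S$, the valuation on $P$ induced by $S$ restricts along the $F$-lift exactly to the valuation on $P_F$ induced by $\pi_1(S)$ (by the compatibility identity), which is possible because $\pi_1(S)$ witnesses it, so $\sigma_F$ applied to it equals $l_{F(\vec Y)}(\pi_1(S))$; likewise for $G$; the $\max$ recovers $l_{(F\otimes G)(\vec Y)}(S)$.

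I expect the main obstacles to lie entirely in the tensor case: pinning down the description of $\NF(F\otimes G)$ correctly — in particular the extra multiplicity coming from the free choice of hypercoherence structure on the ``straddling'' subsets of the glued variable sets, which has no counterpart in coherence spaces — and verifying the projection-compatibility identity. For effectivity one then proceeds as announced: when $F$ and $G$ are weakly finite, $\NF(F^\perp)$, $\NF(F\oplus G)$ and $\NF(F\otimes G)$ are finite and computable (for $\otimes$, by enumerating gluings and structures), as are the $\Proj_k(P,-)$, the lift maps, and $\Val(P)$; one enumerates $\PVal(P)$ via \cref{lem:spec} and fills the table for the compound using the formulas above together with the given tables for $\sigma_F$ and $\sigma_G$.
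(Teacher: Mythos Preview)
The paper does not actually include a proof of this proposition: the appendix section \S\ref{sec:appendix-hypercoh} announced for the omitted hypercoherence proofs is empty in the source, so there is nothing to compare against.

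That said, your proposal is correct and follows the natural line. The $(-)^\perp$ and $\oplus$ cases are straightforward, as you say. For $\otimes$, the pieces all check out: your description of $\NF(F\otimes G)$ --- pairs of normal forms plus, in each parameter, a partial identification together with a choice of hypercoherence structure on the glued web extending both --- is the right one (and the ``extra multiplicity'' you flag is exactly the new phenomenon compared to coherence spaces); the formula $l_{(F\otimes G)(\vec Y)}(S)=\max\bigl(l_{F(\vec Y)}(\pi_1(S)),l_{G(\vec Y)}(\pi_2(S))\bigr)$ for $\odot<\ominus<\oplus$ is easily verified from the definition of $\hc(X\otimes Y)$; and the compatibility identity $\widehat f^{\,F\otimes G}_{\vec Y}(S)=f^F_{\vec Y}(\pi_1(S))$ follows by unwinding the realizing embeddings as you indicate. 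The one point worth making explicit is well-definedness of $\sigma_{F\otimes G}$: you need $v\circ(F\text{-lift})\in\PVal(P_F)$ for every $v\in\PVal(P)$, not just for valuations you already know come from some $S$ --- but of course $v\in\PVal(P)$ \emph{means} $v$ is witnessed by some $S$, and then $\pi_1(S)$ witnesses the restriction, so this is fine. Note also that your $\sigma_{F\otimes G}$ only reads $v$ on the lifted projections, which is a proper subset of $\Proj(P,F\otimes G)$ in general; this is harmless (a specification may ignore part of its input), but is worth saying. The effectivity paragraph is correct as written, relying on \cref{lem:spec} exactly as intended.
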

In the above proposition, finiteness may refer to either weak or strong
finiteness: since we assume specifiability by projections, those two notions
become equivalent by definition.

\begin{proposition}
  \label{prop:quantifier-spec}
  If $F \in \HCohL(n+1)$ admits a specification by projections, then so can
  $\forall(F) \in \HCohL(n)$. Furthermore, the function $(\NF(F),\sigma_F)
  \mapsto (\NF(\forall(F)),\sigma_{\forall(F)})$ defined on finite $F$ is
  computable.
\end{proposition}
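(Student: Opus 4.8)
The plan is to mimic the coherence-space interpretation of $\forall$ and unfold everything through the definition of the trace. Since $\forall(F)(\vec X) = \Tr(F(\vec X,-))$, a point of $|\forall(F)(\vec X)|$ is a self-coherent normal form of the one-parameter functor $F(\vec X,-)$, and a set $S \subsetfin^* |\forall(F)(\vec X)|$ is a clique of $\Tr(F(\vec X,-))$ iff, for every hypercoherence $Z$ and every $T \subsetfin^* |F(\vec X,Z)|$ with $\NF(T;F(\vec X,-)) = S$, one has $T \in \hc(F(\vec X,Z))$ --- and by hypothesis this last condition is read off from $\sigma_F$. So the entire task is to show that this test depends only on the combinatorial data of $S$, namely $\NF(S;\forall(F))$ together with the values $l_{X_k}(f^{\forall(F)}_{\vec X}(S))$ of its projections.

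First I would record the identification $\NF(\forall(F)) \subseteq \NF(F)$: a normal form $\nofo{\vec X'}{\nofo{W}{w}}$ of the $n$-parameter functor $\forall(F)$ is the same datum as a normal form $\nofo{\vec X',W}{w}$ of the $(n+1)$-parameter functor $F$ (the binder $\nofo{W}{-}$ becoming the last parameter), its image being exactly the $q = \nofo{\vec X',W}{w} \in \NF(F)$ for which $\nofo{W}{w} \in |\Tr(F(\vec X',-))|$, i.e.\ $\{\nofo{W}{w}\}\in\hc\NF(F(\vec X',-))$. Unfolding $\hc\NF$ and using that normal forms are preserved by the functorial action, this self-coherence condition becomes: $\sigma_F(\{q\},v)\neq\oplus$ for every $v\in\PVal(\{q\})$ whose restriction to $\Proj_1(\{q\})\cup\dots\cup\Proj_n(\{q\})$ is the trivial valuation $\odot$ --- here one uses that a $T$ realizing $\{\nofo{W}{w}\}$ over $\vec X'$ necessarily uses the identity embeddings on the first $n$ parameters, forcing those projections to evaluate to $\odot$. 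By \cref{lem:spec}, $\PVal(\{q\})$ depends only on $q$ and is effective when $F$ is weakly finite; hence, for finite $F$, $\NF(\forall(F))$ is a decidable subset of the finite set $\NF(F)$, so it is computable from $(\NF(F),\sigma_F)$.

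Next I would define $\sigma_{\forall(F)}$ on a pair $(Q,v)$ with $Q\subsetfin^*\NF(\forall(F))$ and $v\in\PVal(Q)$: put $\sigma_{\forall(F)}(Q,v)=\odot$ when $Q$ is a singleton and $v$ is the trivial valuation (precisely the case where every realization $S$ of $(Q,v)$ is forced to be a singleton), put it equal to $\ominus$ if $\sigma_F(Q,(v,u))\neq\oplus$ for every valuation $u$ on the last projections such that $(v,u)\in\PVal(Q)$ --- now reading $Q$ and the combined valuation against the $(n+1)$-parameter structure --- and put it equal to $\oplus$ otherwise. To check that this is \emph{the} specification by projections, I would take an arbitrary realization $\vec Y$, $S\subsetfin^*|\forall(F)(\vec Y)|$ with $\NF(S;\forall(F))=Q$ and prescribed projection values $v$, and verify $l_{\Tr(F(\vec Y,-))}(S)$ equals the prescribed value. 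After the short singleton check, one unfolds $\hc(\Tr(F(\vec Y,-)))$ as above; the two crucial points are: \textbf{(i)} every $T\subsetfin^*|F(\vec Y,Z)|$ with $\NF(T;F(\vec Y,-))=S$ satisfies $\NF(T;F)=Q$, and for each $k\le n$ the $\Proj_k$-projections of $T$ take the very same values as those of $S$, so the first $n$ components of the valuation of $T$ are exactly $v$; \textbf{(ii)} as $Z,T$ vary, the last component ranges over exactly the $u$ with $(v,u)\in\PVal(Q)$. Inclusion "$\Rightarrow$" in (ii) is immediate ($(\vec Y,Z,T)$ witnesses it); the converse is obtained by transporting a witness $(\vec Y'',Z'',T'')$ of $(v,u)$ --- keep the fresh part $Z''$ and the last-coordinate embeddings, and replace the first $n$ coordinate embeddings by those dictated by $S$; this does not alter the first $n$ components of the valuation (they are forced to $v$ by $S$), is injective on $T''$ since normal functors preserve monos and normal forms are rigid, and produces the required $T$. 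Granting (i)--(ii), $l_{\Tr(F(\vec Y,-))}(S)$ equals $\oplus$ iff some realizable $u$ gives $\sigma_F(Q,(v,u))=\oplus$, and $\ominus$ otherwise (it cannot be $\odot$, $S$ not being a singleton) --- matching the definition, which is therefore well-defined and correct; uniqueness is automatic. The computability claim for finite $F$ then follows: $\NF(\forall(F))$ is computed as above, and for each of the finitely many pairs $(Q,v)$ one enumerates the finitely many $u$ with $(v,u)\in\PVal(Q)$ (decidable by \cref{lem:spec}) and queries the given $\sigma_F$.

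I expect the main obstacle to be the well-definedness argument, that is the converse inclusion in point (ii): it rests on the transport construction and a careful verification that no two elements of $T''$ collapse, which in turn uses the rigidity of normal forms and the preservation of monomorphisms by $F$ --- exactly the kind of bookkeeping that is glossed over in the coherence-space setting and that one must treat with some care here, compounded by the need to keep straight the two readings ($n$- versus $(n+1)$-parameter) of the same finite set $Q\subseteq\NF(F)$.
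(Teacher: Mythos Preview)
The paper does not actually supply a proof of this proposition: the appendix section \ref{sec:appendix-hypercoh} announced for the omitted hypercoherence proofs is empty in the source you were given. So there is nothing to compare against, and I will assess your argument on its own merits.

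Your overall strategy is the natural one and is correct: identify $\NF(\forall(F))$ with the self-coherent part of $\NF(F)$ via $\nofo{\vec X'}{\nofo{W}{w}} \leftrightarrow \nofo{\vec X',W}{w}$, then define $\sigma_{\forall(F)}(Q,v)$ by quantifying over all completions $u$ on $\Proj_{n+1}(Q)$ with $(v,u)\in\PVal(Q)$, invoking \cref{lem:spec} for decidability of $\PVal$. Point~(i) is exactly right, including the observation that for $k\le n$ the image $f^F_{\vec Y,Z}(T)$ literally equals $f^{\forall(F)}_{\vec Y}(S)$, not merely has the same $l$-value.

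The one genuine soft spot is the converse direction of your point~(ii). Your transport (``replace the first $n$ coordinate embeddings by those dictated by $S$'') is ambiguous as soon as two distinct elements $s_1,s_2\in S$ share the same normal form $q\in Q$: there are then two different families $\iota^{s_1},\iota^{s_2}$ of embeddings into $\vec Y$, and you must use \emph{both} if you want the first-$n$-coordinate valuation of the resulting $T$ to equal $v$ rather than some coarsening of it. The clean fix is a fibred-product construction: set
\[
  T \;=\; \bigl\{\, F(\iota^{s},\kappa^{t''})(w^{q}) \;\big|\; s\in S,\ t''\in T'',\ \NF(s;\forall(F))=\NF(t'';F)=q \,\bigr\}
  \ \subsetfin^*\ |F(\vec Y,Z'')|.
\]
Then for $k\le n$ and $f\in\Proj_k(Q)$ one has $f^F_{\vec Y,Z''}(T)=\{\iota^{s}_k(f(q))\mid s\in S\}=f^{\forall(F)}_{\vec Y}(S)$, while for $g\in\Proj_{n+1}(Q)$ one has $g^F_{\vec Y,Z''}(T)=\{\kappa^{t''}(g(q))\mid t''\in T''\}=g^F_{\vec Y'',Z''}(T'')$; hence $T$ realises exactly the valuation $(v,u)$, and $\NF(T;F(\vec Y,-))=S$ since every $s$ occurs and every element of $T$ has $1$-parameter normal form some $s$. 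Note in particular that your stated worry --- that two elements of $T''$ might collapse in $T$ --- is harmless: projection images are computed as \emph{sets}, and the formulas above show they are unaffected by such collisions. The rigidity and mono-preservation remarks are therefore not needed here.
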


From these propositions, we see that $A \mapsto \denot{A}$ is computable in the
second-order hypercoherence model. Since $(\pi : A) \mapsto \denot{\pi}
\sqsubset \denot{A}$ is computable for essentially the same reasons as in
coherence spaces, we may conclude:

\begin{theorem}
  Strongly finite variable hypercoherences form an effective model of
  \malltwo.
\end{theorem}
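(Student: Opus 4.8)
The strategy is to assemble the machinery built up over the preceding propositions into a single inductive argument on \malltwo\ formulas. First I would fix, once and for all, a proof system for \malltwo\ (the standard two-sided or one-sided sequent calculus), since we only need syntax here, not a general categorical model. The goal splits into two computability claims: (i) the map $A \mapsto (\NF(\denot{A}), \sigma_{\denot{A}})$ is computable on \malltwo\ formulas, and (ii) the map $(\pi : A) \mapsto (\text{clique of } \Tr(\denot{A}) \text{ corresponding to } \denot{\pi})$ is computable. Claim (i) is proved by structural induction on $A$: the base cases (type variables and, by a mutual induction on nesting, closed subformulas built from units) are handled by \cref{prop:constant-spec}; the cases $A = B^\perp$, $A = B \otimes C$, $A = B \oplus C$ follow from \cref{prop:connectives-spec}, which gives both closure and the computability of $\sigma_{\denot{A}}$ from $\sigma_{\denot{B}}$ and $\sigma_{\denot{C}}$; and the case $A = \forall X.\, B$ follows from \cref{prop:quantifier-spec}. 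The induction hypothesis also carries along the fact that $\denot{A}$ is weakly finite, so that all the "finite $F$" provisos in those propositions are met. This establishes that strongly finite variable hypercoherences form a submodel of \malltwo\ whose objects are effectively presentable.

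For claim (ii), I would invoke the remark in the excerpt that $(\pi : A) \mapsto \denot{\pi}$ is computable "for essentially the same reasons as in coherence spaces" — that is, by mimicking \cref{thm:effectivity}. Concretely, one proceeds by induction on the derivation $\pi$: each sequent calculus rule corresponds to an operation on uniform clique families (identity/axiom, cut via relational composition, the tensor and par rules, the plus injections and the with pairing, the $\forall$-introduction which reindexes into the trace, the $\exists$-rules if present), and each of these operations can be carried out on the finite combinatorial descriptions — the syntax-tree presentation of \cref{sec:combinatorial}, transported to hypercoherences — since \cref{prop:cohnf-decidable} (and its hypercoherence analogue, which follows from \cref{lem:spec}) makes the relevant coherence relations decidable. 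The output is then a clique of $\Tr(\denot{A})$, read off from the traced uniform family. Combining (i) and (ii), and noting that weak and strong finiteness coincide once a specification by projections exists, yields that the model is both finite (every formula has finitely many uniform clique families, by \cref{prop:nf-finite}'s hypercoherence analogue) and effective, which is exactly the statement.

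**Main obstacle.** The delicate point is not the bookkeeping of the induction but verifying that the specification-by-projections data genuinely propagates through $\forall$ in an effective way — i.e., everything hinges on \cref{prop:quantifier-spec}, which in turn rests on \cref{lem:spec}. The subtlety flagged by \cref{lem:spec} is that whether a valuation $v \in \Val(P)$ is \emph{possible} must be decidable \emph{without} reference to $F$ beyond $P$ itself; if this failed, then forming $\Tr(F)$ — whose coherence is defined by quantifying over all instantiations — could reintroduce exactly the kind of uncomputability exhibited by the pathological $F_f$. So the real work is ensuring that "possibility" of a valuation is an intrinsic, finitely-checkable property of the abstract data $(P, v)$, and that the trace construction preserves this. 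Granting \cref{lem:spec} and \cref{prop:quantifier-spec} as stated, the rest is routine induction; but that is where the genuine mathematical content of effectivity in the hypercoherence model resides.
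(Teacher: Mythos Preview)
Your proposal is correct and follows essentially the same approach as the paper: the paper's proof is just the two sentences preceding the theorem, which amount to assembling \cref{prop:constant-spec}, \cref{prop:connectives-spec}, and \cref{prop:quantifier-spec} into an induction on formulas for part (i), and deferring part (ii) to the coherence-space argument. You have also correctly identified that the substantive content lies in \cref{lem:spec} and \cref{prop:quantifier-spec}.
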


\section{Application: characterizing regular languages}
\label{sec:regular}

Next, we illustrate the usefulness of our finite semantics of \malltwo on the
following theorem.

\begin{definition}
  We consider the \enquote{stratified} Church encoding of strings: $\Str = \forall
  X.\Str[X]$, where $\Str[X] =\oc(X\multimap X) \multimap \oc(X\multimap
  X)\multimap \oc(X\multimap X)$.

  Given a proof $\pi : \oc\Str \multimap \oc^{k}\Bool$ (with\footnote{We also
    take $\mathtt{true}$ (resp.\ $\mathtt{false}$) to be the proof of $1 \oplus
    1$ proving the left (resp.\ right) occurrence of $1$.} $\Bool = 1 \oplus
  1$), the language decided by $\pi$ is defined as\footnote{$\pi(\overline{w})$
    denotes the proof obtained by applying a cut to $\pi$ and $\overline{w}$,
    and $\oc^k \mathtt{true}$ is the proof consisting of $k$ promotion rules
    with empty context applied to $\mathtt{true}$; cf.\ the sequent calculus
    recalled in \cref{sec:appendix-regular}.} $\Lang(\pi) = \{ w \in \{0,1\}^*
  \mid \pi(\overline{w}) \longrightarrow^* \oc^{k}\mathtt{true} \}$, where
  $\overline{w}$ is the Church encoding of $w$.
\end{definition}

\begin{theorem}
  \label{thm:regularELL}
  The type $\oc\Str \multimap \oc\oc\Bool$ in \emph{second-order Elementary
    Linear Logic (\elltwo)} captures the class of \emph{regular languages}. In
  other words, the languages that can be expressed as $\Lang(\pi)$ for some
  proof $\pi : \oc\Str \multimap \oc\oc\Bool$ in \elltwo are exactly the regular
  languages.
\end{theorem}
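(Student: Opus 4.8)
The plan is to establish the two inclusions separately, following closely the structure of Hillebrand and Kanellakis's proof of \cref{thm:hk}.

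For the inclusion ``regular $\subseteq$ captured'', fix a DFA $\mathcal{A}=(Q,q_0,\delta,F)$ recognising $L$, with $Q$ finite. I would encode $Q$ by the exponential-free type $\mathbf{Q}=\bigoplus_{q\in Q}1$, and turn $q_0$, the acceptance predicate, and the two letter-transitions $\delta(-,0),\delta(-,1):Q\to Q$ into closed \mall-proofs $\overline{q_0}:\mathbf{Q}$, $\mathrm{acc}:\mathbf{Q}\multimap\Bool$ and $t_0,t_1:\mathbf{Q}\multimap\mathbf{Q}$. The proof $\pi$ then instantiates the bound variable of the input type $\Str=\forall X.\,\Str[X]$ at $\mathbf{Q}$, feeds in the boxed transitions $\oc t_0,\oc t_1:\oc(\mathbf{Q}\multimap\mathbf{Q})$ — legitimate functorial promotions, since the $t_a$ are closed — to obtain the iterated transition, combines it with $\overline{q_0}$ and $\mathrm{acc}$, and arranges the promotion boxes so that the two $\oc$'s of the conclusion $\oc\oc\Bool$ absorb the $\oc$ of the input and the box produced by the stratified encoding. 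Running cut-elimination, $\pi(\overline{w})$ reduces to $\oc^2\mathtt{true}$ exactly when $\mathcal{A}$ accepts $w$, so $\Lang(\pi)=L$. The only subtlety is the bookkeeping of exponential tiers, which is precisely what the stratification of \elltwo is designed to accommodate.

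For the converse, let $\pi:\oc\Str\multimap\oc\oc\Bool$ be a proof of \elltwo. By cut-elimination in second-order \ELL{} and the fact that the only cut-free proofs of $\oc\oc(1\oplus 1)$ are $\oc^2\mathtt{true}$ and $\oc^2\mathtt{false}$, the normal form of $\pi(\overline{w})$ is one of these two; hence $\Lang(\pi)$ is well defined and $w\in\Lang(\pi)$ iff $\pi(\overline{w})\longrightarrow^{*}\oc^2\mathtt{true}$. Reduction preserves denotations, and the coherence space model of \cref{sec:coh} is non-trivial — it separates $\mathtt{true}$ from $\mathtt{false}$, hence $\oc^2\mathtt{true}$ from $\oc^2\mathtt{false}$ by functoriality of $\oc$ — so $w\in\Lang(\pi)$ iff $\denot{\pi(\overline{w})}=\denot{\oc^2\mathtt{true}}$. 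I would then analyse how $\denot{\pi(\overline{w})}=\denot{\pi}\circ\denot{\overline{w}}$ varies with $w$. Inspecting the cut-elimination of $\pi(\overline{w})$ — equivalently, the proof-net structure of $\pi$ — the input string is used only by instantiating its bound variable $X$ at a closed type $U$ fixed by $\pi$ (or, if $\pi$ consumes several copies of the string, at a finite family of such types, whose product one may take in place of $U$), and by applying the resulting iterator to closed proofs of $\oc(U\multimap U)$ supplied by $\pi$. The key point — where finiteness of the semantics enters — is that the stratification of \elltwo forces $U$ to be exponential-free up to harmless occurrences of $\oc$ over closed subformulas, so that \cref{cor:fincoh} makes $\denot{U}$ a finite coherence space; then $\denot{\Str[U]}$ has a finite web and $\Endomaps{\denot{U}}$ is a finite monoid. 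Consequently $\denot{\pi(\overline{w})}$ is a fixed function of the denotation $\denot{\overline{w}}$ accessed only through its instantiation at $U$, a clique of $\denot{\Str[U]}$, which ranges over a finite set; and appending a letter $a$ updates that instantiation by a fixed function of it and of $a$ — post-composition with the $a$-transition on $\denot{U}$ — exactly as in the proof of \cref{thm:hk}. This is a DFA: states the cliques of $\denot{\Str[U]}$, transitions as just described, initial state the instantiation at $U$ of the Church encoding of the empty word, accepting states those whose image under the fixed function is $\denot{\oc^2\mathtt{true}}$. Hence $\Lang(\pi)$ is regular.

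The main obstacle is the heart of this last argument: reading off from the \elltwo structure of $\pi$ that the input string is iterated only at types of finite denotation — this is where the stratification bound, combined with \cref{cor:fincoh}, is indispensable — together with the uniformity and naturality bookkeeping showing that $\denot{\pi(\overline{w})}$ really does factor through that finite set of instantiations and that appending a letter acts as a fixed operation on it. The remaining ingredients — the DFA encoding, the cut-free normal forms of $\oc^2\Bool$, and the soundness and non-triviality of the model — are routine given what has already been established.
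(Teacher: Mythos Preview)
Your overall strategy matches the paper's: completeness encodes a DFA by instantiating at a finite sum of units, and soundness proceeds by (i) extracting from $\pi$ the existential witnesses $A_1,\ldots,A_n$ at which the input is instantiated (the paper's \cref{lem:predicates}), (ii) arguing those witnesses have finite coherence-space denotation, and (iii) building a DFA on the resulting finite state space (your last paragraph is essentially \cref{lem:semeval}).

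The genuine gap is in step~(ii), which you correctly flag as the main obstacle but do not actually resolve. Your assertion that ``the stratification of \elltwo forces $U$ to be exponential-free up to harmless occurrences of $\oc$ over closed subformulas'' is false as stated: the $\exists$-introduction rule of \elltwo imposes no syntactic constraint on the witness, so $\pi$ may perfectly well instantiate the string at types $A_i$ containing exponentials over open subformulas, and then \cref{cor:fincoh} does not apply. The paper closes this gap by a different mechanism: a \emph{truncation at depth~$2$} operation on formulas and proofs, replacing every subformula $\oc B$ (resp.\ $\wn B$) nested under two exponentials by $1$ (resp.\ $\bot$) and performing the corresponding surgery on the proof tree. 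This operation fixes the conclusion $\oc\Str\multimap\oc\oc\Bool$ and the Church encodings $\overline{w}$, and --- here is where stratification is actually used --- it is compatible with cut-elimination in \elltwo, so the truncated proof $\pi'$ decides the same language as $\pi$. In $\pi'$ the witnesses become closed \malltwo formulas (their ``truncation at depth~$0$''), and only then does \cref{cor:fincoh} apply. In short: stratification does not constrain the witnesses themselves; it is what makes the truncation sound.

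A minor point: the paper does not collapse the several witnesses into a single product type $U$ as you suggest, but keeps them separate and takes the DFA state space to be the cliques of $\denot{\Str[A_1]\otimes\cdots\otimes\Str[A_n]}$; this sidesteps having to argue that a single iteration at some product type simulates the separate iterations.
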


Roughly speaking, \elltwo is a subsystem of full second-order linear logic
where the rules governing the exponential modalities are restricted: promotion
and dereliction are removed, and replaced with \emph{functorial promotion}: from
$\vdash A_1, \ldots, A_n, B$, infer $\vdash \wn A_1, \ldots, \wn A_n, \oc B$.
This induces a sort of \enquote{stratification} on formulas and proofs, which is
the reason why the number of $\oc$ modalities in the output type of an \elltwo
function is significant.

The formal definition of \elltwo is given in \cref{sec:appendix-regular}. Some
parts of the proof of \cref{thm:regularELL} are also relegated to this section
of the appendix. To summarize:
\begin{itemize}
\item By encoding deterministic finite automata as proofs of $\oc\Str \multimap
  \oc\oc\Bool$, we show that every regular language can be decided by such a
  proof.
\item Using the aforementioned stratification property of \elltwo, we reduce the
  converse (only regular languages can be decided) to the lemma below.
  This reduction involves a \enquote{truncation at depth $k$} operation, similar
  to the one defined in~\cite{ealreg} for the elementary affine
  $\lambda$-calculus, which might be of independent interest.
\end{itemize}
The lemma whose proof features coherence spaces actually applies to full
second-order linear logic (with unrestricted exponentials). This is because the
\enquote{geometric} properties specific to \elltwo have already been exploited
in the previous step.

\begin{lemma}
  \label{lem:semeval}
  Let $\pi : \Str[A_1] \otimes \ldots \otimes \Str[A_n] \multimap \Bool$ be a
  proof in second-order linear logic where $A_1, \ldots, A_n$ are closed
  \malltwo formulas. Then the following language is regular:
  \[ \{ w \in \{0,1\}^* \mid \pi(\overline{w}[A_i] \otimes \ldots \otimes
    \overline{w}[A_n]) \longrightarrow^* \mathtt{true} \} \]
  where $\overline{w}[A_i] : \Str[A_i]$ is the instantiation of $\overline{w}$
  on $A_i$.
\end{lemma}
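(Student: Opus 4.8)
The plan is to adapt the Hillebrand--Kanellakis argument (the proof of \cref{thm:hk}) to this linear-logic setting, using the finite coherence semantics of \malltwo established in \cref{sec:coh} as the source of finitely many automaton states. Fix the proof $\pi$ and the closed \malltwo formulas $A_1, \ldots, A_n$. First I would observe that, by \cref{cor:fincoh}, each $\denot{A_i}$ is a \emph{finite} coherence space; consequently $\denot{\Str[A_i]} = \denot{\oc(A_i \multimap A_i) \multimap \oc(A_i \multimap A_i) \multimap \oc(A_i \multimap A_i)}$ is a fixed coherence space, and although the $\oc$ makes its web potentially large, the point is that it is \emph{finite} (the web of $\oc X$ for finite $X$ is the finite set of cliques of $X$). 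Hence the set of cliques of $\denot{\Str[A_1]} \otimes \cdots \otimes \denot{\Str[A_n]}$ is a finite set $Q$; this will be the state set of the automaton.

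Next I would analyse how the denotation of the Church encoding $\overline{w}[A_i]$ is built from $w$ letter by letter. Writing $w = b_1 \cdots b_m$ with $b_j \in \{0,1\}$, the Church string $\overline{w}[A_i]$ applied to the two ``successor'' functions and the ``origin'' is computed by iterated composition; semantically, $\denot{\overline{w\,b}[A_i]}$ is obtained from $\denot{\overline{w}[A_i]}$ by postcomposition with a fixed clique (the denotation of the relevant successor, here using the $\oc$-monoidal structure of $\CohL$, i.e.\ the interpretation of the step ``multiply by $\oc s_b$''). Taking the tuple over $i \in \{1,\ldots,n\}$, there are transition maps $\delta_0, \delta_1 : Q \to Q$ such that, if $q_w \in Q$ denotes the clique $\denot{\overline{w}[A_1]} \otimes \cdots \otimes \denot{\overline{w}[A_n]}$, then $q_{w\,b} = \delta_b(q_w)$. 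This gives a \emph{deterministic} finite automaton on $\{0,1\}$ with state set $Q$, initial state $q_\varepsilon$, and transitions $\delta_0, \delta_1$. Finally, by soundness of the coherence model with respect to cut-elimination, $\pi(\overline{w}[A_1] \otimes \cdots \otimes \overline{w}[A_n]) \longrightarrow^* \mathtt{true}$ iff the clique $\denot{\pi} \circ q_w \sqsubset \denot{\Bool} = \denot{1 \oplus 1}$ equals $\denot{\mathtt{true}}$ — and since the model is non-trivial (\cref{thm:existence-finite}), $\denot{\mathtt{true}} \neq \denot{\mathtt{false}}$, so this is a well-defined condition on $q_w$. Declaring a state $q \in Q$ accepting iff $\denot{\pi} \circ q = \denot{\mathtt{true}}$, the automaton $(Q, q_\varepsilon, \delta_0, \delta_1, \text{accepting set})$ recognizes exactly the language in the statement, which is therefore regular.

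The main obstacle I expect is the second step: verifying precisely that the semantic evaluation of the Church encoding genuinely factors through a finite state and a letter-by-letter transition, i.e.\ that ``applying the next letter'' really is postcomposition with a clique that depends only on the letter and not on $w$. The subtlety is the interaction with the exponential $\oc$: one must check that the composition of Kleisli morphisms for the $\oc$-comonad used to build the iterator is associative and that the semantic Church numeral/string behaves as an iterator of a fixed endomorphism, so that $q_{w\,b}$ depends only on $q_w$ and $b$. Everything else — finiteness of $Q$ (from \cref{cor:fincoh}), well-definedness of the acceptance condition (from non-triviality of the model), and correctness of the automaton (from soundness of the model for cut-elimination) — is then routine. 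I would also note in passing that this argument does not use any feature specific to \elltwo: it applies to $\pi$ in full second-order linear logic, which is exactly why this lemma can be isolated from the stratification machinery used elsewhere in the proof of \cref{thm:regularELL}.
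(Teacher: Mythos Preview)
Your proposal is correct and follows essentially the same approach as the paper: build a DFA whose states are the cliques of the finite coherence space $\denot{\Str[A_1] \otimes \cdots \otimes \Str[A_n]}$, with transitions given by appending a letter and acceptance by testing $\denot{\pi}(q) = \denot{\mathtt{true}}$. The paper dispatches your stated \enquote{main obstacle} without any semantic analysis of the Kleisli structure of $\oc$: it simply defines the transition as the denotation of a \emph{syntactic} proof $\mathtt{snoc}^b_X : \Str[X] \multimap \Str[X]$ (append the letter $b$), so that $q_{wb} = \denot{\mathtt{snoc}^b_{A_1,\ldots,A_n}}(q_w)$ holds by compositionality of the model and depends on $b$ alone.
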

\begin{proof}
  Let $B = \Str[A_1] \otimes \ldots \otimes \Str[A_n]$. By \cref{cor:fincoh}, we
  know that $\denot{B}$ is a finite coherence space. Indeed, if a subformula of
  $B$ is the form $\oc C$, then it cannot be a subformula of some $A_i$ since
  the $A_i$ are \malltwo formulas, so $C = A_i \multimap A_i$ for some $i \in
  \{1, \ldots, n\}$. Since $A_i$ is closed, the premise of \cref{cor:fincoh}
  holds.

  Let $x \in \{0,1\}$. The operation \enquote{add a $x$ at the end of the
    string} is definable by a proof $\mathtt{snoc}^x_X$ of $\Str[X] \multimap
  \Str[X]$. From this, we can derive $\mathtt{snoc}^x_{A_1,\ldots,A_n} : B
  \multimap B$. This allows us to define a deterministic finite automaton
  (writing $\varepsilon$ for the empty string):
  \begin{itemize}
  \item whose states are the cliques of $\denot{B}$, with initial state $q_I =
    \denot{\overline\varepsilon[A_1] \otimes \ldots \otimes
      \overline\varepsilon[A_n]}$;
  \item whose transition function is $\delta(x,q) = \denot{\mathtt{snoc}^x_{A_1,
        \ldots, A_n}}(q)$ for $x \in \{0,1\}$;
  \item whose accepting states are $\{ q \sqsubset B \mid \denot{\hat\pi}(q) =
    \denot{\mathtt{true}} \}$.
  \end{itemize}
  Thanks to the compositionality of the coherence space model, when the DFA
  reads a word $w \in \{0,1\}$, it ends in the state $\denot{\overline{w}}$.
  This state is accepting if and only if $\denot{\hat\pi(\overline{w})} =
  \denot{\mathtt{true}}$; since the semantics is injective on $\Bool$, the DFA
  recognizes the language we want.
\end{proof}

\section{Conclusion}

Motivated by applications to implicit complexity, we sought a finite semantics
for \malltwo, and obtained it by proving the finiteness of the pre-existing
model of coherence spaces and normal functors. In retrospect, this is not so
surprising: one advantage of coherence spaces (e.g.\ over Scott domains), that
had already been stressed early in their history, is their tendency to give
small and legible interpretations to formulas.

Another early observation by Girard was that the existential introduction in
this model has a non-trivial computational contents, subsuming the cut rule --
this was mentioned as being \enquote{key to a semantic approach to
  computation}~\cite[p.~57]{girardLL}. By going from $A[T]$ to $\exists X.\, A$,
the information of the witness $T$ is compressed into some bounded data, and
this is why the semantics can be finite. Let us reformulate this from the
programming language point of view on existential types as abstract data types:
the cliques of $\denot{\exists X.\, A}$ keep just enough information about the
cliques of $\denot{A[T]}$ they originate from to determine their interaction
with the generic (universally typed) programs which might use them.

Relatedly, observe that the syntactic model of propositional \mall is finite,
and the existential witnesses are the only reason why this is not the case in
\malltwo. One could also try to directly implement the above intuitions starting
from the syntax; this will be the subject of an upcoming paper with P.\ Pistone,
T.\ Seiller and L.\ Tortora de Falco.

Aside from finiteness, the present work also investigated in detail the question
of effectivity. Almost no effort is needed in the case of coherence spaces, but
to obtain an effective hypercoherence model of \malltwo, we had to introduce the
idea of specifications by projections.

\bibliography{bibliography}

\appendix

\section{Omitted proofs of \cref{sec:coh}}
\label{sec:appendix-coh}

\begin{proof}[Proof of \cref{prop:nf-finite}]
  ($\Longrightarrow$) There are finitely many coherence spaces of cardinality
  $\leq \deg(F)$, and their images by $F$ are all finite since $F$ preserves
  finiteness.

  ($\Longleftarrow$) $\deg(F)$ is the supremum of a finite subset of $\naturalN$
  and is therefore finite.
\end{proof}

\begin{proof}[Proof of \cref{thm:polynomialgrowth}]
  Note that $\Card(|F(X)|) = O(\Card(|X|)^d)$ implies that $F$ sends finite
  spaces to finite spaes. Thus, in the remainder of this proof, we assume that
  $F$ preserves finiteness of cardinality (otherwise, an equivalence between two
  false propositions is true). With this assumption, it suffices to prove that
  \[\deg(F) = \inf\{d \in \mathbb{N} \mid \Card(|F(X)|) = O(\Card(|X|)^d)\}\]
  We decompose this equality into two inequalities.

  ($\leq$) Let $\nofo{X^0}{x} \in \NF(F)$ and $d = \Card(|X^0|)$.
  Let $[n] = 1 \& \ldots \& 1$ ($n$ times). For all $n$, there are $n^d$
  embeddings $X^0 \hookrightarrow X^0 \otimes [n]$ which are the identity on the
  first component. If for two such embeddings $\iota$ and $\iota'$, $F(\iota)(x)
  = F(\iota')(x)$, then by uniqueness of the normal form $\iota$ and $\iota'$
  are isomorphic in the slice category $\CohI/(X^0 \otimes [n])$; in the
  commuting triangle $\iota = \rho \circ \iota'$, $\rho$ can only be the
  identity, since our considered family of embeddings differ only on the
  component $[n]$, and so $\iota = \iota'$. Thus, $\iota \mapsto F(\iota)(x)$ is
  injective over the $n^d$ embeddings we consider, and therefore $\Card(|F(X^0
  \otimes [n])|) \geq n^d$ while $\Card(|X^0 \otimes [n]|) = dn$.

  ($\geq$) We assume $d = \deg(F) < \infty$ (when $\deg(F) = \infty$, the
  inequality is true for trivial reasons). This entails that $\NF(F)$ is finite,
  by \cref{prop:nf-finite}. Now let $X$ be a finite coherence space of
  cardinality $n$; each point $x \in |F(X)|$ has a normal form, so
  $\Card(|F(X)|)$ can be bounded by summing over $\nofo{X^0}{x} \in \NF(F)$ the
  number of possible embeddings of $X^0$ in $X$. This number is at most
  $O(n^{\Card(X^0)})$ and by definition $\Card(X^0) \leq d$. In the end, using
  $\Card(\NF(F)) = O(1)$, we get $\Card(|F(X)|) = O(n^d)$.
\end{proof}

\begin{proof}[Proof of \cref{prop:degree-mallzero}]
The case of negation is straightforward as $\trame{F(X_1,\dots,
X_n)}=\trame{F^{\bot}(X_1,\dots,X_n)}$. For the $\oplus$ case, 
one easily checks that $\nf{F\oplus G}\cong \nf{F}\uplus \nf{G}$.

Only the $\otimes$ case needs to be carefully checked. Consider
$\nofo{\vec{Z}}{(x,y)} \in \nf{F\otimes G}$, and the corresponding
normal forms $\nofo{\vec{X}}{x} \in \nf{F}$ and $\nofo{\vec{Y}}{y} \in
\nf{G}$. Then one can show, from minimality of $\vec{Z}$, that\footnote{For
  legibility purposes, we assume that $F$ preserves inclusions, and work with
  inclusions instead of embeddings; also, all operations are
applied componentwise on the $n$-tuples $\vec{X}$, $\vec{Y}$, $\vec{Z}$.}
$\vec{X}\cup\vec{Y}\subseteq\vec{Z}$. Moreover, since $x\in
F(\vec{Z})$ and $y\in G(\vec{Z})$, we have $\vec{X}\cup\vec{Y}
\supseteq\vec{Z}$. Thus $\vec{X}\cup\vec{Y}=\vec{Z}$ and $\deg{
(F\otimes G)} \leqslant \deg{F} + \deg{G}$. The converse inequality
is obtained by noticing that if $\nofo{\vec{X}}{x}\in\nf{F}$ and 
$\nofo{\vec{Y}}{y} \in \nf{G}$, then $\nofo{\vec{X}\oplus\vec{Y}}{(x,y)}
\in \nf{F\otimes G}$.
\end{proof}

\begin{proof}[Proof of \cref{prop:degree-forall}]
  Suppose that $\nofo{Y_1,\dots,Y_n}{\nofo{X_1}{x}}\in\nf{\forall(F)}$. It
  suffices to check that $\nofo{Y_1,\dots,Y_n,X_1}{x}\in\nf{F}$.

  % TODO: vraie preuve
\end{proof}

% \begin{proof}[Proof of \cref{cor:fincoh}]

% \end{proof}

% \begin{proof}[Proof of \cref{prop:cohnf-decidable}]
%   Indeed, the definition of coherence on $\nf{F}$ involves a quantification over
%   all spaces in which the normal form(s) involved may embed, but this
%   quantification can be bounded as remarked in~\cite[Remark~C.3]{girardF}. (As
%   mentioned in the introduction, this was one of Girard's motivations for
%   restricting the model of \emph{qualitative domains} to its binary case, i.e.\
%   coherence spaces.)
% \end{proof}

% \begin{proof}[Proof of \cref{thm:effectivity}]
%   TODO
% \end{proof}

\begin{proof}[Proof of \cref{thm:logspace}]
  % TODO refer to instantiation

  For each point $x \in \theta_X$, there is a unique $(\iota,y)$ such that
  $F(\iota)(y) = x$; indeed, it is the initial object of the slice category
  $\El(\trame{F})/(X,x)$. So we have a bijection
   \[ \theta_X \cong \{ (\iota,y) \mid \nofo{Y}{y} \in \theta, \iota : Y \to X
     \text{ embedding} \} \]

   This bijection is computable in logarithmic space: it is just a matter of
   performing substitutions on terms of size $O(1)$ (since $F$ is fixed),
   although the $\iota$'s are not of constant size (because of the representation
   of elements of $\trame{X}$). Therefore the problem is reduced to enumerating
   the right-hand side without repetitions.

   There are finitely many $\nofo{Y}{y}$ in $\theta$. For each of them, each
   injection $Y \to X$ can be represented by $\card{\trame{Y}} \leq \deg(F) =
   O(1)$ elements of $\trame{X}$: this takes $O(\log \card{\trame{X}})$ space.
   Thus, all the injections can be enumerated in logarithmic space, and for each
   injection, whether it is an embedding can be determined in logarithmic space
   (using the coherence relation on $\trace{F}$ which may be precomputed
   independently of the input.)
\end{proof}

\section{Details for \cref{sec:affine}}
\label{sec:appendix-affine}

\section{Omitted proofs of \cref{sec:hypercoh}}
\label{sec:appendix-hypercoh}

\section{Details and proofs for \cref{sec:regular}}
\label{sec:appendix-regular}

% TODO use the Danos--Joinet--Laurent version of ELL (it actually works)

\begin{figure}[h]
  \[
    \text{(functorial promotion)}
    \frac{\vdash \Gamma, A}{\vdash \wn \Gamma, \oc A}
    \qquad
    \text{(weakening)}
    \frac{\vdash \Gamma}{\vdash \Gamma, \wn A}
    \qquad
    \text{(contraction)}
    \frac{\vdash \Gamma, \wn A, \wn A}{\vdash \Gamma, \wn A}
  \]
  \caption{Exponential rules for the $\elltwo$ sequent calculus. In the
    functorial promotion rule, when $\Gamma = B_1, \ldots, B_k$, $\wn \Gamma$
    stands for $\wn B_1, \ldots, \wn B_k$.}
  \label{fig:ell-rules}
\end{figure}

\begin{figure}[h]
  \centering
 \begin{gather*} 
     \text{(\texttt{ax}-rule)}
     \frac{}{\vdash A, A^\bot}
    \qquad
     \text{(cut rule)}
     \frac{\vdash \Gamma, A \quad \vdash A^\bot, \Delta}{\vdash \Gamma, \Delta}
    \qquad
     \text{(exchange rule)}
     \frac{\vdash \Gamma, A, B, \Delta}{\vdash \Gamma, B, A, \Delta}
    \\ \\
    \text{($\otimes$-rule)}
    \frac{\vdash \Gamma, A \quad \vdash B, \Delta}{\vdash \Gamma, A \otimes B,
      \Delta}
    \qquad
    \text{($\parr$-rule)}
    \frac{\vdash \Gamma, A, B}{\vdash \Gamma, A \parr B}
    \qquad
    \text{($\bot$-rule)}
    \frac{\vdash \Gamma}{\vdash \Gamma, \bot}
    \qquad
    \text{($1$-rule)}
    \frac{}{\vdash 1}
   \\ \\
    \text{($\oplus$-rule)}
    \frac{\vdash \Gamma, A_i}{\vdash \Gamma, A_1 \oplus A_2}
    \text{\footnotesize for $i \in \{1,2\}$}
    \qquad
    \text{($\with$-rule)}
    \frac{\vdash \Gamma, A \quad \vdash \Gamma, B}{\vdash \Gamma, A \with B
      }
    \qquad
    \text{($\top$-rule)}
    \frac{}{\vdash \Gamma, \top}
   \\ \\
    \text{($\exists$-rule)}
    \frac{\vdash \Gamma, A[B/X]}{\vdash \Gamma, \exists X.\, A}
    \qquad
    \text{($\forall$-rule)}
    \frac{\vdash \Gamma, A}{\vdash \Gamma, \forall X.\, A}
    \text{\footnotesize for $X$ not free in $\Gamma$}
 \end{gather*}
\caption{Rules for the $\malltwo$ sequent calculus (there is no rule for $0$).}
\label{fig:mall2-rules}
\end{figure}

\begin{figure}
\[
\def\arraystretch{4}
\begin{array}{r !\qquad !\leadsto !\qquad l}
\text{
\AXC{ }
\UIC{$A, A^\perp$}
\AXC{$\pi$}
\UIC{$A,\Delta$}
\BIC{$A,\Delta$}
\DisplayProof}
&
\text{
\AXC{$\pi$}
\UIC{$A,\Delta$}
\DisplayProof}
\\
\text{
\AXC{$\pi_1$}
\UIC{$\vdash \Gamma, A$}
\AXC{$\pi_2$}
\UIC{$\vdash \Gamma', B$}
\BIC{$\vdash \Gamma, \Gamma', A \otimes B$}
\AXC{$\pi_3$}
\UIC{$\vdash A^\perp, B^\perp, \Delta$}
\UIC{$\vdash A^\perp \parr B^\perp, \Delta$}
\BIC{$\vdash \Gamma, \Gamma', \Delta $}
\DisplayProof}
&
\text{
\AXC{$\pi_1$}
\UIC{$\vdash \Gamma, A$}
\AXC{$\pi_2$}
\UIC{$\vdash \Gamma', B$}
\AXC{$\pi_3$}
\UIC{$\vdash B^\perp, A^\perp, \Delta$}
\BIC{$\vdash A^\perp, \Gamma', \Delta$}
\BIC{$\vdash \Gamma, \Gamma', \Delta $}
\DisplayProof}
\\
\text{
\AXC{$\pi_1$}
\UIC{$\vdash \Gamma, A$}
\UIC{$\vdash \Gamma, A \oplus B$}
\AXC{$\pi_2$}
\UIC{$\vdash A^\perp, \Delta$}
\AXC{$\pi_3$}
\UIC{$\vdash B^\perp, \Delta$}
\BIC{$\vdash A^\perp \with B^\perp, \Delta$}
\BIC{$\vdash \Gamma, \Delta $}
\DisplayProof}
&
\text{
\AXC{$\pi_1$}
\UIC{$\vdash \Gamma, A$}
\AXC{$\pi_2$}
\UIC{$\vdash A^\perp, \Delta$}
\BIC{$\vdash \Gamma, \Delta $}
\DisplayProof}
\\
\text{
\AXC{$\pi_1$}
\UIC{$\vdash \Gamma, A$}
\UIC{$\vdash \wn \Gamma, \oc A$}
\AXC{$\pi_2$}
\UIC{$\vdash A^\perp, \Delta, B$}
\UIC{$\vdash \wn A^\perp, \wn \Delta, \oc B$}
\BIC{$\vdash \wn \Gamma, \wn \Delta, \oc B $}
\DisplayProof}
&
\text{
\AXC{$\pi_1$}
\UIC{$\vdash \Gamma, A$}
\AXC{$\pi_2$}
\UIC{$\vdash A^\perp, \Delta, B$}
\BIC{$\vdash \Gamma, \Delta, B $}
\UIC{$\vdash \wn \Gamma, \wn \Delta, \oc B $}
\DisplayProof}
\\
\text{
\AXC{$\pi_1$}
\UIC{$\vdash \wn \Gamma, \oc A$}
\AXC{$\pi_2$}
\UIC{$\vdash \Delta$}
\UIC{$\vdash \wn A^\perp, \Delta$}
\BIC{$\vdash \wn \Gamma, \Delta $}
\DisplayProof}
&
\text{
\AXC{$\pi_2$}
\UIC{$\vdash \Delta $}
\UIC{$\vdash \wn \Gamma, \Delta $}
\DisplayProof}
\\
\text{
\AXC{$\pi_1$}
\UIC{$\vdash \wn \Gamma, \oc A$}
\AXC{$\pi_2$}
\UIC{$\vdash \wn A^\perp, \wn A^\perp, \Delta$}
\UIC{$\vdash \wn A^\perp, \Delta$}
\BIC{$\vdash \wn \Gamma, \Delta $}
\DisplayProof}
&
\text{
\AXC{$\pi_1$}
\UIC{$\vdash \wn \Gamma, \oc A$}
\AXC{$\pi_1$}
\UIC{$\vdash \wn \Gamma, \oc A$}
\AXC{$\pi_2$}
\UIC{$\vdash \wn A^\perp, \wn A^\perp, \Delta$}
\BIC{$\vdash \wn A^\perp, \wn \Gamma, \Delta$}
\BIC{$\vdash \wn \Gamma, \wn \Gamma, \Delta $}
\UIC{$\vdash \wn \Gamma, \Delta $}
\DisplayProof}
\\
\text{
\AXC{$\pi_1$}
\UIC{$\vdash \Gamma, A$}
\UIC{$\vdash \Gamma, \forall X.\, A$}
\AXC{$\pi_2$}
\UIC{$\vdash A^\perp[B/X], \Delta$}
\UIC{$\vdash \exists X.\, A^\perp, \Delta$}
\BIC{$\vdash \Gamma, \Delta $}
\DisplayProof}
&
\text{
\AXC{$\pi_1[B/X]$}
\UIC{$\vdash \Gamma, A[B/X]$}
\AXC{$\pi_2$}
\UIC{$\vdash A^\perp[B/X], \Delta$}
\BIC{$\vdash \Gamma, \Delta $}
\DisplayProof}
\\
\end{array}
\]

\caption{Key reductions of $\elltwo$ cut-elimination.}
\label{fig:ell2-keycutelim}
\end{figure}

\subsection{Proof of extensional completeness}

\begin{proposition}
  Any regular language can be expressed as $\Lang(\pi)$ for some \elltwo proof
  $\pi : \oc\Str \multimap \oc\oc\Bool$.
\end{proposition}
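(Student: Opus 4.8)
The plan is to encode a deterministic finite automaton directly as an \elltwo proof, following the idea behind \cref{thm:hk}, and then to read off $\Lang(\pi)$ from the cut-elimination behaviour of the resulting proof.

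Fix a regular language $L$. Since the regular languages are closed under reversal and (by determinisation) are recognised by DFAs, we may fix a DFA $\mathcal{A} = (Q,q_0,\delta,F)$ recognising $L$ or its reversal $L^{\mathrm{rev}}$, the choice being dictated by the orientation of the Church encoding (see the end of this sketch). The first step is purely propositional: encode the state set $Q$ as a closed \mall formula $\tau_Q := 1\oplus\dots\oplus 1$ with $\card{Q}$ summands, whose $\card{Q}$ cut-free proofs are in bijection with $Q$. The key observation is that \emph{any} function out of such a sum-of-units formula is \mall-definable: a proof of $\tau_Q\multimap\sigma$ is built by a $\with$ over the $\card{Q}$ branches of $\tau_Q^\perp = \bot\with\dots\with\bot$, the branch indexed by $q$ supplying a proof of $\sigma$ (using the $\oplus$-, $1$- and $\bot$-rules of \cref{fig:mall2-rules}). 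In this way I obtain closed \mall proofs $\delta_0,\delta_1 : \tau_Q\multimap\tau_Q$ for the two transition maps $q\mapsto\delta(b,q)$, and a closed \mall proof $\mathrm{test} : (\tau_Q\multimap\tau_Q)\multimap\Bool$ realising $g\mapsto[\,g(q_0)\in F\,]$ (obtained by composing evaluation at $q_0$ with the characteristic function of $F$, both \mall-definable).

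The second step assembles $\pi$ using only the rules of \elltwo. Write $\Str = \forall X.\,\Str[X]$ and $E := \tau_Q\multimap\tau_Q$, so that $\Str[\tau_Q] = \oc E\multimap\oc E\multimap\oc E$. Starting from a hypothesis of type $\Str[\tau_Q]$, I feed it the functorial promotions (with empty context, per \cref{fig:ell-rules}) $\oc\delta_0,\oc\delta_1 : \oc E$ to obtain a proof of $\oc E$, then cut this proof into the functorial promotion of $\mathrm{test}$ — which has conclusion $\oc E\multimap\oc\Bool$ — to obtain a proof $\rho'$ of $\Str[\tau_Q]\multimap\oc\Bool$. An $\exists$-rule instantiating $X:=\tau_Q$ turns $\rho'$ into a proof $\rho$ of $\Str\multimap\oc\Bool$ (recall $\Str^\perp = \exists X.\,\Str[X]^\perp$), and one last functorial promotion yields $\pi : \oc\Str\multimap\oc\oc\Bool$. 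The point is that every piece of ``plumbing'' — applying a boxed function to a boxed argument with a boxed result — goes through a functorial promotion followed by cuts, which is exactly what \elltwo permits; no dereliction is ever used, and the output type comes out as precisely $\oc\oc\Bool$.

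The last step is to verify that $\Lang(\pi) = L$ by tracing cut-elimination on $\pi(\overline{w})$ for $\overline{w} : \oc\Str$ the Church encoding of $w = b_1\cdots b_n$, using the key reductions of \cref{fig:ell2-keycutelim}. The box/box cut between $\pi$ and $\overline{w}$ reduces to a box around the cut of $\rho$ with the body of $\overline{w}$; the ensuing $\forall/\exists$ cut instantiates the Church string at $\tau_Q$; and the remaining reductions are the standard computation of a Church encoding: $\overline{w}$ at $\tau_Q$ applied to $\oc\delta_0$ and $\oc\delta_1$ reduces to $\oc$ of the iterated composite of the $\delta_{b_i}$, and by induction on $\card{w}$ — using that $\delta_b$ cut with the proof coding a state $q$ reduces to the proof coding $\delta(b,q)$ — this composite, run from $q_0$, reaches the proof coding the state $\mathcal{A}$ ends in. Finally $\mathrm{test}$ turns that state into $\mathtt{true}$ exactly when it is accepting, so $\pi(\overline{w})\longrightarrow^*\oc\oc\mathtt{true}$ iff $\mathcal{A}$ accepts $w$ (read in the order induced by the encoding), i.e.\ iff $w\in L$. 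I expect the main obstacle to be bookkeeping rather than conceptual: lining up the \elltwo depth discipline so that $\pi$ really has conclusion $\oc\Str\multimap\oc\oc\Bool$, and making the ``the Church encoding computes the run of $\mathcal{A}$'' step fully rigorous — in particular pinning down the orientation of the composite, and hence whether one should start from a DFA for $L$ or for $L^{\mathrm{rev}}$.
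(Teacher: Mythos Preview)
Your proposal is correct and follows essentially the same approach as the paper: instantiate the Church string at $\tau_Q = 1 \oplus \ldots \oplus 1$ and pass it the \mall-definable transition maps $\delta(0,-),\delta(1,-)$. You supply considerably more detail than the paper's two-line sketch --- the explicit \elltwo box discipline, the final $\mathrm{test}$ step, and the orientation/reversal caveat --- but the underlying construction is identical.
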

\begin{proof}[Proof sketch]
  To encode a deterministic finite automaton with set of states $Q$ and
  transition function, simply instantiate the input string at the type $1 \oplus
  \ldots \oplus 1$ ($\Card(Q)$ times), and give it as arguments the linear
  functions representing $\delta(0,-)$ and $\delta(1,-)$.
\end{proof}

\subsection{Proof of soundness by reduction to \cref{lem:semeval}}

In this section, we fix $\pi : \oc\Str \multimap \oc\oc\Bool$ in \elltwo, and
prove the converse of the above. The first step is to understand the shape of
$\pi$.

\begin{lemma}\label{lem:predicates}
Up to commutations, $\pi$ is of the form
\vspace{-0.2cm}
\begin{prooftree}
\AxiomC{{$\hat{\pi}$}}
\UnaryInfC{$\vdash\Str[A_1]^\bot,\dots,\Str[A_n]^\bot,\oc\Bool$}
\doubleLine
\UnaryInfC{$\vdash\Str^\bot,\dots,\Str^\bot,\oc\Bool$}
\RightLabel{\scriptsize{$\oc$}}
\UnaryInfC{$\vdash\wn\Str^\bot,\dots,\wn\Str^\bot,\oc\oc\Bool$}
\doubleLine
\UnaryInfC{$\vdash\wn\Str^\bot,\oc\oc\Bool$}
\RightLabel{\scriptsize{$\parr$}}
\UnaryInfC{$\vdash\oc\Str\multimap\oc\oc\Bool$}
\end{prooftree}
\end{lemma}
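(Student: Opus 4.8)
The plan is to bring $\pi$ into the displayed form by normalising it and then reordering its rules, reading the cut-free proof tree from its conclusion upwards. Since \elltwo enjoys cut-elimination (the critical reductions are those of \cref{fig:ell2-keycutelim}, termination being guaranteed by the restricted exponential rules of \cref{fig:ell-rules}), I would first replace $\pi$ by a cut-free proof of the same sequent, moreover with atomic axioms by the usual $\eta$-expansion; this leaves $\Lang(\pi)$ unchanged. The conclusion $\oc\Str \multimap \oc\oc\Bool$ unfolds to $\wn\Str^\perp \parr \oc\oc\Bool$, and since $\parr$ is invertible, up to commutations the bottommost rule introduces this $\parr$, leaving a cut-free proof of $\vdash \wn\Str^\perp, \oc\oc\Bool$.

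The crux is then a connective-counting argument. Trace the occurrence $\oc\oc\Bool$ upwards to the unique rule that makes it principal; having outermost connective $\oc$, it can only be introduced by a functorial promotion, the sole $\oc$-introduction of \elltwo. In every sequent strictly between $\vdash \wn\Str^\perp, \oc\oc\Bool$ and that promotion, the only top-level connectives present are this $\oc$ (threaded, not yet introduced) and $\wn$'s on copies of $\Str^\perp$; hence the only rules that can fire there are weakening and contraction on $\wn\Str^\perp$ -- in particular no second functorial promotion, there being no other $\oc$-formula around. This recovers the bottom three blocks of the figure ($\parr$, a run of weakenings and contractions, the functorial promotion), whose premise is necessarily $\vdash \Str^\perp, \ldots, \Str^\perp, \oc\Bool$ with $n \geq 0$ copies of $\Str^\perp$ (the context of the promotion being exactly the $\wn\Str^\perp$'s then present; when $n = 0$ a weakening below supplies the single $\wn\Str^\perp$).

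It remains to put the subproof $\pi'$ of $\vdash \Str^\perp, \ldots, \Str^\perp, \oc\Bool$ above the promotion into the form \enquote{$\hat\pi$ followed by a block of $\exists$-rules}. Since $\Str^\perp = \exists X.\, \Str[X]^\perp$ is a compound formula with outermost connective $\exists$ that is not a $\wn$-formula, and $\pi'$ is cut-free with atomic axioms, each such occurrence can be neither weakened in, nor contracted, nor an axiom conclusion; so it is principal in some $\exists$-rule, necessarily lying at depth $0$ of $\pi'$ -- inside a functorial-promotion box the $\Str^\perp$ would acquire a $\wn$ that \elltwo, lacking dereliction, could never strip, contradicting the shape of the conclusion. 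These $\exists$-introductions can then be commuted down to the bottom of $\pi'$, and their witnesses are closed formulas $A_i$: since $\pi$ is closed and none of the rules below $\pi'$ in $\pi$ (functorial promotion, weakening, contraction, $\parr$) bind a type variable, $\pi'$ has no free type parameter, and the same depth analysis prevents the eigenvariable of any interleaved $\forall$-rule from leaking into the $A_i$. What remains on top is a proof $\hat\pi : \vdash \Str[A_1]^\perp, \ldots, \Str[A_n]^\perp, \oc\Bool$, as in the figure.

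The main obstacle I anticipate is the rule-permutation bookkeeping in this last step: commuting all the $\exists$-introductions of the $\Str^\perp$'s down past the otherwise arbitrary \malltwo and exponential rules of $\pi'$ without variable capture, and making rigorous the claim that no such $\exists$-rule ends up blocked by an interleaved $\forall$-rule whose eigenvariable its witness depends on -- i.e. that the $A_i$ genuinely come out closed. The connective-counting observation and the depth-stratification of \elltwo proofs are the load-bearing facts that keep these case analyses finite.
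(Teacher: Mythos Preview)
Your argument follows the paper's route: invert the $\parr$, observe that in the resulting sequent only structural rules on $\wn\Str^\perp$ and one functorial promotion can fire, and then commute the $\exists$-introductions of the $\Str^\perp$'s down to the bottom of the subproof above the promotion. You are more explicit than the paper about the preliminary cut-elimination and $\eta$-expansion, which the paper leaves implicit.

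The obstacle you flag at the end --- why no $\forall$-rule blocks the downward commutation of an $\exists$-introduction --- is the one place where the paper supplies a concrete observation that your \enquote{same depth analysis} only gestures at. The point is that in $\Str[X]^\perp$ every occurrence of $X$ lies under an exponential modality; hence any $\forall$ connective appearing anywhere in the subproof is a subformula of some witness $A_i$ and therefore sits under a $\oc$ or $\wn$ in $\Str[A_i]^\perp$, forcing the $\forall$-rule that introduces it to lie above some functorial promotion inside $\pi'$, i.e.\ at depth $\geq 1$. Since you have already placed the $\exists$-introductions for the $\Str^\perp$'s at depth $0$, there is simply no $\forall$-rule interleaved between them and the bottom of $\pi'$, and the commutation is unobstructed. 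Your additional claim that the resulting $A_i$ are closed is not part of this lemma; the paper obtains closed \malltwo witnesses only in the next lemma, by truncation rather than by this commutation argument.
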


\begin{proof}
   A proof of $\vdash\wn\Str^\bot,\dots,\wn\Str^\bot,\oc\oc\Bool$ 
 necessarily ends either with a structural rule or a promotion. 
 From this and the invertibility of $\parr$, one obtains that the
 proof, up to commutation, ends with the following sequence
 of rules:
 \begin{prooftree}
 \AxiomC{$\vdash\Str^\bot,\dots,\Str^\bot,\oc\Bool$}
 \RightLabel{\scriptsize{$\oc$}}
 \UnaryInfC{$\vdash\wn\Str^\bot,\dots,\wn\Str^\bot,\oc\oc\Bool$}
 \doubleLine
 \UnaryInfC{$\vdash\wn\Str^\bot,\oc\oc\Bool$}
 \RightLabel{\scriptsize{$\parr$}}
 \UnaryInfC{$\vdash\oc\Str\multimap\oc\oc\Bool$}
 \end{prooftree}
 Now, recall that $\Str^\bot=\exists X. \Str[X]^\bot$. Moreover, 
 notice the introduction rule for $\exists$ commutes with all other
 rules except promotion and the introduction of $\forall$. We only
 need to show that the introduction rule for all $\exists$ connectives
 of the occurrences of $\Str^\bot$ are not followed by a promotion 
 or a $\forall$ introduction. Ruling out promotion is easy, as all 
 formulas in the conclusion sequent of a promotion rule have 
 exponential as principal connectives. Moreover, it is possible to 
 rule out $\forall$ introductions as folllws. If a $\forall$ introduction
 rule appears in the proof, the $\forall$ connective it introduces 
 is part of an existential witness. But all variables existentially
 quantified appear under the scope of an exponential connective,
 and therefore can only precede a promotion rule.
\end{proof}

Now, a crucial observation is that \emph{the $A_i$ in the previous lemma can be
  taken in \malltwo w.l.o.g.} This is where the \emph{stratification} property
of \elltwo plays a key role.

\begin{lemma}\label{lem:predicatesMALL}
  There is a proof $\pi'$ whose witnesses $A_1, \ldots, A_n$ are closed \malltwo
  formulas, and which decides the same language as $\pi$.
\end{lemma}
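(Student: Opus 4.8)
The plan is to prove \cref{lem:predicatesMALL} by exploiting the stratification
property of \elltwo: in the proof $\hat\pi : \vdash \Str[A_1]^\bot, \ldots,
\Str[A_n]^\bot, \oc\Bool$ extracted by \cref{lem:predicates}, every exponential
box has a well-defined \emph{depth}, and the key point is that the existential
witnesses $A_i$ live at depth $0$ (they are the witnesses introduced just below
$\hat\pi$, before the outer promotion). First I would make precise the notion of
\emph{truncation at depth $k$} of an \elltwo proof: given a bound $k$ on the
depth of exponential nesting that can actually influence the outcome, one can
replace, inside each $A_i$, every subformula of the form $\oc B$ occurring at a
depth that is no longer reachable by $\oc\Bool$'s computation by a trivial
formula (e.g. $\top$ or $1$), obtaining a \malltwo formula $A_i'$. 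The claim is
that the resulting proof $\pi'$ with witnesses $A_i'$ decides the same language.

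The second step is to justify why such a truncation is harmless. Here I would
argue as in~\cite{ealreg}: in \elltwo the number of $\oc$ in the output type
($\oc\oc\Bool$, so depth $2$) bounds how deeply the normalization of
$\pi(\overline w)$ can \enquote{look inside} the data; any exponential box
nested strictly below that threshold inside an existential witness never gets
opened during cut-elimination leading to a cut-free proof of $\oc^2\Bool$.
Formally one tracks the depth of each cut and shows, by the shape of the key
reductions in \cref{fig:ell2-keycutelim}, that reductions never create cuts at
depths beyond those present initially, and that the depth of the relevant cuts
is controlled by the exponential depth of $\oc\oc\Bool$. Consequently, replacing
the too-deep subformulas of the $A_i$ by $\top$ (or $1$) changes neither the
typability of $\hat\pi$ (after the obvious relabelling of the corresponding
axiom/logical rules) nor the normal form of $\pi'(\overline w)$ as a proof of
$\oc^2\Bool$.

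The third step is bookkeeping: one must check that after truncation the $A_i'$
are genuinely \malltwo formulas. By construction every exponential in $A_i$ has
been excised, and closedness is preserved since truncation acts on subformulas
without touching free variables — and one may moreover close up any remaining
free variable by an innermost $\forall$ without affecting the interpretation
(this is the standard way one makes the witnesses closed, just as \malltwo
formulas may always be closed by prenex-free $\forall$'s). Thus $\pi'$ has the
form required by \cref{lem:predicates} with closed \malltwo witnesses, and
decides $\Lang(\pi)$.

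The main obstacle I expect is the \emph{quantitative} part of the truncation
argument: pinning down exactly which depth of exponential nesting inside the
witnesses is \enquote{visible} to a computation producing $\oc\oc\Bool$, and
proving rigorously that cut-elimination respects this bound. This requires a
careful depth analysis of the \elltwo cut-elimination steps in
\cref{fig:ell2-keycutelim} — in particular the commutation of the $\exists$
rule with everything except promotion and $\forall$ (already used in
\cref{lem:predicates}) and the fact that functorial promotion does not decrease
depth — and it is precisely the technical heart that the paper defers to
\cref{sec:appendix-regular}; everything else (relabelling rules, closing
variables, checking the final shape) is routine.
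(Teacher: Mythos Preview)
Your proposal is essentially the paper's approach: define a truncation at a fixed exponential depth, show it is compatible with \elltwo cut-elimination, and observe that the witnesses $A_i$ become \malltwo formulas after truncation. The paper carries this out by truncating the \emph{whole} proof $\pi$ at depth~$2$ (replacing every $\oc B$ at depth~$2$ by $1$ and every $\wn B$ by $\bot$, and correspondingly replacing deep promotions by the unique proof of $\vdash \bot,\ldots,\bot,1$ and deep contractions/weakenings by cuts against $\vdash \bot, 1\otimes 1$ and $\vdash 1$); since $\oc\Str$ and $\oc\oc\Bool$ are fixed by this truncation, the truncated $\pi'$ still has the right type, and the $A_i$ become their truncations at depth~$0$, i.e.\ \malltwo formulas. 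Two small points worth aligning with the paper: (i) the specific dual pair $1/\bot$ (rather than $\top$) is what makes the replacement of exponential rules by \emph{proofs} go through smoothly; (ii) truncating globally rather than only ``inside the $A_i$'' is what lets you argue directly that truncation commutes with each key cut-elimination step, instead of tracking which cuts can ``see'' the witnesses.
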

\begin{proof}
   We define the \emph{truncation at depth 2} of a formula as follows: all subformulas
   of the form $!A$ (resp.\ $?A$) at depth 2, i.e. in the scope of two other
   nested $!/?$ modalities, are replaced by $1$ (resp.\ $\bot$). Note that the
   truncation at depth 2 of $!\Str$ and $!!\Bool$ are themselves.

   This operation extends to proofs: any functorial promotion of conclusion
   $\vdash \wn B_1, \ldots, \wn B_m, \oc C$ is replaced by the only proof of
   $\vdash \bot, \ldots, \bot, 1$, while contractions and weakenings are replaced
   by cuts with $1 \vdash 1 \otimes 1$ and $\vdash 1$. Note that this truncation is
   the identity on cut-free proofs of $!!\Bool$ and $!\Str$.

   One may then check that truncation at depth 2 is compatible with cut-elimination,
   which means that one can replace $\pi$ by its truncation at depth 2 and still
   recognize the same language. Then the $A_i$ are replaced by their ``truncation
   at depth 0'' which are \malltwo formulas.
   % TODO: closed
\end{proof}

\begin{proposition}
  $\Lang(\pi)$ is regular.
\end{proposition}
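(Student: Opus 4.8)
The plan is to assemble the three lemmas above without further combinatorial work. First I would use \cref{lem:predicatesMALL} to replace $\pi$, without changing $\Lang(\pi)$, by a proof in the canonical form of \cref{lem:predicates} whose existential witnesses $A_1, \ldots, A_n$ are closed \malltwo formulas. Writing $\hat\pi$ for its topmost subproof, of conclusion $\vdash \Str[A_1]^\perp, \ldots, \Str[A_n]^\perp, \oc\Bool$, I read $\hat\pi$ as a proof of $\Str[A_1] \otimes \ldots \otimes \Str[A_n] \multimap \oc\Bool$.

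Since \cref{lem:semeval} speaks of proofs with plain output $\Bool$, not $\oc\Bool$, the next step is to bridge this discrepancy by post-composing $\hat\pi$ with a dereliction: let $\rho : \Str[A_1] \otimes \ldots \otimes \Str[A_n] \multimap \Bool$ be the cut of $\hat\pi$ against the (full linear logic) proof of $\oc\Bool \multimap \Bool$ obtained by a dereliction on an axiom. The fact that $\rho$ is no longer an \elltwo proof is irrelevant, because \cref{lem:semeval} applies to arbitrary second-order linear logic proofs; this is exactly the point of the remark preceding that lemma.

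The core of the argument is to verify that, for every $w \in \{0,1\}^*$,
\[ w \in \Lang(\pi) \iff \rho(\overline w[A_1] \otimes \ldots \otimes \overline w[A_n]) \longrightarrow^* \mathtt{true}. \]
To see this I would follow the cut-elimination of $\pi(\overline w)$: the outermost contractions of $\pi$ act on the promoted Church word $\oc\overline w$, duplicating its box into $n$ copies; these copies enter the promotion box of $\pi$, where each meets an existential introduction with witness $A_i$, so the $\exists/\forall$ key step substitutes $X := A_i$ and leaves $\overline w[A_i] : \Str[A_i]$ cut against the $i$-th premise of $\hat\pi$. Up to commutations, $\pi(\overline w)$ thus reduces to $\oc$ of the box holding $\hat\pi(\overline w[A_1] \otimes \ldots \otimes \overline w[A_n])$, a proof of $\oc\Bool$. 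Since cut-elimination is confluent and the only cut-free proofs of the closed type $\oc\Bool$ are $\oc\mathtt{true}$ and $\oc\mathtt{false}$, reaching the target $\oc\oc\mathtt{true}$ in the definition of $\Lang(\pi)$ is equivalent to $\hat\pi(\overline w[A_1] \otimes \ldots)$ normalizing to $\oc\mathtt{true}$, which by the inserted dereliction is in turn equivalent to $\rho(\overline w[A_1] \otimes \ldots)$ normalizing to $\mathtt{true}$. Applying \cref{lem:semeval} to $\rho$ — whose instantiating types are closed \malltwo formulas — then shows the right-hand side above is a regular language, and we are done.

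The step I expect to require the most care is this cut-elimination analysis: making precise how contraction interacts with the nested promotion boxes involved and how the $n$ duplicated copies of the Church word each pick up their own, a priori distinct, witness $A_i$. The mismatch between the output type $\oc\oc\Bool$ built into the definition of $\Lang(\pi)$ and the plain $\Bool$ demanded by \cref{lem:semeval} is the other thing to handle, but the dereliction composition disposes of it cleanly.
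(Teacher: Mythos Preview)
Your proposal is correct and follows the same route as the paper: reduce to \cref{lem:semeval} via \cref{lem:predicates} and \cref{lem:predicatesMALL}, then argue by inspection of cut-elimination that the language defined there coincides with $\Lang(\pi)$. The paper's proof is terse to the point of glossing over the $\Bool$ versus $\oc\Bool$ mismatch you noticed (it applies \cref{lem:semeval} directly to $\hat\pi$ and writes ``$\longrightarrow^* \mathtt{true}$'' for a proof of type $\oc\Bool$); your dereliction fix is a clean way to make this precise, and your outline of how contraction duplicates the promoted Church word into $n$ copies, each meeting its own $\exists$-witness $A_i$, is exactly the ``examination of the cut-elimination process'' the paper leaves implicit.
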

\begin{proof}
  By \cref{lem:semeval}, the language
  \[ \{ w \in \{0,1\}^* \mid \hat\pi(\overline{w}[A_i] \otimes \ldots \otimes
    \overline{w}[A_n]) \longrightarrow^* \mathtt{true} \} \]
  is regular. A examination of the cut-elimination process reveals
  that this language is none other than $\Lang(\pi)$.
\end{proof}
  
\end{document}